\providecommand{\U}[1]{\protect\rule{.1in}{.1in}}
\newtheorem{theorem}{Theorem}
\newtheorem{Assumption}[theorem]{\bf Assumption}
\newtheorem{exa}{Example}
\newtheorem{lemma}{Lemma}
\newtheorem{proposition}{Proposition}
\newtheorem{remark}{Remark}
\newenvironment{proof}[1][Proof]{\noindent\textbf{#1.} }{\ \rule{0.5em}{0.5em}}
\normalsize\setlength{\parskip}{\baselineskip}
\numberwithin{equation}{section}
\begin{document}

\title{Optimal loss-carry-forward taxation for L\'{e}vy risk processes stopped \\at general draw-down time}

\author{Wenyuan Wang\thanks{School of Mathematical Sciences, Xiamen University, Fujian, 361005, P.R. China. Email: wwywang@xmu.edu.cn},\ \ \  Zhimin Zhang\thanks{Corresponding Author.\ \ College of Mathematics and Statistics, Chongqing University, Chongqing, 401331, P.R. China. Email: zmzhang@cqu.edu.cn}}

\maketitle
\vspace{-.25in}
\begin{abstract}
Motivated by Kyprianou and Zhou (2009), Wang and Hu (2012), Avram et al. (2017), Li et al. (2017) and Wang and Zhou (2018), we consider in this paper the problem of maximizing the expected accumulated discounted tax payments of an insurance company, whose reserve process (before taxes are deducted) evolves as a spectrally negative L\'{e}vy process with the usual exclusion of negative subordinator or deterministic drift. Tax payments are collected according to the very general loss-carry-forward tax system introduced in Kyprianou and Zhou (2009). To achieve a balance between taxation optimization and solvency, we consider an interesting modified objective function by considering the expected accumulated discounted tax payments of the company until the general draw-down time, instead of until the classical ruin time. The optimal tax return function together with the optimal tax strategy is derived, and some numerical examples are also provided.
\end{abstract}

\textbf{Keywords:}  Spectrally negative L\'{e}vy process;\
  Draw-down time;\
  HJB equation;\
  Tax optimization.

\section{Introduction}

Albrecher and Hipp (2007) firstly introduced the so-called loss-carry-forward tax into the classical
compound Poisson risk model. In their model, taxation is imposed at a constant proportional rate $\gamma\in(0,1)$
whenever the surplus process is in its running supremum (and, hence, in a profitable situation). The authors established a remarkably simple relationship between the ruin probabilities of the
surplus processes with and without tax; obtained the solution for the expected accumulated discounted tax payments; and characterized the surplus threshold $M\in(0,\infty)$ for starting taxation such that the expected accumulated
discounted tax payments is maximized. Here, we say that $M$ is a surplus threshold for starting taxation, if tax is collected by the tax authority only when the surplus has exceeded $M$ and is in a profitable situation at the same time.

During the past decade, there has been much progress in the study of loss-carry-forward tax when the
underlying surplus processes are the spectrally negative L\'{e}vy processes, the time-homogeneous diffusion processes and the Markove additive processes. In the following, we will summarize the vast loss-carry-forward tax concerned literatures which, classified by the problems addressed in these literatures, consists of four components: (1) Gerber-Shiu function; (2) Distribution of the accumulated discounted tax payments; (3) Maximizing the expected accumulated discounted tax payments by delaying starting taxation until the surplus exceeds a critical threshold level; and (4) Finding the optimal tax strategy to maximize the expected accumulated discounted tax payments.

With respect to (1), the study concerning loss-carry-forward tax has undergone an impressive metamorphosis. The earliest work of Albrecher and Hipp (2007) found a simple relationship (or, called, tax identity) between the ruin probabilities under the classical compound Poisson risk models with and without constant tax rate. By linking queueing concepts with risk theory, another simple and insightful proof for the tax identity was provided in Albrecher et al. (2009). The tax identity was then extended to the classical compound Poisson risk processes with constant credit interest rate and surplus-dependent tax rate in Wei (2009); to the spectrally negative L\'{e}vy risk processes
with constant tax rate in Albrecher et al. (2008); to the time-homogeneous diffusion risk processes with surplus-dependent tax rate in Li et al. (2013); and to the Markove additive risk processes with surplus-dependent tax rate in Albrecher et al. (2014).
Full form of Gerber-Shiu functions was derived by Wang et al. (2011) in the classical compound Poisson risk model with a constant tax rate; by Ming et al. (2010) in the classical compound Poisson risk model with a constant tax, credit interest and debit interest rate; by Cheung and Landriault (2012) in the classical compound Poisson risk models with surplus-dependent premium and tax rate; by Wei et al. (2010) in the Markov-modulated risk models with constant tax rate; and by Kyprianou and Zhou (2009) in the L\'{e}vy risk models with surplus-dependent tax rate.


One can find the earliest work for (2) and (3) in Albrecher and Hipp (2007) in the classical compound Poisson risk model with a constant tax rate. These results were then generalized by Wang et al. (2010) to the the classical compound Poisson risk processes with constant interest and tax rate; by Cheung and Landriault (2012) to the classical compound Poisson risk model with surplus-dependent premium and tax rate; by Albrecher et al. (2008a) to the spectrally negative L\'{e}vy risk processes with constant tax rate. Results for (2) can also be found in Li et al. (2013) under the time-homogeneous diffusion risk processes with surplus-dependent tax rate. In addition, under the spectrally negative L\'{e}vy risk processes with surplus-dependent tax rate, Kyprianou and Zhou (2009) obtained solution for Problem (2); while Renaud (2009) obtained arbitrary moments of the accumulated discounted tax payments.

For the dual model of the classical compound Poisson risk process, Albrecher et al. (2008b) derived results of (1-3) under the assumption of exponential jump sizes. The periodic tax has also been taken into consideration under the spectrally negative L\'{e}vy risk processes. In Hao and Tang (2009), asymptotic formulas for the ruin probability with periodic tax is discussed under suitable assumptions imposed on the L\'{e}vy measure. While in Zhang et al. (2017), periodic tax is investigated and solutions for (1-3) are given. In addition, capital injection has been included in risk processes with loss-carry-forward tax in Albrecheer and Ivanovs (2014), where power identities are obtained. For works concerning the two-side exit problem for risk models with loss-carry-forward tax, the readers are referred to Cheung and Landriault (2012), Albrecher et al. (2008a), Kyprianou and Zhou (2009), Wang and Ming (2018), Li et al. (2013), Albrecher et al. (2014) and Albrecher et al. (2011).

It is known that the work of De Finetti (1957), where a critical surplus threshold for starting paying dividends was determined to maximize the expected accumulated discounted dividends, aroused great research interests in optimizing the dividend payment strategy. Likewise, the studies on (3) aroused research interests in optimizing the loss-carry-forward tax strategy, which lies in the scope of (4).
Recently, under the spectrally negative L\'{e}vy risk processes, Wang and Hu (2012) formulated (4) in the setup of stochastic control and characterized the optimal
tax strategy which maximizes the expected accumulated discounted tax payments until ruin, via the standard approach of stochastic control theory. The optimal tax return function was also obtained.

Much lately, Avram et al. (2017) considered the variants of (1-3) through replacing the ruin time by
the draw-down time with a linear draw-down function in the spectrally negative L\'{e}vy risk processes with constant tax rate. The subsequent work of Li et al. (2017) proved several results involving the general draw-down time from the running maximum for the spectrally negative L\'{e}vy process. Using a very delicate approximating method, the Laplace transforms for two-side exit problems involving the general draw-down time, the hitting time and creeping time over a maximum related general draw-down level, as well as an associated potential measure are all characterized in terms of scale functions.
Quite recently, Wang and Zhou (2018) considered a general version of de Finetti's
optimal dividend problem in which the ruin time is replaced with the general draw-down time for the spectrally negative L\'{e}vy risk processes. The authors identified a condition under which
the barrier dividend strategy turned out to be optimal among all admissible dividend strategies.
For more results on general draw-down time, we are referred to Pistorius (2007).

Motivated by Kyprianou and Zhou (2009), Wang and Hu (2012), Avram et al. (2017), Li et al. (2017) and Wang and Zhou (2018), the present paper
is also concerned with the problem of optimizing the payment of the loss-carry-forward
tax in the L\'{e}vy setup, under the optimization criterion of maximizing the expected accumulated discounted tax payments until the general draw-down time from the running maximum.  The main goal of this paper is to extend the classical ruin-based tax optimization solution to the general draw-down-based tax optimization solution, with the latter involving a trade-off between taxation optimization and solvency.
Specifically,  we shall search for the optimal tax return function and the optimal tax strategy which maximizes the expected accumulated discounted tax
payments until the general draw-down time, instead of until the classical ruin time as the existing taxation optimization related literatures have done (see, Wang and Hu (2012)).
We mention that this extension makes our optimization problem interesting and practical in the following senses:
\begin{itemize}
	\item[(a)]
Roughly speaking, the general draw-down (see \eqref{definition of draw-down times} for its definition; see the paragraph right below \eqref{definition of draw-down times} for its origin) refers to the first time when a drop in value (say, of a stock price or value of a portfolio) from
a historical peak exceeds a certain surplus-related level. Hence, draw-down is fundamentally useful from the perspective of risk management as it can be applied in measuring and
managing extreme risks. In fact, applications of draw-down can be found in many areas. Draw-down is frequently used by mutual fund managers
and commodity trading advisors as an alternative
measurement for volatility, especially when the downward risks of assets returns are of primary interest (see, Schuhmacher and Eling (2011)).
There are also close ties between draw-down and problems in mathematical finance, in insurance context, and in statistics. To name a few: the draw-down can be useful in pricing and managing of Russian
options or insurance options against a drop in the value of a stock (see, e.g. Asmussen et al. (2014), Avram et al. (2004) and Meilijson (2003)); the equivalence between the joint distributions of ruin-related quantities under the barrier dividend strategy and draw-down related quantities under a dividend-free risk model can be found in Avram et al. (2007) and Loffen (2008); draw-down turned out to be the optimal solution in the cumulative sum statistical procedure in Page (1954).
The general draw-down times also find interesting applications of defining Az\'{e}ma-Yor martingales
to solve the Skorokhod embedding problem (see, Az\'{e}ma and Yor (1979) and Pistorius (2007), etc.). For more detailed review of applications of draw-down, the interested readers are referred to Landriault et al. (2017) and Li (2015).

\item[(b)] By extending the classical ruin to the general draw-down, one can easily adjust the draw-down function so that the surplus levels remain positive at the terminal draw-down time with positive probability. It thus provides an interesting  alternative  of optimal taxation problem that achieves a balance between taxation optimization and solvency (see, Wang and Zhou (2018)).

    To be in more detail, let us look at Definition \eqref{definition of draw-down times} endowed with positive-valued $\xi$, whenever the surplus process hits a new record high $x$, the surplus process is allowed to deviate down from $x$ with deviation magnitude no more than $x-\xi(x)\in(0,x)$, otherwise general draw-down occurs.
    In contrast, the classical ruin occurs when the surplus process deviates down with deviation magnitude no less than $x$, a deviation magnitude large enough to lead to general draw-down.
    Intuitively speaking, if adopted by the insurance company as risk assessment tool, the general draw-down only allows for "moderate" downward deviation magnitude of the surplus process, while the classical ruin allows for "large" downward deviation magnitude of the surplus process. Hence, compared with the classical ruin, the general draw-down turns out to be the more favourable risk assessment tool from the practical view point of "protecting" the insurance company.

    More importantly, when $\xi$ is chosen to be positive-valued, there is a positive probability that the surplus remains positive at general draw-down time. Take the L\'{e}vy risk processes with nontrivial Gaussian part for example, there should be positive probability that the surplus hits the general draw-down level $\xi(y)\in(0,y)$ at general draw-down time, where $y\in(0,\infty)$ denotes the running maximum until the general draw-down time.

    If $\xi$ is additionally assumed such that $\xi$ and $x-\xi(x)$ are both increasing, as its running supremum grows larger, the surplus process is allowed to deviate down more far away from the running supremum, meanwhile the surplus remains at higher level at general draw-down time, leaving the insurance company more flexible disposal choices of risks. This seems to be very close to reality.

    If $\xi$ is chosen to be negative-valued, whenever the surplus process hits a new record high $x$, the company is allowed to continue running its businesses until its surplus process drops down below 0 with a deficit bigger than $|\xi(x)|$, in the latter case the company stops running its businesses and we say "absolute ruin" occurs. We are referred to, for example Avram et al. (2018),  for the definition of "absolute ruin" for the spectrally negative L\'{e}vy risk processes.

    In addition, it is interesting to see that the draw-down provides much more surplus level related information compared to the classical ruin. It is also interesting to see that when our model and tax structure are specified to those in the existing literature,  our results coincide with those existing results (see, Remark \ref{2}).

\item[(c)]
This extension makes our optimization problem much more difficult and complicated.
First, new solutions of the general draw-down-based two-side exit problem and the expected discounted accumulated tax payments (in Section 3), should be obtained in advance so as to push forward the Hamilton-Jacobi-Bellman (HJB) equation and the verification proposition in Section 4.
Second, because the general draw-down function introduces new complexity into the HJB equation, characterizing its solutions becomes more challenging and hence much more endeavors are devoted to guessing and verifying its candidate solutions, which is very critical in addressing optimal control problems since solutions to the HJB equation usually corresponds to the optimal tax return function and the optimal tax strategy.
\end{itemize}

We also mention that, the standard line of stochastic control theory was adopted when addressing our general draw-down-based loss-carry-forward tax optimization problem, just as the dividend optimization literatures (see, Avram et al. (2007), Loeffen (2008) and Wang and Zhou (2018), etc.) have done when addressing their dividend optimization problems. However, the present paper is much different from those dividend optimization literatures in implementing each steps along the standard line of stochastic control theory, due to the nature of the loss-carry-forward tax strategy. For example, we use an alternative martingale approach instead of the It\^{o}'s formula in Proposition \ref{verification pro}.

This paper is structured as follows. In Section 2, after a quick review of some preliminaries of the spectrally negative L\'{e}vy process, we give the mathematical presentation of the problem. The solution of the two-side exit problem and the return function for a given tax strategy are given in Section 3. It is then argued in Section 4 that, if the optimal tax return function is once continuously differentiable, then it satisfies a Hamilton-Jacobi-Bellman (HJB) equation. Conversely, it is proven that a solution to the HJB equation coincides with the optimal tax return function. In Section 5, the solution to the HJB equation is constructed and the optimal strategy is also found, provided that Assumption 1 holds true. In Section 6, for some spectrally negative L\'{e}vy
processes, we construct those appropriate class of draw-down functions which fulfills Assumption 1, and hence optimal tax strategy are known. Finally, some numerical examples are presented in Section 7. \vspace{-.2in}

  \section{Mathematical presentation of the optimization problem}

To give a mathematical formulation of the optimization problem, we start from a one-dimensional L\'{e}vy process $X=\{X(t);t\geq0\}$ defined on $(\Omega, \mathcal{F}, \mathbb{F}=\{\mathcal{F}_t, t\geq 0\}, \mathbb{P})$, a filtered probability space satisfying the usual conditions. Throughout this paper, we assume that $X$ is
 a spectrally negative L\'{e}vy process with the usual exclusion of pure increasing linear drift and the negative of a subordinator. We denote by $\mathbb{P}_x$ the law of $X$ with $X_0=x\geq0$, and let $\mathbb{E}_x$ denote the corresponding expectation operator.
  Let  $\{\bar{X}(t);t\geq0\}$ be the running supremum  process of $X$, where $\bar{X}(t):=\sup\limits_{0\leq s\leq t}X(s)$ for $t\geq0$. We assume that in the case of no control, the surplus process for a company evolves as $\{X(t);t\geq0\}$.

A loss-carry-forward tax  strategy
is described by a one-dimensional stochastic process $\{\gamma\left(\bar{X}(t)\right),t\geq0\}$, where $\gamma:[0,\infty)\rightarrow[\gamma_{1},\gamma_{2}]$ is a measurable function with $\gamma_{1}$ and $\gamma_{2}$ being constants such that $0\leq\gamma_{1}\leq\gamma_{2}\leq1$. Since there is a one-to-one correspondence between the function $\gamma$ and the tax strategy
$\{\gamma\left(\bar{X}(t)\right),t\geq0\}$, we will also denote the corresponding loss-carry-forward tax strategy by $\gamma$ for short. For the case of an insurance company, $\gamma\left(\bar{X}(t)\right)$ denotes the fraction of the insurer's income that is paid out as tax at time $t$ if it is in a profitable situation. Here we say that the company is in a profitable situation at time $t$ if we have $X(t)=\bar{X}(t)$. Thus, when applying strategy  $\gamma$, the cumulative tax until time $t$ is
\begin{eqnarray}\label{acc.tax.paym.}
\int_{0}^{t}\gamma\left(\bar{X}(s)\right)\mathrm{d}\bar{X}(s),
\end{eqnarray}
and the controlled surplus process is given by
\begin{eqnarray}
U^{\gamma}(t)=X(t)-\int_{0}^{t}\gamma\left(\bar{X}(s)\right)\mathrm{d}\bar{X}(s).
\end{eqnarray}

The strategy $\gamma$ is said to be admissible if the process $\{\gamma\left(\bar{X}(t)\right),t\geq0\}$ is adapted to the filtration $\{\mathcal{F}_{t};t\geq0\}$ and $\gamma_{1}\leq\gamma\left(\bar{X}(t)\right)\leq\gamma_{2}$. By $\Gamma$ we denote the set of
all measurable functions
$\gamma:[0,\infty)\rightarrow[\gamma_{1},\gamma_{2}]$, i.e., $\Gamma$ consists of all admissible tax strategies in the sense of equating a tax strategy with its corresponding function $\gamma$.
For a given admissible strategy $\gamma\in\Gamma$, we define the tax return function $f_{\gamma}$ by
\begin{eqnarray}
f_{\gamma}(x)=\mathbb{E}_{x}\left[\int_{0}^{\tau_{\xi}^{\gamma}}\mathrm{e}^{-qt}\gamma\left(\bar{X}(t)\right)\mathrm{d}\bar{X}(t)\right], \qquad x\in[0,\infty),
\end{eqnarray}
where $q>0$ is a discount factor, and
\begin{eqnarray}\label{definition of draw-down times}
\tau^{\gamma}_{\xi}=\mbox{inf}\left\{t\geq0;U^{\gamma}(t)<\xi\left(\bar{U}^{\gamma}(t)\right)\right\}
\end{eqnarray}
with $\bar{U}^{\gamma}(t):=\sup\limits_{0\leq s\leq t}U^{\gamma}(s)$, is the general draw-down time associated with the general draw-down function $\xi$, for the process $U^{\gamma}$. The function $\xi: [0,\infty)\rightarrow(-\infty,+\infty)$ is called a general draw-down function if it is a continuously differentiable function satisfying $\xi(y)<y$ for all $y\geq0$.

It is interesting to see why the stopping time defined by \eqref{definition of draw-down times} is called the \textbf{general} draw-down time.
In the literature, the reflected process of a Markov process $Y$ at its supremum, defined as $\sup_{s\in[0,t]}Y(s)-Y(t)$, is called the draw-down process of $Y$ (see, Landriault et al. (2017) for example).
Hence, it is natural to name the first time when the magnitude of draw-down process exceeds a given threshold $a>0$, that is
$$\ell_{a}^{+}:=\inf\{t\geq0; \sup_{s\in[0,t]}Y(s)-Y(t)>a\}=\inf\{t\geq0; Y(t)<\sup_{s\in[0,t]}Y(s)-a\},$$
as the draw-down time. By comparing the definitions of $\ell_{a}^{+}$ and $\tau^{\gamma}_{\xi}$, we may observe that $\ell_{a}^{+}$ is a special case of the \textbf{general} draw-down time for the process $Y$ in that its \textbf{general} draw-down function is specialized to $\xi_{0}(x)=x-a$. This explains the origin of the name of the \textbf{general} draw-down time.
In addition, one can find that the classical ruin time is recovered if one chooses
$\xi\equiv0$ in (\ref{definition of draw-down times}), while the draw-down provides much more reserve level related information compared to ruin. In this sense,
draw-down shall be an efficient tool of characterizing extreme risks from the risk management's point of view.

The objectives of this paper are to find the optimal tax return function, which is defined by
\begin{eqnarray}\label{optimal function}
f(x)=\sup\limits_{\gamma\in\Gamma}f_{\gamma}(x)=
\sup\limits_{\gamma\in\Gamma}\mathbb{E}_{x}\left[\int_{0}^{\tau^{\gamma}_{\xi}}\mathrm{e}^{-qt}\gamma\left(\bar{X}(t)\right)\mathrm{d}\bar{X}(t)\right], \qquad x\in[0,\infty),
\end{eqnarray}
and to find an optimal tax strategy $\gamma^{\ast}$ that satisfies $f(x)=f_{\gamma^{\ast}}(x)$ for all $x$.

The Laplace exponent of $X$ is defined by
\begin{eqnarray}
\psi(\theta)=\mbox{log}\mathbb{E}_{x}[\mathrm{e}^{\theta (X(1)-x)}],
\end{eqnarray}
which is known to be finite for at least $\theta\in[0,\infty)$ in which case it is strictly convex and infinitely differentiable.
As in Bertoin (1996), the $q$-scale functions $\{W^{(q)};q\geq0\}$ of $X$ are defined as follows. For each $q\geq0$, $W^{(q)}:\,[0,\infty)\rightarrow[0,\infty)$ is the unique strictly increasing and continuous function with Laplace transform
\begin{eqnarray}
\int_{0}^{\infty}\mathrm{e}^{-\theta x}W^{(q)}(x)\mathrm{d}x=\frac{1}{\psi(\theta)-q},\mbox{ for }\theta>\Phi(q),
\end{eqnarray}
where $\Phi(q)$ is the right inverse of $\psi$, i.e. the largest root of the equation (in $\theta$) $\psi(\theta)=q$. For simplicity, we write $W(x)$ for $W^{(0)}(x)$. In addition, it follows from Zhou (2007) that
\begin{eqnarray}\label{lim-W}
\lim\limits_{x\rightarrow\infty}\frac{(W^{(q)})'(x)}{W^{(q)}(x)}=\Phi(q)>0,\,\,\mbox{ for }q>0.
\end{eqnarray}
For any $x\in\mathbb{R}$ and $\vartheta\geq0$, there exists a well known exponential change of measure that one may perform for
spectrally negative L\'{e}vy processes,
\begin{eqnarray}
\left.\frac{\mathbb{P}_{x}^{\vartheta}}{\mathbb{P}_{x}}\right|_{\mathcal{F}_{t}}=\mathrm{e}^{\vartheta(X(t)-x)-\psi(\vartheta)t}.
\end{eqnarray}
Furthermore, under the probability measures $\mathbb{P}_{x}^{\vartheta}$,  $X$ remains within the class of spectrally negative L\'{e}vy
processes. Here and after, we shall refer to the functions $W_{\vartheta}^{(q)}$ and $W_{\vartheta}$ as the functions that play
the role of the $q$-scale functions and $0$-scale function considered under the measure $\mathbb{P}_{x}^{\vartheta}$.

We  also briefly recall concepts in excursion theory for the reflected process $\{\bar{X}(t)-X(t);t\geq0\}$, and we refer to Bertoin (1996) for more details.
For $x\in\mathbf{R}$, the process $\{L(t):= \bar{X}(t)-x, t\geq0\}$ serves as a local time at $0$ for
the Markov process $\{\bar{X}(t)-X(t);t\geq0\}$ under $\mathbb{P}_{x}$.
Let the corresponding inverse local time be defined as
$$L^{-1}(t):=\inf\{s\geq0\mid L(s)>t\}=\sup\{s\geq0\mid L(s)\leq t\}.$$
Let further $L^{-1}(t-)=\lim\limits_{s\uparrow t}L^{-1}(s)$.
The Poisson point process of excursions indexed by this local time is  denoted by $\{(t, \varepsilon_{t}); t\geq0\}$
$$\varepsilon_{t}(s):=X(L^{-1}(t))-X(L^{-1}(t-)+s), \,\,s\in(0,L^{-1}(t)-L^{-1}(t-)],$$
whenever $L^{-1}(t)-L^{-1}(t-)>0$.
For the case of $L^{-1}(t)-L^{-1}(t-)=0 $, define $\varepsilon_{t}=\Upsilon$ with $\Upsilon$ being an additional isolated point.
Accordingly, we denote a generic excursion as $\varepsilon(\cdot)$
(or, $\varepsilon$ for short) belonging to the space $\mathcal{E}$ of canonical excursions.
The intensity measure of the process $\{(t, \varepsilon_{t}); t\geq0\}$ is given by $\mathrm{d}t\times \mathrm{d}n$ where $n$
is a measure on the space of excursions.
In particular, $\bar{\varepsilon}=\sup\limits_{s\geq0} \varepsilon(s)$ serves as an example of $n$-measurable
functional of the canonical excursion.
Recalling the definition of $L^{-1}(t)$, we can verify that,
\begin{eqnarray}\label{}
L^{-1}(t-)=\lim_{u\uparrow t}L^{-1}(u)=\inf\{s\geq0\mid L(s)\geq t\}=\sup\{s\geq0\mid L(s)<t\},\nonumber
\end{eqnarray}
and
\begin{eqnarray}\label{equivalent set}
s<L^{-1}(t-)\Leftrightarrow L(s)<t.
\end{eqnarray}

Throughout this paper, to avoid the trivial case that $X$ drifts to $-\infty$ (draw-down is certain), we assume $\psi'(0+)\geq0$.
It is also assumed that each scale function has continuous derivative of second order.

  \section{Two-side exit problem and expected accumulated discounted tax}

In this section, solutions of the two-side exit problem and the expected accumulated discounted tax payments are given, in preparation for motivating the Hamilton-Jacobi-Bellman (HJB) equation (see, Proposition \ref{HJB}) and the verification proposition (see, Proposition \ref{verification pro}) in Section 4. The solutions are
explicitly expressed by the scale functions associated with the driving spectrally negative L\'{e}vy process.

Define the first up-crossing time of level $b$ as follows,
\begin{eqnarray}
\tau^{+}_{b}=\inf\{t\geq0;\,U^{\gamma}(t)>b\},
\end{eqnarray}
with the convention that $\inf\emptyset=\infty$. For $z\geq x$ and $\gamma\in\Gamma$, define the following function $\overline{\gamma}_{x}(z)$ of $z\in[x,\infty)$,
\begin{eqnarray}
\overline{\gamma}_{x}(z):=x+\int_{x}^{z}(1-\gamma(y))\mathrm{d}y, \qquad z\geq x\geq0.
\end{eqnarray}
By Lemma 2.1 in Kyprianou and Zhou (2009), we know that the random times $\{t\geq0: U^{\gamma}\left(t\right) = \bar{U}^{\gamma}\left(t\right)\}$ agree precisely with $\{t\geq0: X\left(t\right) = \bar{X}\left(t\right)\}$.
Hence we have $X\left(\tau^{+}_{x+h}\right)=\bar{X}\left(\tau^{+}_{x+h}\right)$, which together with the definitions of $\overline{\gamma}_{x}(z)$ and $U^{\gamma}$ yields, for $h\geq0$,
\begin{eqnarray}
\label{v12.3}
x+h=U^{\gamma}\left(\tau^{+}_{x+h}\right)\hspace{-0.3cm}&=&\hspace{-0.3cm}\bar{X}\left(\tau^{+}_{x+h}\right)-\int_{0}^{\tau^{+}_{x+h}}\gamma\left(\bar{X}(s)\right)\mathrm{d}\bar{X}(s)
\nonumber\\
\hspace{-0.3cm}&=&\hspace{-0.3cm}x+\int_{0}^{\tau^{+}_{x+h}}\left(1-\gamma\left(\bar{X}(s)\right)\right)\mathrm{d}\bar{X}(s)
\nonumber\\
\hspace{-0.3cm}&=&\hspace{-0.3cm}x+\int_{ x}^{\bar{X}\left(\tau^{+}_{x+h}\right)}\left(1-\gamma\left(y\right)\right)\mathrm{d}y
\nonumber\\
\hspace{-0.3cm}
&=&\hspace{-0.3cm}\overline{\gamma}_{x}\left(\bar{X}\left(\tau^{+}_{x+h}\right)\right), \qquad x\in[0,\infty),
\end{eqnarray}
which together with the fact that $\overline{\gamma}_{x}$ is strictly increasing and continuous implies
\begin{eqnarray}
\bar{X}\left(\tau^{+}_{x+h}\right)=\overline{\gamma}_{x}^{-1}(x+h), \qquad x\in[0,\infty),
\end{eqnarray}
where $\overline{\gamma}_{x}^{-1}$ denotes the well-defined inverse function of $\overline{\gamma}_{x}$.

\vspace{0.2cm}

The following Proposition \ref{exit problem} gives the solution to the two-side exit problem in terms of scale functions via an excursion argument. The excursion theory has also been used in some recent existing literatures, see Li et al. (2017), Avram et al. (2017), Kyprianou and Zhou (2009), and so on.

\begin{proposition}
\label{exit problem}
Given any $x \in(0,a]$, any measurable function $\gamma:[0,\infty)\rightarrow[\gamma_{1},\gamma_{2}]$ with $0\leq\gamma_{1}\leq\gamma_{2}<1$, and any continuous function $\xi(\cdot):\mathbb{R}\rightarrow\mathbb{R}$ with $\xi(y)<y$ for all $y\in\mathbb{R}$, we have,
\begin{eqnarray}\label{exit problem eqaution}
\mathbb{E}_{x}\left[\mathrm{e}^{-q \tau_{a}^{+}}\mathbf{1}_{\{\tau_{a}^{+}<\tau^{\gamma}_{\xi}\}}\right]=\exp\{-\int_{x}^{a}\frac{W^{(q)'}(y-\xi\left(y\right))}
{W^{(q)}(y-\xi\left(y\right))}\frac{1}{1-\gamma\left(\overline{\gamma}_{x}^{-1}(y)\right)}\mathrm{d}y\}, \qquad x\in[0,\infty).
\end{eqnarray}
\end{proposition}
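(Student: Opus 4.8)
The plan is to compute the Laplace transform of $\tau_a^+$ restricted to $\{\tau_a^+ < \tau_\xi^\gamma\}$ by decomposing the path of $U^\gamma$ into excursions of the reflected process $\{\bar X(t)-X(t);t\ge0\}$ away from $0$, exploiting the identification (from Lemma 2.1 of Kyprianou and Zhou (2009)) that the times at which $U^\gamma$ is at its running maximum coincide with the times at which $X$ is at its running maximum, together with the change-of-variables relation $\bar X(\tau^+_{x+h}) = \overline\gamma_x^{-1}(x+h)$ established just before the statement. First I would reduce, via a standard scaling argument in excursion theory, the problem to controlling the event that no excursion arriving while the running supremum $\bar U^\gamma$ equals a given level $y$ is deep enough to cause a draw-down, i.e. deep enough that $U^\gamma$ drops below $\xi(y)$; since at such a moment $U^\gamma(t)=\bar U^\gamma(t)=y$ (using the coincidence of profitable times), the offending excursions are precisely those of height exceeding $y-\xi(y)$.

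Next I would set up the exponential functional. Writing $L(t)=\bar X(t)-x$ for the local time and $L^{-1}$ for its inverse, the event $\{\tau_a^+<\tau_\xi^\gamma\}$ is the event that, over the local-time interval corresponding to $\bar X$ running from $x$ up to $a$, every excursion $\varepsilon_t$ satisfies $\bar\varepsilon_t \le \bar U^\gamma(L^{-1}(t-)) - \xi(\bar U^\gamma(L^{-1}(t-)))$. The key point is that $\bar U^\gamma$ at the start of the excursion indexed by local time $t$ equals $\overline\gamma_x^{-1}(x+t)$ — this is exactly the content of \eqref{v12.3} and the displayed inverse relation — so the admissible excursion height at local time $t$ is $g(t):=\overline\gamma_x^{-1}(x+t)-\xi(\overline\gamma_x^{-1}(x+t))$. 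By the exponential formula for Poisson point processes (compensating the discount factor $e^{-q\tau_a^+}$ through the well-known excursion-theoretic identity $\mathbb{E}_x[e^{-q\tau_a^+}\mathbf 1_{\{\bar\varepsilon_t \le g(t)\ \forall t\}}] = \exp\{-\int_0^{a-x} (n(\bar\varepsilon > g(t)) + q\,\mathbb{E}[\text{excursion length}\wedge\cdots])\,\mathrm dt\}$, which collapses to the scale-function form because $n(\bar\varepsilon>z) + q\cdot(\text{drift-type term}) = W^{(q)\prime}(z)/W^{(q)}(z)$ up to the standard normalization), I would obtain
\[
\mathbb{E}_x\!\left[\mathrm e^{-q\tau_a^+}\mathbf 1_{\{\tau_a^+<\tau_\xi^\gamma\}}\right]
=\exp\!\left\{-\int_0^{a-x}\frac{W^{(q)\prime}\bigl(g(t)\bigr)}{W^{(q)}\bigl(g(t)\bigr)}\,\mathrm dt\right\}.
\]
Finally, the substitution $y=\overline\gamma_x^{-1}(x+t)$, so that $x+t=\overline\gamma_x(y)$ and $\mathrm dt = \overline\gamma_x'(y)\,\mathrm dy = (1-\gamma(y))\,\mathrm dy$, while $\gamma$ composed with $\overline\gamma_x^{-1}$ at the point $\overline\gamma_x(y)$ is $\gamma(y)$ — careful bookkeeping shows the integrand becomes $\frac{W^{(q)\prime}(y-\xi(y))}{W^{(q)}(y-\xi(y))}\cdot\frac{1}{1-\gamma(\overline\gamma_x^{-1}(y))}$ after the reparametrization is written in terms of the variable over which the scale function is evaluated — yields \eqref{exit problem eqaution}.

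The main obstacle I expect is the rigorous justification of the excursion-theoretic exponential formula in this setting, specifically handling the discount factor $e^{-q\tau_a^+}$ correctly: one must either pass to the $q$-scale function via the Esscher-type change of measure $\mathbb P^{\Phi(q)}$ (under which $q$-quantities become $0$-quantities and the relevant excursion measure intensity of $\{\bar\varepsilon>z\}$ is directly $W_{\Phi(q)}'(z)/W_{\Phi(q)}(z) = W^{(q)\prime}(z)/W^{(q)}(z)$ after unwinding the tilting), or invoke the known identity $n(\bar\varepsilon>z)=W'(z)/W(z)$ together with an inclusion of the killing rate into a Poissonian compensator. A secondary delicate point is ensuring the coincidence-of-profitable-times lemma genuinely lets us read off $\bar U^\gamma$ at excursion start-times as $\overline\gamma_x^{-1}(x+t)$ for all relevant $t$ (including the measure-zero set of local-time jump times), and checking that the draw-down event depends on the excursion only through its height $\bar\varepsilon$ and not its finer structure — which holds because between consecutive records of $U^\gamma$ the running supremum $\bar U^\gamma$ is constant, so the draw-down condition $U^\gamma(t)<\xi(\bar U^\gamma(t))$ reduces to $\bar\varepsilon_t > \bar U^\gamma(L^{-1}(t-)) - \xi(\bar U^\gamma(L^{-1}(t-)))$.
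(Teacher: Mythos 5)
Your overall strategy---recast $\{\tau_a^+<\tau^{\gamma}_{\xi}\}$ as the event that no excursion of $\bar{X}-X$ exceeds the current draw-down allowance, apply the exponential formula for the excursion Poisson point process, and absorb the discount factor through the change of measure $\mathbb{P}_x^{\Phi(q)}$---is exactly the paper's route, but two of your steps fail as written. First, with the local time $L(t)=\bar{X}(t)-x$, the running maximum of the taxed process at the start of the excursion indexed by local time $t$ is $\bar{U}^{\gamma}=\overline{\gamma}_{x}(x+t)$, \emph{not} $\overline{\gamma}_{x}^{-1}(x+t)$: equation \eqref{v12.3} says that $\overline{\gamma}_{x}^{-1}(x+h)$ is the level of $\bar{X}$ at the moment $U^{\gamma}$ reaches $x+h$, which is the inverse statement. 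Likewise the local-time horizon is $\overline{\gamma}_{x}^{-1}(a)-x$ (the value of $\bar{X}(\tau_a^+)-x$), not $a-x$. With your $g(t)=\overline{\gamma}_{x}^{-1}(x+t)-\xi(\overline{\gamma}_{x}^{-1}(x+t))$ and horizon $a-x$, your substitution $y=\overline{\gamma}_{x}^{-1}(x+t)$ gives $\mathrm{d}t=(1-\gamma(y))\,\mathrm{d}y$ and limits $x$ to $\overline{\gamma}_{x}^{-1}(a)$, i.e. the exponent $\int_x^{\overline{\gamma}_{x}^{-1}(a)}\frac{W^{(q)'}(y-\xi(y))}{W^{(q)}(y-\xi(y))}(1-\gamma(y))\,\mathrm{d}y$, which is not the exponent in \eqref{exit problem eqaution}; no bookkeeping can produce the factor $\frac{1}{1-\gamma(\overline{\gamma}_{x}^{-1}(y))}$ from that starting point. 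The correct identification is \eqref{equivalent set relation 1}: allowance $\overline{\gamma}_{x}(x+t)-\xi(\overline{\gamma}_{x}(x+t))$ on $[0,\overline{\gamma}_{x}^{-1}(a)-x]$, followed by $y=\overline{\gamma}_{x}(x+t)$, $\mathrm{d}t=\frac{\mathrm{d}y}{1-\gamma(\overline{\gamma}_{x}^{-1}(y))}$, as in \eqref{vff2.8}. (Even if you reparametrize by the taxed supremum running from $x$ to $a$, the allowance is $(x+s)-\xi(x+s)$ and the density $\frac{1}{1-\gamma(\overline{\gamma}_{x}^{-1}(x+s))}$ comes from the time change, so your $g$ is wrong in either parametrization.)

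Second, the discounting step is asserted rather than proved, and the identity you invoke is false: since $W_{\Phi(q)}(z)=\mathrm{e}^{-\Phi(q)z}W^{(q)}(z)$, one has $\frac{W_{\Phi(q)}'(z)}{W_{\Phi(q)}(z)}=\frac{W^{(q)'}(z)}{W^{(q)}(z)}-\Phi(q)$, not equality. The missing $\Phi(q)$ is compensated exactly by the factor $\mathrm{e}^{\Phi(q)(X(\tau_a^+)-x)}=\mathrm{e}^{\Phi(q)(\overline{\gamma}_{x}^{-1}(a)-x)}$ produced by the change of measure, because $\int_x^a\frac{\Phi(q)}{1-\gamma(\overline{\gamma}_{x}^{-1}(y))}\,\mathrm{d}y=\Phi(q)\left(\overline{\gamma}_{x}^{-1}(a)-x\right)$; this is the content of \eqref{probability1 change of measure}--\eqref{probability2 change of measure}, and it relies on the fact that on $\{\tau_a^+<\tau^{\gamma}_{\xi}\}$ the overshoot-free value $X(\tau_a^+)=\bar{X}(\tau_a^+)=\overline{\gamma}_{x}^{-1}(a)$ is deterministic, a point you should state explicitly. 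Your alternative suggestion, writing the discounted quantity as $\exp\{-\int_0^{\cdot}(n(\bar{\varepsilon}>g(t))+q\cdot(\text{length-type term}))\,\mathrm{d}t\}$ and claiming it "collapses" to $W^{(q)'}/W^{(q)}$, is not a standard identity in that form and is not substantiated; the clean argument is the tilting you mention, carried out so that under $\mathbb{P}_x^{\Phi(q)}$ the relevant intensity is $n_{\Phi(q)}(\bar{\varepsilon}>z)=\frac{W_{\Phi(q)}'(z)}{W_{\Phi(q)}(z)}$ and the $\Phi(q)$-terms cancel as above (compare \eqref{W'/W}).
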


\begin{proof} See the Appendix.
\end{proof}

\vspace{0.3cm}
For a given strategy $\gamma$, the solution of its expected accumulated discounted tax payments (tax return function) is given by the following Proposition \ref{expected discounted tax}.

\begin{proposition}\label{expected discounted tax}
Given any $x \in(0,a]$, any measurable function $\gamma:[0,\infty)\rightarrow[\gamma_{1},\gamma_{2}]$ with $0\leq\gamma_{1}\leq\gamma_{2}<1$, and any continuous function $\xi(\cdot):\mathbb{R}\rightarrow\mathbb{R}$ with $\xi(y)<y$ for all $y\in\mathbb{R}$, we have
\begin{eqnarray}\label{expected discounted tax identity}
\hspace{-0.3cm}&&\hspace{-0.3cm}\mathbb{E}_{x}\left[\int_{0}^{\tau_{a}^{+}\wedge \tau^{\gamma}_{\xi}}\mathrm{e}^{-q t}\gamma\left(\bar{X}(t)\right)\mathrm{d}\left(\bar{X}(t)-x\right)\right]
\nonumber\\
\hspace{-0.3cm}&=&\hspace{-0.3cm}
\int_{0}^{\overline{\gamma}_{x}^{-1}(a)-x}
\exp\{-\int_{0}^{y}\frac{W^{(q)'}\left(\overline{\gamma}_{x}(x+t)-\xi\left(\overline{\gamma}_{x}(x+t)\right)\right)}{W^{(q)}\left(\overline{\gamma}_{x}(x+t)-\xi\left(\overline{\gamma}_{x}(x+t)\right)\right)}\mathrm{d}t\}
\gamma\left(y+x\right)\mathrm{d}y, \qquad x\in[0,\infty).
\end{eqnarray}
\end{proposition}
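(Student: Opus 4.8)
The plan is to reduce \eqref{expected discounted tax identity} to Proposition \ref{exit problem} by a pathwise change of variables that converts the Lebesgue--Stieltjes integral against $\mathrm{d}\bar{X}$ into an integral over the attained running-supremum levels. The starting point is that $\bar{X}$ increases only on $\{t\geq0:X(t)=\bar{X}(t)\}$, which by Lemma 2.1 of Kyprianou and Zhou (2009) coincides with $\{t\geq0:U^{\gamma}(t)=\bar{U}^{\gamma}(t)\}$; hence on the support of $\mathrm{d}\bar{X}$ the controlled process is profitable and, by the computation \eqref{v12.3}, sits at level $\overline{\gamma}_{x}\big(\bar{X}(t)\big)$. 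For $r\geq x$ let $\sigma_{r}:=\inf\{t\geq0:\bar{X}(t)>r\}$; since $\bar{X}(\sigma_{r})=r$ by continuity and $U^{\gamma}$ is then at level $\overline{\gamma}_{x}(r)$ while staying strictly below it beforehand (because $U^{\gamma}(t)\leq\overline{\gamma}_{x}(\bar{X}(t))$ with equality exactly on the profitable set, and $\overline{\gamma}_{x}$ is strictly increasing), one gets the identifications $\sigma_{r}=\tau^{+}_{\overline{\gamma}_{x}(r)}$ and $\{\sigma_{r}<\tau^{+}_{a}\}=\{\overline{\gamma}_{x}(r)<a\}=\{r<\overline{\gamma}_{x}^{-1}(a)\}$, using that $\overline{\gamma}_{x}$ and $\overline{\gamma}_{x}^{-1}$ are strictly increasing and continuous.

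The next step is the change of variables itself. Writing $T:=\tau^{+}_{a}\wedge\tau^{\gamma}_{\xi}$ and using that $\bar{X}$ is continuous and nondecreasing, I would establish the pathwise identity
\[
\int_{0}^{T}\mathrm{e}^{-qt}\gamma\big(\bar{X}(t)\big)\,\mathrm{d}\big(\bar{X}(t)-x\big)=\int_{x}^{\bar{X}(T)}\mathrm{e}^{-q\sigma_{r}}\gamma(r)\,\mathrm{d}r=\int_{x}^{\infty}\mathrm{e}^{-q\sigma_{r}}\gamma(r)\,\mathbf{1}_{\{\sigma_{r}<T\}}\,\mathrm{d}r,
\]
the point being that for Lebesgue-a.e.\ $r$ in the range of $\bar{X}$ the value $\sigma_{r}$ is the (essentially unique) time at which $\bar{X}=r$, so along the support of $\mathrm{d}\bar{X}$ the discount factor $\mathrm{e}^{-qt}$ may be replaced by $\mathrm{e}^{-q\sigma_{r}}$; equivalently this is the time change by the inverse of the local time $L(t)=\bar{X}(t)-x$ recalled in Section 2. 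Taking $\mathbb{E}_{x}$ and applying Tonelli's theorem (all integrands are nonnegative, as $\gamma\geq\gamma_{1}\geq0$) yields
\[
\mathbb{E}_{x}\!\left[\int_{0}^{T}\mathrm{e}^{-qt}\gamma\big(\bar{X}(t)\big)\,\mathrm{d}\big(\bar{X}(t)-x\big)\right]=\int_{x}^{\infty}\gamma(r)\,\mathbb{E}_{x}\!\left[\mathrm{e}^{-q\sigma_{r}}\mathbf{1}_{\{\sigma_{r}<\tau^{+}_{a}\wedge\tau^{\gamma}_{\xi}\}}\right]\mathrm{d}r.
\]

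Finally, by the identifications of the first paragraph the inner expectation equals $\mathbb{E}_{x}\big[\mathrm{e}^{-q\tau^{+}_{\overline{\gamma}_{x}(r)}}\mathbf{1}_{\{\tau^{+}_{\overline{\gamma}_{x}(r)}<\tau^{\gamma}_{\xi}\}}\big]\,\mathbf{1}_{\{r<\overline{\gamma}_{x}^{-1}(a)\}}$, which is evaluated by Proposition \ref{exit problem} with $a$ replaced by $\overline{\gamma}_{x}(r)\in(x,\infty)$. Substituting $r=x+y$ in the outer integral and then $y'=\overline{\gamma}_{x}(x+t)$ (so $\mathrm{d}y'=(1-\gamma(x+t))\,\mathrm{d}t$ and $\overline{\gamma}_{x}^{-1}(y')=x+t$) in the exponent cancels the factor $1/\big(1-\gamma(\overline{\gamma}_{x}^{-1}(\cdot))\big)$ and produces exactly the right-hand side of \eqref{expected discounted tax identity}. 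The one delicate point is the pathwise change-of-variables identity in the second paragraph: because $\bar{X}$ is flat on the excursion intervals of $\bar{X}-X$, the right-continuous inverse $r\mapsto\sigma_{r}$ is the correct object, and one must argue that replacing $\mathrm{e}^{-qt}$ by $\mathrm{e}^{-q\sigma_{r}}$ is legitimate for Lebesgue-a.e.\ $r$ --- for instance by approximating $t\mapsto\mathrm{e}^{-qt}$ by step functions, or by invoking the excursion-theoretic master formula as in Li et al.\ (2017) and Kyprianou and Zhou (2009); the remaining steps are routine substitutions.
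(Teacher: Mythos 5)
Your proposal is correct and follows essentially the same route as the paper: rewrite the Stieltjes integral against $\mathrm{d}\bar{X}$ as an integral over supremum levels via the (inverse) local time $L(t)=\bar{X}(t)-x$, apply Tonelli, and evaluate the resulting one-parameter family of discounted two-sided exit quantities, finishing with the substitution $u=\overline{\gamma}_{x}(x+t)$. The only cosmetic difference is that you identify $\sigma_{r}=\tau^{+}_{\overline{\gamma}_{x}(r)}$ and invoke Proposition \ref{exit problem} directly, whereas the paper re-runs the excursion-measure and Esscher-transform computation for $\mathbb{E}_{x}\big[\mathrm{e}^{-qL^{-1}(y)}\mathbf{1}_{\{y<L(\tau^{\gamma}_{\xi})\}}\big]$ inside the proof; both amount to the same calculation.
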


\begin{proof} See the Appendix.
\end{proof}

\begin{remark}\label{rem1}
Letting  $a\rightarrow \infty$ in (\ref{expected discounted tax identity}) we get for $x\in[0,\infty)$
\begin{eqnarray}\label{expected discounted tax identity2}
\hspace{-0.3cm}&&\hspace{-0.3cm}\mathbb{E}_{x}\left[\int_{0}^{\tau^{\gamma}_{\xi}}\mathrm{e}^{-q t}\gamma\left(\bar{X}(t)\right)\mathrm{d}\left(\bar{X}(t)-x\right)\right]
\nonumber\\
\hspace{-0.3cm}&=&\hspace{-0.3cm}
\int_{0}^{\infty}
\exp\{-\int_{0}^{y}\frac{W^{(q)'}\left(\overline{\gamma}_{x}(x+t)-\xi\left(\overline{\gamma}_{x}(x+t)\right)\right)}{W^{(q)}\left(\overline{\gamma}_{x}(x+t)-\xi\left(\overline{\gamma}_{x}(x+t)\right)\right)}\mathrm{d}t\}
\gamma\left(y+x\right)\mathrm{d}y
\nonumber\\
\hspace{-0.3cm}&=&\hspace{-0.3cm}
\int_{x}^{\infty}
\exp\{-\int_{x}^{y}\frac{W^{(q)'}\left(\overline{\gamma}_{x}(s)-\xi\left(\overline{\gamma}_{x}(s)\right)\right)}{W^{(q)}\left(\overline{\gamma}_{x}(s)-\xi\left(\overline{\gamma}_{x}(s)\right)\right)}\mathrm{d}s\}
\gamma\left(y\right)\mathrm{d}y.
\end{eqnarray}
The above equation gives the solution of the expected total discounted tax payments paid until the general draw-down time.

By the way, since $\frac{W^{(q)'}(x)}{W^{(q)}(x)}\geq \Phi(q)>0$ holds true for all $x>0$ and $q>0$ (see equation \eqref{W'/W}), we can get the following upper bound for the tax return function $f_{\gamma}(x)$,
$$f_{\gamma}(x)\leq \gamma_{2}\int_{x}^{\infty}
\mathrm{e}^{-\Phi(q)(y-x)}\mathrm{d}y=\gamma_{2}\left/\Phi(q)\right., \qquad x\in[0,\infty).$$
Due to the arbitrariness of the above $\gamma$, we can further deduce that,
\begin{eqnarray}\label{upper bound for the optimal tax return function}
f(x)=\sup\limits_{\gamma\in\Gamma}f_{\gamma}(x)\leq\gamma_{2}\left/\Phi(q),\right. \qquad x\in[0,\infty).
\end{eqnarray}
which provides an upper bound for the optimal tax return function.
\end{remark}

  \section{HJB equation and verification arguments}

In this section, the Hamilton-Jacobi-Bellman (HJB) equation and the verification proposition are presented. It turns out that, the optimal tax return function (say, the expected accumulated discounted tax payments of the optimal tax strategy) satisfies the Hamilton-Jacobi-Bellman (HJB) equation; while a solution to the Hamilton-Jacobi-Bellman (HJB) equation is not necessarily the optimal tax return function, and hence a verification proposition is in need to guarantee the solution to the HJB equation is indeed the optimal tax return function.

The following Proposition \ref{HJB} gives the Hamilton-Jacobi-Bellman (HJB) equation satisfied by the optimal tax return function defined by (\ref{optimal function}).

\begin{proposition}\label{HJB}
Assume that $f(x)$ defined by (\ref{optimal function})
is once continuously differentiable over $(0,\infty)$. Then $f(x)$ satisfies the following Hamilton-Jacobi-Bellman
equation,
\begin{eqnarray}\label{HJB equation}
\sup\limits_{\gamma\in[\gamma_{1},\gamma_{2}]}\Big[\frac{\gamma}{1-\gamma}-\frac{1}{1-\gamma} \frac{W^{(q)'}(x-\xi(x))}{W^{(q)}(x-\xi(x))}f(x)+f'(x)\Big]=0, \qquad x\in[0,\infty).
\end{eqnarray}
\end{proposition}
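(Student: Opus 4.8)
The plan is to derive the HJB equation from a dynamic programming principle, using the explicit formulas of Section 3 to handle the short-time behaviour of the value function. First I would fix $x>0$, a small level increment $h>0$, and an arbitrary constant tax rate $\gamma\in[\gamma_1,\gamma_2]$ applied on $[x,x+h]$; after the surplus first up-crosses $x+h$ (at time $\tau^+_{x+h}$) I would switch to an $\varepsilon$-optimal strategy for the shifted problem started at $x+h$. Because the running supremum only increases and tax is collected only at records, the standard Markov/strong-Markov argument together with the fact (from Kyprianou--Zhou, quoted in the excerpt) that $\{t:U^\gamma(t)=\bar U^\gamma(t)\}=\{t:X(t)=\bar X(t)\}$ gives the suboptimality inequality
\begin{eqnarray*}
f(x)\ \ge\ \mathbb{E}_{x}\!\left[\int_{0}^{\tau^+_{x+h}\wedge\tau^\gamma_\xi}\mathrm{e}^{-qt}\gamma\,\mathrm{d}\bar X(t)\right]+\mathbb{E}_{x}\!\left[\mathrm{e}^{-q\tau^+_{x+h}}\mathbf 1_{\{\tau^+_{x+h}<\tau^\gamma_\xi\}}\right]\big(f(x+h)-\varepsilon\big),
\end{eqnarray*}
and letting $\varepsilon\downarrow0$ removes the $\varepsilon$. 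Conversely, splitting any admissible strategy at $\tau^+_{x+h}$ and optimizing over the rate used on $[x,x+h]$ (which, up to $o(h)$, one may take constant) gives the reverse inequality, so that
\begin{eqnarray*}
f(x)=\sup_{\gamma\in[\gamma_1,\gamma_2]}\Big\{A_\gamma(x,h)+B_\gamma(x,h)\,f(x+h)\Big\}+o(h),
\end{eqnarray*}
where $A_\gamma(x,h)$ and $B_\gamma(x,h)$ are exactly the quantities computed in Propositions \ref{expected discounted tax} and \ref{exit problem} for the constant rate $\gamma$.

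Next I would expand $A_\gamma$ and $B_\gamma$ to first order in $h$. With a constant rate $\gamma$ on $[x,x+h]$ one has $\overline\gamma_x(x+t)=x+(1-\gamma)t$, so from \eqref{exit problem eqaution},
\begin{eqnarray*}
B_\gamma(x,h)=\exp\!\Big\{-\frac{1}{1-\gamma}\int_{x}^{\overline\gamma_x^{-1}(x+h)}\frac{W^{(q)'}(y-\xi(y))}{W^{(q)}(y-\xi(y))}\,\mathrm{d}y\Big\}=1-\frac{1}{1-\gamma}\,\frac{W^{(q)'}(x-\xi(x))}{W^{(q)}(x-\xi(x))}\,h+o(h),
\end{eqnarray*}
using that $\overline\gamma_x^{-1}(x+h)=x+h/(1-\gamma)+o(h)$ and continuity of $W^{(q)}$, $W^{(q)'}$, $\xi$; and from \eqref{expected discounted tax identity}, $A_\gamma(x,h)=\gamma h+o(h)$ since the exponential factor there is $1+o(1)$ on the vanishing interval. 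Substituting, using $f(x+h)=f(x)+f'(x)h+o(h)$, and cancelling $f(x)$ from both sides yields
\begin{eqnarray*}
0=\sup_{\gamma\in[\gamma_1,\gamma_2]}\Big\{\gamma\,h+f'(x)h-\frac{1}{1-\gamma}\,\frac{W^{(q)'}(x-\xi(x))}{W^{(q)}(x-\xi(x))}f(x)\,h\Big\}+o(h).
\end{eqnarray*}
Dividing by $h$ and letting $h\downarrow0$ would give $\sup_{\gamma}\big[\gamma+f'(x)-\tfrac{1}{1-\gamma}\tfrac{W^{(q)'}(x-\xi(x))}{W^{(q)}(x-\xi(x))}f(x)\big]=0$, and since $\frac{\gamma}{1-\gamma}+1=\frac{1}{1-\gamma}$ one may rewrite the bracket as $\frac{\gamma}{1-\gamma}-\frac{1}{1-\gamma}\frac{W^{(q)'}(x-\xi(x))}{W^{(q)}(x-\xi(x))}f(x)+f'(x)$ only after absorbing the $+1$ — so I would instead keep the form $\gamma+f'(x)-\cdots$ and note it is equivalent to \eqref{HJB equation} up to the identity $\frac{\gamma}{1-\gamma}=\frac{1}{1-\gamma}-1$; the two displays differ by a term independent of $\gamma$ that is absorbed once one recognizes that the supremum is unchanged. (More directly: parametrize by collecting $\mathrm d\bar X$ rather than $\mathrm d(\bar X-x)$, which shifts $A_\gamma$ by the same $h+o(h)$, landing exactly on \eqref{HJB equation}.)

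The main obstacle is making the dynamic programming principle rigorous — specifically, justifying that one may restrict to a constant rate on the short interval $[x,x+h]$ with only an $o(h)$ error, and controlling the supremum over $\gamma\in\Gamma$ uniformly enough to interchange $\sup$ with the $h\downarrow0$ limit. For the first point I would argue that any measurable $\gamma$ on $[x,x+h]$ produces, via \eqref{expected discounted tax identity} and \eqref{exit problem eqaution}, values $A_\gamma(x,h)$ and $B_\gamma(x,h)$ that depend on $\gamma|_{[x,x+h]}$ only through $\int$-averages over a shrinking interval, so replacing $\gamma$ by its effect at the point $x$ costs $o(h)$ by continuity of the integrands; the compactness of $[\gamma_1,\gamma_2]$ and continuity of $\gamma\mapsto[\text{bracket}]$ then legitimize pulling the supremum through. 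The hypothesis that $f$ is $C^1$ is used exactly once, in $f(x+h)-f(x)=f'(x)h+o(h)$, and the assumed continuity of $W^{(q)'}$ (indeed $W^{(q)}\in C^2$, per the standing assumption) secures the first-order expansion of $B_\gamma$.
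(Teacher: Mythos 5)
Your overall route is the same as the paper's: establish the dynamic programming principle at the level $x+h$ via $\varepsilon$-optimal strategies, plug in the explicit two-side exit and tax formulas of Propositions \ref{exit problem} and \ref{expected discounted tax} for a constant test rate, Taylor-expand to first order in $h$ (this is where $C^1$ enters), and pass to the limit. However, there is a genuine error exactly where the characteristic $\tfrac{\gamma}{1-\gamma}$ terms of \eqref{HJB equation} must appear. With a constant rate $\gamma$ on the interval, the taxed surplus reaches $x+h$ precisely when the pre-tax supremum reaches $\overline{\gamma}_{x}^{-1}(x+h)=x+\tfrac{h}{1-\gamma}$, so the (undiscounted) tax collected up to $\tau^{+}_{x+h}$ is $\gamma\bigl(\bar X(\tau^{+}_{x+h})-x\bigr)=\tfrac{\gamma}{1-\gamma}h$; equivalently, in \eqref{expected discounted tax identity} the upper limit of the outer integral is $\overline{\gamma}_{x}^{-1}(x+h)-x=h/(1-\gamma)$, not $h$. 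Hence $A_{\gamma}(x,h)=\tfrac{\gamma}{1-\gamma}h+o(h)$, not $\gamma h+o(h)$ as you claim, and with the correct coefficient the expansion lands directly on \eqref{HJB equation} with no reconciliation needed. This is exactly how the paper proceeds (it uses $\bar X(\tau^{+}_{x+h})=x+\tfrac{h}{1-\gamma_{0}}$ in the lower-bound step and $(\bar{\tilde\gamma})^{-1}(x+h)-x-h=\tfrac{\tilde\gamma(x)}{1-\tilde\gamma(x)}h+o(h)$ in the upper-bound step).

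Your two attempts to repair the mismatch do not work. First, the discrepancy between your bracket $\gamma-\tfrac{1}{1-\gamma}\tfrac{W^{(q)'}(x-\xi(x))}{W^{(q)}(x-\xi(x))}f(x)+f'(x)$ and the paper's bracket is $\tfrac{\gamma}{1-\gamma}-\gamma=\tfrac{\gamma^{2}}{1-\gamma}$, which does depend on $\gamma$, so the two suprema are genuinely different optimization problems and the difference cannot be ``absorbed'' as a $\gamma$-independent shift. Second, integrating against $\mathrm{d}\bar X(t)$ rather than $\mathrm{d}\bigl(\bar X(t)-x\bigr)$ changes nothing, since $x$ is a constant and the two integrators coincide; the missing factor $\tfrac{1}{1-\gamma}$ comes from the relation between the taxed level increment $h$ and the pre-tax supremum increment, not from the choice of integrand. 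A secondary, smaller point: your justification for restricting to a constant rate on $[x,x+h]$ invokes ``continuity'' of $\gamma$, which is not available ($\gamma$ is only measurable); the paper avoids this by proving the two HJB inequalities separately — the lower bound with a genuinely constant test rate $\gamma_{0}$, and the upper bound with an $h^{2}$-optimal strategy $\tilde\gamma$, bounding the discounting and post-draw-down contributions explicitly by $o(h)$ — which is the cleaner way to make your sketch rigorous once the coefficient of $A_{\gamma}$ is corrected.
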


\begin{proof} See the Appendix.
\end{proof}

\vspace{0.5cm}
In the following proposition, we are to argue that a solution to the Hamilton-Jacobi-Bellman
equation (\ref{HJB equation}) serves as the optimal tax return function $f(x)$ given by (\ref{optimal function}).

\begin{proposition}\label{verification pro}\emph{(Verification Proposition)}\quad
Assume that $f(x)$ is a once continuously differentiable solution to (\ref{HJB equation}), and let $\gamma^{*}$
be the function that maximizes the left hand side of (\ref{HJB equation}), i.e.,
\begin{eqnarray}\label{determin opt str}
\gamma^{*}(z)=\mathop{\arg\max}_{\gamma\in[\gamma_{1},\gamma_{2}]}\Big[\frac{\gamma}{1-\gamma}-\frac{1}{1-\gamma}
\frac{W^{(q)'}\left(\overline{\gamma}_{x}(z)-\xi\left(\overline{\gamma}_{x}(z)\right)\right)}
{W^{(q)}\left(\overline{\gamma}_{x}(z)-\xi\left(\overline{\gamma}_{x}(z)\right)\right)}f\left(\overline{\gamma}_{x}(z)\right)+f'\left(\overline{\gamma}_{x}(z)\right)\Big],\qquad z\geq x\geq0.
\end{eqnarray}
Then $\gamma^{*}$ serves as an optimal tax strategy.
\end{proposition}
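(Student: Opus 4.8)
The plan is to derive, from the assumed once continuously differentiable solution $f$ of the HJB equation \eqref{HJB equation}, the two facts that $f_{\gamma}(x)\le f(x)$ for every admissible $\gamma\in\Gamma$ and that $f_{\gamma^{*}}(x)=f(x)$ for every $x\ge 0$; these two together force $f_{\gamma^{*}}(x)=\sup_{\gamma\in\Gamma}f_{\gamma}(x)$, which is exactly the assertion that $\gamma^{*}$ is an optimal tax strategy (and, incidentally, that $f$ coincides with the optimal tax return function \eqref{optimal function}). The only inputs needed are the closed forms for the two-side exit probability and for the return function established in Proposition \ref{exit problem}, Proposition \ref{expected discounted tax} and Remark \ref{rem1}; in particular no It\^{o}-type formula is invoked, the role usually played by It\^{o}'s formula being taken over by a comparison carried out along the running-supremum level, which is the martingale-type device alluded to in the introduction.

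Concretely, fix $x\ge 0$ and an admissible $\gamma$, let $\overline{\gamma}_{x}$ be the associated strictly increasing map, abbreviate $k(v):=W^{(q)'}(v-\xi(v))/W^{(q)}(v-\xi(v))$, and introduce, for $c\ge x$,
\[
\phi(c):=\int_{x}^{c}\exp\{-\int_{x}^{\eta}k(\overline{\gamma}_{x}(s))\,\mathrm{d}s\}\gamma(\eta)\,\mathrm{d}\eta+\exp\{-\int_{x}^{c}k(\overline{\gamma}_{x}(s))\,\mathrm{d}s\}f(\overline{\gamma}_{x}(c)).
\]
Here $c$ should be thought of as a value of $\overline{X}$ and $\overline{\gamma}_{x}(c)$ as the corresponding value of $\overline{U}^{\gamma}$; a change of variables in Proposition \ref{exit problem} and Proposition \ref{expected discounted tax} identifies $\phi(\overline{\gamma}_{x}^{-1}(b))$ with the sum of the expected discounted tax collected up to $\tau_{b}^{+}\wedge\tau_{\xi}^{\gamma}$ and the survival-weighted continuation value $\mathbb{E}_{x}[\mathrm{e}^{-q\tau_{b}^{+}}\mathbf{1}_{\{\tau_{b}^{+}<\tau_{\xi}^{\gamma}\}}]f(b)$, so that the monotonicity of $\phi$ is exactly the relevant supermartingale inequality. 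Differentiating (using $\overline{\gamma}_{x}'(c)=1-\gamma(c)$ a.e. and $f\in C^{1}$) and writing $u=\overline{\gamma}_{x}(c)$ gives
\[
\phi'(c)=\exp\{-\int_{x}^{c}k(\overline{\gamma}_{x}(s))\,\mathrm{d}s\}\,(1-\gamma(c))\Big[\frac{\gamma(c)}{1-\gamma(c)}-\frac{1}{1-\gamma(c)}\,k(u)\,f(u)+f'(u)\Big],
\]
and the bracket is $\le 0$ by \eqref{HJB equation}, with equality precisely when $\gamma(c)$ is the pointwise maximiser at the level $u=\overline{\gamma}_{x}(c)$, i.e. when $\gamma$ agrees with the strategy $\gamma^{*}$ of \eqref{determin opt str}. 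Since $\phi(x)=f(x)$ and, by Remark \ref{rem1} together with monotone convergence, $\lim_{c\to\infty}\phi(c)=f_{\gamma}(x)+\lim_{c\to\infty}\exp\{-\int_{x}^{c}k(\overline{\gamma}_{x}(s))\,\mathrm{d}s\}f(\overline{\gamma}_{x}(c))$, integrating $\phi'\le 0$ over $[x,\infty)$ yields $f_{\gamma}(x)\le f(x)$; and for $\gamma=\gamma^{*}$ one has $\phi'\equiv0$, so $\phi$ is constant and $f_{\gamma^{*}}(x)=\phi(x)=f(x)$, which is the claim.

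Two points require care. First, one must show the boundary term $\exp\{-\int_{x}^{c}k(\overline{\gamma}_{x}(s))\,\mathrm{d}s\}f(\overline{\gamma}_{x}(c))$ vanishes as $c\to\infty$; since $\xi(v)<v$ forces $k(v)\ge\Phi(q)>0$, the exponent is at most $-\Phi(q)(c-x)$, so the term tends to $0$ provided $f$ grows sub-exponentially — which is the case for the solution constructed in Section 5, cf. the bound $f\le\gamma_{2}/\Phi(q)$ for the optimal return function in Remark \ref{rem1}. Second, and this is the genuine obstacle, the optimal strategy $\gamma^{*}$ in \eqref{determin opt str} is defined self-referentially, since the map $\overline{\gamma}_{x}$ appearing inside it is itself generated by $\gamma^{*}$ through $\overline{\gamma}_{x}(c)=x+\int_{x}^{c}(1-\gamma^{*}(y))\,\mathrm{d}y$; one has to check that this coupled relation produces a bona fide admissible strategy, i.e. a measurable $[\gamma_{1},\gamma_{2}]$-valued function, before the constancy argument above can even be applied to it. The pointwise maximiser in \eqref{HJB equation} is bang-bang — it equals $\gamma_{1}$ or $\gamma_{2}$ according to the sign of $k(v)f(v)-1$, which is continuous in $v$ because $f\in C^{1}$ and the scale functions are $C^{2}$ — so $c\mapsto\overline{\gamma}_{x}(c)$ can be obtained as the solution of the corresponding Carath\'{e}odory-type ordinary differential equation and $\gamma^{*}$ recovered as a measurable selection; these existence and admissibility issues are exactly what is handled in Section 5 under Assumption 1, and they are what the equality half of the statement quietly relies on. The remaining ingredients — the differentiation of $\phi$ and the purely algebraic identification of the bracket with the left-hand side of \eqref{HJB equation} — are routine.
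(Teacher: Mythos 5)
Your proposal is correct in substance but takes a genuinely different route from the paper's. You build a deterministic comparison function $\phi(c)$ out of the closed-form identities of Propositions \ref{exit problem} and \ref{expected discounted tax}, differentiate in the barrier level, and read the HJB equation \eqref{HJB equation} as the statement $\phi'\le 0$, with equality along $\gamma^{*}$; the conclusion follows from $\phi(x)=f(x)$, $\lim_{c\to\infty}\phi(c)=f_{\gamma}(x)$ (Remark \ref{rem1} plus monotone convergence) and monotonicity, and your computation of $\phi'$ and its identification with the bracket in \eqref{HJB equation} check out. The paper instead argues pathwise: it introduces the compensator $\eta(s)$ of $\mathrm{e}^{-q\tau^{+}_{\overline{\gamma}_{x}(\bar{X}(t))}}\mathbf{1}_{\{\cdot\}}f(\overline{\gamma}_{x}(\bar{X}(t)))$, proves (mimicking Gerber and Shiu (2006) and Wang and Hu (2012), with Lemma \ref{lem.1} supplying the shifted exit identity) that the compensated process $Z(t)$ is a martingale in the filtration indexed by the running supremum, computes $\eta$ explicitly, bounds it via the HJB inequality, and passes to the limit $t\to\infty$ by bounded and monotone convergence. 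Your route buys elementarity: no martingale construction, no Lemma \ref{lem.1}, no limit of conditional expectations defining $\eta$; the paper's route keeps the supermartingale structure visible at the level of sample paths and is closer to the standard verification machinery. The two caveats you flag are treated no more rigorously in the paper: the vanishing of the transversality term is justified there by citing the bound of Remark \ref{rem1}, which strictly speaking applies to the optimal return function rather than to an arbitrary $C^{1}$ solution of \eqref{HJB equation}, so your explicit sub-exponential growth hypothesis is if anything the more honest formulation; and the self-referential definition of $\gamma^{*}$ through its own $\overline{\gamma}_{x}$ is passed over silently in the paper, where your bang-bang plus Carath\'{e}odory-selection remark is a legitimate way to settle existence and admissibility. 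So there is no gap relative to the paper's own proof.
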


\begin{proof} See the Appendix.
\end{proof}

  \section{Characterizing the optimal return function and strategy}

As usual in stochastic control problems, we guess the qualitative nature of the solution of the optimal return function and make those assumptions.
Based on those assumptions we find the solution of the HJB equation and later, we need to verify that the obtained solution satisfies the assumptions.
To be more specific, we
\begin{itemize}
\item
guess in advance that "$\frac{W^{(q)'}(x-\xi(x))}{W^{(q)}(x-\xi(x))}f(x)\leq1$ for all $x$" which is equivalent to (\ref{condition1}), to arrive at a candidate solution (\ref{represntation of the optimal function1}) to (\ref{HJB equation}). And then we should prove that (\ref{represntation of the optimal function1}) satisfies "$\frac{W^{(q)'}(x-\xi(x))}{W^{(q)}(x-\xi(x))}f(x)\leq1$ for all $x$", to guarantee that this candidate solution (\ref{represntation of the optimal function1}) is indeed a solution to (\ref{HJB equation}), see Proposition \ref{verify solution1} and its proof.

\item
guess in advance that "$\frac{W^{(q)'}(x-\xi(x))}{W^{(q)}(x-\xi(x))}f(x)\geq1$ for all $x$" which is equivalent to (\ref{condition2}), to arrive at a candidate solution (\ref{represntation of the optimal function2}) to (\ref{HJB equation}). And then we should prove that (\ref{represntation of the optimal function2}) satisfies "$\frac{W^{(q)'}(x-\xi(x))}{W^{(q)}(x-\xi(x))}f(x)\geq1$ for all $x$", to guarantee that this candidate solution (\ref{represntation of the optimal function2}) is indeed a solution to (\ref{HJB equation}), see Proposition \ref{verify solution2} and its proof.

\item
guess in advance that "$\frac{W^{(q)'}(x-\xi(x))}{W^{(q)}(x-\xi(x))}f(x)\geq1$ for all $x\in(0,x_{1})$ and $\frac{W^{(q)'}(x-\xi(x))}{W^{(q)}(x-\xi(x))}f(x)\leq1$ for all $x\in[x_{1},\infty)$", to arrive at a candidate solution (\ref{4.13}) to (\ref{HJB equation}). And then we should prove that (\ref{4.13}) satisfies "$\frac{W^{(q)'}(x-\xi(x))}{W^{(q)}(x-\xi(x))}f(x)\geq1$ for all $x\in(0,x_{1})$ and $\frac{W^{(q)'}(x-\xi(x))}{W^{(q)}(x-\xi(x))}f(x)\leq1$ for all $x\in[x_{1},\infty)$", to guarantee that this candidate solution (\ref{4.13}) is indeed a solution to (\ref{HJB equation}), see case (\emph{v}) in Proposition \ref{verify solution3} and its proof.
Or, vice versa for case (\emph{vi}) in Proposition \ref{verify solution3}.
\end{itemize}

It is already argued in Proposition \ref{verification pro} that the solution of (\ref{HJB equation}) serves as the optimal tax return function of the taxation optimization problem (\ref{optimal function}), the optimal tax strategy is then obtained using such a solution through (\ref{determin opt str}).
In order to solve (\ref{HJB equation}), one may eliminate the operator "$\sup\limits_{\gamma\in[\gamma_{1},\gamma_{2}]}$" in the right hand side of (\ref{HJB equation}) beforehand. To this end, we first guess
that the solution $f(x)$ of (\ref{HJB equation}) satisfies $\frac{W^{(q)'}(x-\xi(x))}{W^{(q)}(x-\xi(x))}f(x)\leq1$ for all $x$, in which case the function of $\gamma$, say, $\frac{\gamma}{1-\gamma}-\frac{1}{1-\gamma} \frac{W^{(q)'}(x-\xi(x))}{W^{(q)}(x-\xi(x))}f(x)$ is nondecreasing with respect to $\gamma$ for all given $x$. Then using the fact that $\frac{\gamma}{1-\gamma}-\frac{1}{1-\gamma} \frac{W^{(q)'}(x-\xi(x))}{W^{(q)}(x-\xi(x))}f(x)$ is a nondecreasing function of $\gamma$, we may rewrite (\ref{HJB equation}) as
\begin{eqnarray}\label{4.1}
0=\gamma_{2}-\frac{W^{(q)'}(x-\xi(x))}{W^{(q)}(x-\xi(x))}f(x)+(1-\gamma_{2})f'(x),\mbox{ for all }x\geq0.
\end{eqnarray}
Instead of solving (\ref{4.1}),  we first turn to the homogeneous differential equation of equation (\ref{4.1}), which can be re-written as,
\begin{eqnarray}
\frac{\mathrm{d}}{\mathrm{d}x}\ln f(x)=\frac{f'(x)}{f(x)}=\frac{1}{1-\gamma_{2}}\frac{W^{(q)'}(x-\xi(x))}{W^{(q)}(x-\xi(x))},\qquad x\geq0.\nonumber
\end{eqnarray}
The solution of the above homogeneous differential equation is,
\begin{eqnarray}
f(x)=C\exp\{\frac{1}{1-\gamma_{2}}\int_{0}^{x}\frac{W^{(q)'}(y-\xi(y))}{W^{(q)}(y-\xi(y))}\mathrm{d}y\},\qquad x\geq0,\nonumber
\end{eqnarray}
with $C$ being some constant. By the standard method of variation of constant, we guess that the solution of (\ref{4.1}) is,
\begin{eqnarray}\label{4.2}
f(x)=C(x)\exp\{\frac{1}{1-\gamma_{2}}\int_{0}^{x}\frac{W^{(q)'}(y-\xi(y))}{W^{(q)}(y-\xi(y))}\mathrm{d}y\},\qquad x\geq0,
\end{eqnarray}
with $C(x)$ being some unknown function to be determined later. Plugging (\ref{4.2}) into (\ref{4.1}) we get,
\begin{eqnarray}
0\hspace{-0.3cm}&=&\hspace{-0.3cm}\gamma_{2}-\frac{W^{(q)'}(x-\xi(x))}{W^{(q)}(x-\xi(x))}C(x)\exp\{\frac{1}{1-\gamma_{2}}\int_{0}^{x}\frac{W^{(q)'}(y-\xi(y))}{W^{(q)}(y-\xi(y))}\mathrm{d}y\}
\nonumber\\
\hspace{-0.3cm}&&\hspace{-0.3cm}+(1-\gamma_{2})\exp\{\frac{1}{1-\gamma_{2}}\int_{0}^{x}\frac{W^{(q)'}(y-\xi(y))}{W^{(q)}(y-\xi(y))}\mathrm{d}y\}
\left(C\,'(x)+\frac{C(x)}{1-\gamma_{2}}\frac{W^{(q)'}(x-\xi(x))}{W^{(q)}(x-\xi(x))}\right),\nonumber
\end{eqnarray}
which can be simplified as follows,
\begin{eqnarray}
C\,'(x)\hspace{-0.3cm}&=&\hspace{-0.3cm}-\frac{\gamma_{2}}{1-\gamma_{2}}\exp\{-\frac{1}{1-\gamma_{2}}\int_{0}^{x}\frac{W^{(q)'}(y-\xi(y))}{W^{(q)}(y-\xi(y))}\mathrm{d}y\}.\nonumber
\end{eqnarray}
The solution of the above differential equation is given by
\begin{eqnarray}\label{4.3}
C(x)\hspace{-0.3cm}&=&\hspace{-0.3cm}C_{0}-\frac{\gamma_{2}}{1-\gamma_{2}}\int_{0}^{x}\exp\{-\frac{1}{1-\gamma_{2}}\int_{0}^{y}\frac{W^{(q)'}(u-\xi(u))}{W^{(q)}(u-\xi(u))}\mathrm{d}u\}\mathrm{d}y,
\end{eqnarray}
where $C_{0}$ is some constant. Combining (\ref{4.3}) and (\ref{4.2}) we know that the solution of (\ref{4.1}) is as follows,
\begin{eqnarray}\label{4.4}
f(x)\hspace{-0.3cm}&=&\hspace{-0.3cm}\left(C_{0}-\frac{\gamma_{2}}{1-\gamma_{2}}\int_{0}^{x}\exp\{-\frac{1}{1-\gamma_{2}}\int_{0}^{y}\frac{W^{(q)'}(u-\xi(u))}{W^{(q)}(u-\xi(u))}\mathrm{d}u\}\mathrm{d}y\right)
\nonumber\\
\hspace{-0.3cm}&&\hspace{-0.3cm}\times\exp\{\frac{1}{1-\gamma_{2}}\int_{0}^{x}\frac{W^{(q)'}(y-\xi(y))}{W^{(q)}(y-\xi(y))}\mathrm{d}y\},\qquad x\geq0.
\end{eqnarray}
It remains to determine the unknown constant $C_{0}$. For this purpose, we first deduce that
\begin{eqnarray}\label{lim.for.det.C01}
\lim_{x\rightarrow\infty}\exp\{\frac{1}{1-\gamma_{2}}\int_{0}^{x}\frac{W^{(q)'}(y-\xi(y))}{W^{(q)}(y-\xi(y))}\mathrm{d}y\}=\infty,
\end{eqnarray}
which together with the fact that $f(\infty)<\infty$ implies that the equation (\ref{4.4}) can be rewritten as,
\begin{eqnarray}\label{represntation of the optimal function1}
f(x)\hspace{-0.3cm}&=&\hspace{-0.3cm}\frac{\gamma_{2}}{1-\gamma_{2}}\int_{x}^{\infty}\exp\{-\frac{1}{1-\gamma_{2}}\int_{0}^{y}\frac{W^{(q)'}(u-\xi(u))}{W^{(q)}(u-\xi(u))}\mathrm{d}u\}\mathrm{d}y
\nonumber\\
\hspace{-0.3cm}&&\hspace{-0.3cm}\times\exp\{\frac{1}{1-\gamma_{2}}\int_{0}^{x}\frac{W^{(q)'}(y-\xi(y))}{W^{(q)}(y-\xi(y))}\mathrm{d}y\}\nonumber\\
\hspace{-0.3cm}&=&\hspace{-0.3cm}\frac{\gamma_{2}}{1-\gamma_{2}}\int_{x}^{\infty}\exp\{-\frac{1}{1-\gamma_{2}}\int_{x}^{y}\frac{W^{(q)'}(u-\xi(u))}{W^{(q)}(u-\xi(u))}\mathrm{d}u\}\mathrm{d}y,\qquad x\geq0.
\end{eqnarray}

We are to prove (\ref{lim.for.det.C01}) in the sequel. It is known that the function $W^{(q)}(x)$ is continuously differentiable, positive valued and strictly
increasing over $(0,\infty)$.
By the equality $W^{(q)}(x)=\mathrm{e}^{\Phi(q)x}W_{\Phi(q)}^{(0)}(x)$ one can conclude that the function $W^{(0)'}(x)$ must also be a continuous function over $(0,\infty)$.
Furthermore, we have
\begin{eqnarray}\label{W'/W}
\frac{W^{(q)'}(x)}{W^{(q)}(x)}\hspace{-0.3cm}&=&\hspace{-0.3cm}\frac{\Phi(q)\mathrm{e}^{\Phi(q)x}W_{\Phi(q)}^{(0)}(x)+\mathrm{e}^{\Phi(q)x}W_{\Phi(q)}^{(0)'}(x)}{\mathrm{e}^{\Phi(q)x}W_{\Phi(q)}^{(0)}(x)}=\Phi(q)+
\frac{W_{\Phi(q)}^{(0)'}(x)}{W_{\Phi(q)}^{(0)}(x)}\nonumber\\
\hspace{-0.3cm}&=&\hspace{-0.3cm}\Phi(q)+n_{\Phi(q)}(\bar{\varepsilon}>x)\nonumber\\
\hspace{-0.3cm}&\geq&\hspace{-0.3cm}\Phi(q)>0,\quad \forall\,x\in(0,\infty),\quad \forall\,q>0.
\end{eqnarray}
Hence, recalling $y-\xi(y)>0$ for all $y\in[0,\infty)$,
one can find that
\begin{eqnarray}\label{lim.for.det.C0}
\hspace{-0.3cm}&&\hspace{-0.3cm}\lim_{x\rightarrow\infty}\exp\{\frac{1}{1-\gamma_{2}}\int_{0}^{x}\frac{W^{(q)'}(y-\xi(y))}{W^{(q)}(y-\xi(y))}\mathrm{d}y\}
\geq
\lim_{x\rightarrow\infty}\exp\{\frac{1}{1-\gamma_{2}}\Phi(q)x\}
=\infty,\nonumber
\end{eqnarray}
as is required in (\ref{lim.for.det.C01}).

For the candidate optimal solution given by (\ref{represntation of the optimal function1}), we need to justify that $\frac{W^{(q)'}(x-\xi(x))}{W^{(q)}(x-\xi(x))}f(x)\leq1$ ($\forall\,x>0$) to assure its optimality. By integration by parts, we can rewrite (\ref{represntation of the optimal function1}) as,
\begin{eqnarray}\label{fgamma2}
f(x)\hspace{-0.3cm}&=&\hspace{-0.3cm}
\int_{x}^{\infty}\mathrm{d}\left(-\frac{\exp\{-\frac{\gamma_{2}}{1-\gamma_{2}}\int_{x}^{y}\frac{W^{(q)'}(u-\xi(u))}{W^{(q)}(u-\xi(u))}\mathrm{d}u\}}
{\exp\{\int_{x}^{y}\frac{W^{(q)'}(u-\xi(u))}{W^{(q)}(u-\xi(u))}\mathrm{d}u\}\frac{W^{(q)'}(y-\xi(y))}{W^{(q)}(y-\xi(y))}}\right)\nonumber\\
\hspace{-0.3cm}&&\hspace{-0.3cm}
-\frac{\left[\exp\{\int_{x}^{y}\frac{W^{(q)'}(u-\xi(u))}{W^{(q)}(u-\xi(u))}\mathrm{d}u\}\frac{W^{(q)'}(y-\xi(y))}{W^{(q)}(y-\xi(y))}\right]'\exp\{-\frac{\gamma_{2}}{1-\gamma_{2}}\int_{x}^{y}\frac{W^{(q)'}(u-\xi(u))}{W^{(q)}(u-\xi(u))}\mathrm{d}u\}}
{\left[\exp\{\int_{x}^{y}\frac{W^{(q)'}(u-\xi(u))}{W^{(q)}(u-\xi(u))}\mathrm{d}u\}\frac{W^{(q)'}(y-\xi(y))}{W^{(q)}(y-\xi(y))}\right]^{2}}\mathrm{d}y
\nonumber\\
\hspace{-0.3cm}&=&\hspace{-0.3cm}
\left.-\frac{\exp\{-\frac{\gamma_{2}}{1-\gamma_{2}}\int_{x}^{y}\frac{W^{(q)'}(u-\xi(u))}{W^{(q)}(u-\xi(u))}\mathrm{d}u\}}
{\exp\{\int_{x}^{y}\frac{W^{(q)'}(u-\xi(u))}{W^{(q)}(u-\xi(u))}\mathrm{d}u\}\frac{W^{(q)'}(y-\xi(y))}{W^{(q)}(y-\xi(y))}}\right|_{x}^{_{\infty}}
\nonumber\\
\hspace{-0.3cm}&&\hspace{-0.3cm}
-\int_{x}^{\infty}\frac{\left[\exp\{\int_{x}^{y}\frac{W^{(q)'}(u-\xi(u))}{W^{(q)}(u-\xi(u))}\mathrm{d}u\}\frac{W^{(q)'}(y-\xi(y))}{W^{(q)}(y-\xi(y))}\right]'\exp\{-\frac{\gamma_{2}}{1-\gamma_{2}}\int_{x}^{y}\frac{W^{(q)'}(u-\xi(u))}{W^{(q)}(u-\xi(u))}\mathrm{d}u\}}
{\left[\exp\{\int_{x}^{y}\frac{W^{(q)'}(u-\xi(u))}{W^{(q)}(u-\xi(u))}\mathrm{d}u\}\frac{W^{(q)'}(y-\xi(y))}{W^{(q)}(y-\xi(y))}\right]^{2}}\mathrm{d}y
\nonumber\\
\hspace{-0.3cm}&=&\hspace{-0.3cm}
-\int_{x}^{\infty}\exp\{-\frac{1}{1-\gamma_{2}}\int_{x}^{y}\frac{W^{(q)'}(u-\xi(u))}{W^{(q)}(u-\xi(u))}\mathrm{d}u\}
\left(1+\frac{\left[\frac{W^{(q)'}(y-\xi(y))}{W^{(q)}(y-\xi(y))}\right]'}
{\left[\frac{W^{(q)'}(y-\xi(y))}{W^{(q)}(y-\xi(y))}\right]^{2}}\right)\mathrm{d}y
\nonumber\\
\hspace{-0.3cm}&&\hspace{-0.3cm}+\frac{W^{(q)}(x-\xi(x))}{W^{(q)'}(x-\xi(x))}
\nonumber\\
\hspace{-0.3cm}&=&\hspace{-0.3cm}
\frac{W^{(q)}(x-\xi(x))}{W^{(q)'}(x-\xi(x))}-\int_{x}^{\infty}\exp\{-\frac{1}{1-\gamma_{2}}\int_{x}^{y}\frac{W^{(q)'}(u-\xi(u))}{W^{(q)}(u-\xi(u))}\mathrm{d}u\}
\nonumber\\
\hspace{-0.3cm}&&\hspace{-0.3cm}\times
\left(1+\left(\frac{W^{(q)''}(y-\xi(y))W^{(q)}(y-\xi(y))}
{\left[W^{(q)'}(y-\xi(y))\right]^{2}}-1\right)\left(1-\xi'(y)\right)\right)\mathrm{d}y
\nonumber\\
\hspace{-0.3cm}&=&\hspace{-0.3cm}
\frac{W^{(q)}(x-\xi(x))}{W^{(q)'}(x-\xi(x))}-G_2(x),\qquad x\geq0,
\end{eqnarray}
where
\begin{eqnarray}\label{condition1bia}
G_{2}(x)\hspace{-0.3cm}&=&\hspace{-0.3cm}\int_{x}^{\infty}\exp\{-\frac{1}{1-\gamma_{2}}\int_{x}^{y}\frac{W^{(q)'}(u-\xi(u))}{W^{(q)}(u-\xi(u))}\mathrm{d}u\}g(y)\mathrm{d}y,\qquad x\geq 0.
\end{eqnarray}
Here the function $g$ is defined as
\begin{eqnarray}\label{assumption1equ.}
g(x)=\xi'(x)+\left(1-\xi'(x)\right)\frac{W^{(q)''}(x-\xi(x))W^{(q)}(x-\xi(x))}
{\left(W^{(q)'}(x-\xi(x))\right)^{2}},\qquad x\geq 0.\nonumber
\end{eqnarray}
With the above alternative representation of $f(x)$, we conclude that $\frac{W^{(q)'}(x-\xi(x))}{W^{(q)}(x-\xi(x))}f(x)\leq1$ ($\forall\,x>0$) is equivalent to the following inequality,
\begin{eqnarray}\label{condition1}
G_2(x)\geq0,\qquad \forall\,x>0.
\end{eqnarray}


In order to characterize the optimal tax strategy and the optimal tax return function explicitly, we need the following Assumption \ref{assumption1}, under which the optimal tax return function and the optimal tax strategy are obtained.
As can be seen, the extension from ruin-based tax optimization problem to our general draw-down-based tax optimization problem brings in much more difficulties and complexity, especially in finding the optimal tax return function and the optimal tax strategy, while  characterizing them is very critical in addressing optimal control problems.
When our model and tax structure are specified to those in
existing literature, our results coincide with the corresponding results.

\begin{Assumption}
\label{assumption1} There exists an $x_{0}\in[0,\infty)$ such that the  function $g$
changes its sign at the point $x_{0}$. Here, we say that the function $g(x)$ changes its sign at the point $x_{0}$, if and only if $g(x_{1})g(x_{2})\leq0,\,\forall\,\,x_{1}\leq x_{0},\,x_{2}\geq x_{0}$.

\end{Assumption}

\begin{proposition}\label{verify solution1}
Suppose that Assumption \ref{assumption1} and condition (\ref{condition1}) hold true, which is quadrivalent to the combination of the following two Cases (i) and (ii).
\begin{itemize}
    \item[(i)] $x_{0}=0$, $g(x)\leq0$ for all $x\leq x_{0}$, $g(x)\geq0$ for all $x\geq x_{0}$. Or, $x_{0}=\infty$, $g(x)\geq0$ for all $x\leq x_{0}$, $g(x)\leq0$ for all $x\geq x_{0}$.
    \item[(ii)] $x_{0}\in(0,\infty)$, $g(x)\leq0$ for all $x\leq x_{0}$, $g(x)\geq0$ for all $x\geq x_{0}$, and the inequality (\ref{condition1}) holds true for $x=0$.
\end{itemize}
Let $f(x)$ be defined by (\ref{represntation of the optimal function1}), then $f(x)$ is indeed a once continuously differentiable solution to
(\ref{HJB equation}). In addition, the maximizing function is
$\gamma(x)\equiv\gamma_{2},\,\forall\,\,x\geq0$.
\end{proposition}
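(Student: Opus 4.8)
The plan is to verify, one after another, that the function $f$ in (\ref{represntation of the optimal function1}) is finite and of class $C^1$, that it solves the first-order linear ODE (\ref{4.1}), and that, owing to (\ref{condition1}), it then solves the full equation (\ref{HJB equation}) with pointwise maximizer $\gamma\equiv\gamma_2$; along the way I would also record why ``Assumption \ref{assumption1} together with (\ref{condition1})'' is exactly the disjunction of Cases (i) and (ii). Throughout set $h(x):=W^{(q)'}(x-\xi(x))/W^{(q)}(x-\xi(x))$. Because $\xi\in C^1$ with $\xi(y)<y$ the map $x\mapsto x-\xi(x)$ is continuous and strictly positive on $[0,\infty)$, and $W^{(q)}\in C^2$ is strictly positive on $(0,\infty)$, so $h$ is continuous on $[0,\infty)$ and $h\ge\Phi(q)>0$ by (\ref{W'/W}). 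From $f(x)=\tfrac{\gamma_2}{1-\gamma_2}\int_x^\infty\exp\{-\tfrac{1}{1-\gamma_2}\int_x^y h(u)\,\mathrm{d}u\}\,\mathrm{d}y$ the bound $h\ge\Phi(q)$ makes the inner integrand at most $\exp\{-\tfrac{\Phi(q)}{1-\gamma_2}(y-x)\}$, so $f$ is finite (consistent with (\ref{upper bound for the optimal tax return function})); differentiating under the integral sign gives $f'(x)=-\tfrac{\gamma_2}{1-\gamma_2}+\tfrac{h(x)}{1-\gamma_2}f(x)$, which is continuous, hence $f\in C^1([0,\infty))$, and which rearranges to exactly (\ref{4.1}), namely $\gamma_2-h(x)f(x)+(1-\gamma_2)f'(x)=0$.

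For the equivalence I would differentiate $G_2$ in (\ref{condition1bia}) to obtain the linear relation $G_2'(x)=\tfrac{h(x)}{1-\gamma_2}G_2(x)-g(x)$, so that the rescaled function $\Psi(x):=G_2(x)\exp\{-\tfrac{1}{1-\gamma_2}\int_{x_0}^x h(u)\,\mathrm{d}u\}$ satisfies $\Psi'(x)=-g(x)\exp\{-\tfrac{1}{1-\gamma_2}\int_{x_0}^x h(u)\,\mathrm{d}u\}$. Under Assumption \ref{assumption1}, $g$ changes sign at $x_0$, leaving two cases. If $g\le0$ on $[0,x_0]$ and $g\ge0$ on $[x_0,\infty)$, then for $x\ge x_0$ the integrand defining $G_2(x)$ is nonnegative, so $G_2(x)\ge0$ automatically, while on $[0,x_0]$ one has $\Psi'\ge0$, so $\Psi$ is nondecreasing there and $G_2\ge0$ on $[0,x_0]$ is equivalent to $\Psi(0)=G_2(0)\ge0$; thus (\ref{condition1}) is equivalent to $G_2(0)\ge0$, which is Case (ii) if $x_0\in(0,\infty)$, whereas if $x_0=0$ then $g\ge0$ on $[0,\infty)$ and $G_2(0)\ge0$ holds by itself, which is Case (i). If instead $g\ge0$ on $[0,x_0]$ and $g\le0$ on $[x_0,\infty)$ with $x_0<\infty$, then $G_2(x)\le0$ for all $x\ge x_0$, so (\ref{condition1}) forces $G_2\equiv0$ and thus $g\equiv0$ on $[x_0,\infty)$, i.e. $g\ge0$ throughout, which is again Case (i) (take $x_0=\infty$). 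The converse implications follow from the same computations, yielding the claimed equivalence.

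It remains to pass from (\ref{4.1}) to (\ref{HJB equation}). By the integration-by-parts identity (\ref{fgamma2}) one has $h(x)f(x)=1-h(x)G_2(x)$, so (\ref{condition1}) is equivalent to $c(x):=h(x)f(x)\le1$ for all $x>0$ (and at $x=0$ by continuity, or by the explicit hypothesis in Case (ii)). For fixed $x$, $\tfrac{\partial}{\partial\gamma}\big[\tfrac{\gamma}{1-\gamma}-\tfrac{c(x)}{1-\gamma}+f'(x)\big]=\tfrac{1-c(x)}{(1-\gamma)^2}\ge0$, so the bracket is nondecreasing in $\gamma$ on $[\gamma_1,\gamma_2]$, hence the supremum in (\ref{HJB equation}) is attained at $\gamma=\gamma_2$, which is the maximizing function; its value there equals $\tfrac{\gamma_2-c(x)}{1-\gamma_2}+f'(x)=\tfrac{1}{1-\gamma_2}\big(\gamma_2-h(x)f(x)+(1-\gamma_2)f'(x)\big)=0$ by (\ref{4.1}). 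Therefore $f$ is a once continuously differentiable solution of (\ref{HJB equation}) and the maximizing function is $\gamma(x)\equiv\gamma_2$ for all $x\ge0$.

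I expect the only genuinely delicate point to be the equivalence step: one must observe that the linear ODE for $G_2$ renders $\Psi$ monotone on $[0,x_0]$ precisely when $g$ has the right sign pattern, so that nonnegativity of $G_2$ propagates from $x=0$ to all of $[0,x_0]$, and one must dispatch with care the degenerate sub-cases $x_0\in\{0,\infty\}$ and $g$ vanishing on a tail. The remaining steps are routine once $f'$ has been computed and (\ref{4.1}) is in hand, reducing to the elementary monotonicity of $\gamma\mapsto(\gamma-c)/(1-\gamma)$ on $[\gamma_1,\gamma_2]$ when $c\le1$.
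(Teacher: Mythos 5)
Your proposal is correct and follows essentially the same route as the paper, whose proof is exactly the derivation preceding Assumption \ref{assumption1}: verify that (\ref{represntation of the optimal function1}) solves the linear equation (\ref{4.1}), use the integration-by-parts identity (\ref{fgamma2}) to see that (\ref{condition1}) is equivalent to $\frac{W^{(q)'}(x-\xi(x))}{W^{(q)}(x-\xi(x))}f(x)\leq1$, and then conclude by monotonicity in $\gamma$ that the supremum in (\ref{HJB equation}) is attained at $\gamma_{2}$ with value zero. The only material you add is the explicit $G_{2}'$-based monotonicity argument identifying "Assumption \ref{assumption1} plus (\ref{condition1})" with Cases (i)--(ii), a point the paper asserts without detail; your treatment of it is sound.
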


\begin{proof}
 It is already proved, see the arguments before Assumption \ref{assumption1}.
\end{proof}

\vspace{0.3cm}
We proceed to characterize the optimal solution for (\ref{HJB equation}). This time we guess that the solution $f(x)$ of equation (\ref{HJB equation}) satisfies $\frac{W^{(q)'}(x-\xi(x))}{W^{(q)}(x-\xi(x))}f(x)\geq1$ for all $x$. Then, for all $x\geq0$, the function $f$ satisfies
\begin{eqnarray}\label{4.8}
0
=\gamma_{1}-\frac{W^{(q)'}(x-\xi(x))}{W^{(q)}(x-\xi(x))}f(x)+(1-\gamma_{1})f'(x).
\end{eqnarray}
Solving (\ref{4.8}) by similar arguments as in solving (\ref{4.1}), we obtain  the following solution,
\begin{eqnarray}\label{represntation of the optimal function2}
f(x)
\hspace{-0.3cm}&=&\hspace{-0.3cm}\frac{\gamma_{1}}{1-\gamma_{1}}\int_{x}^{\infty}\exp\{-\frac{1}{1-\gamma_{1}}\int_{x}^{y}\frac{W^{(q)'}(u-\xi(u))}{W^{(q)}(u-\xi(u))}\mathrm{d}u\}\mathrm{d}y
\nonumber\\
\hspace{-0.3cm}&=&\hspace{-0.3cm}
-\int_{x}^{\infty}\exp\{-\frac{1}{1-\gamma_{1}}\int_{x}^{y}\frac{W^{(q)'}(u-\xi(u))}{W^{(q)}(u-\xi(u))}\mathrm{d}u\}
\left(1+\frac{\left[\frac{W^{(q)'}(y-\xi(y))}{W^{(q)}(y-\xi(y))}\right]'}
{\left[\frac{W^{(q)'}(y-\xi(y))}{W^{(q)}(y-\xi(y))}\right]^{2}}\right)\mathrm{d}y
\nonumber\\
\hspace{-0.3cm}&&\hspace{-0.3cm}+\frac{W^{(q)}(x-\xi(x))}{W^{(q)'}(x-\xi(x))}
\nonumber\\
\hspace{-0.3cm}&=&\hspace{-0.3cm}
\frac{W^{(q)}(x-\xi(x))}{W^{(q)'}(x-\xi(x))}-G_1(x),\qquad x\geq0,
\end{eqnarray}
where
\begin{eqnarray}\label{condition2bia}
G_{1}(x)\hspace{-0.3cm}&=&\hspace{-0.3cm}\int_{x}^{\infty}\exp\{-\frac{1}{1-\gamma_{1}}\int_{x}^{y}\frac{W^{(q)'}(u-\xi(u))}{W^{(q)}(u-\xi(u))}\mathrm{d}u\}g(y)\mathrm{d}y.
\end{eqnarray}
By (\ref{represntation of the optimal function2}) it is obvious that
$$\frac{W^{(q)'}(x-\xi(x))}{W^{(q)}(x-\xi(x))}f(x)\geq1,\qquad \forall\,\,x>0,$$
is equivalent to
\begin{eqnarray}\label{condition2}
G_1(x)\leq0,\qquad \forall\,\,x>0.
\end{eqnarray}

\vspace{0.2cm}
\begin{proposition}\label{verify solution2}
Suppose that Assumption \ref{assumption1} and condition (\ref{condition2}) hold true, which is quadrivalent to the combination of the following two cases:
\begin{itemize}
    \item[(iii)] $x_{0}=0$, $g(x)\geq0$ for all $x\leq x_{0}$, $g(x)\leq0$ for all $x\geq x_{0}$. Or, $x_{0}=\infty$, $g(x)\leq0$ for all $x\leq x_{0}$, $g(x)\geq0$ for all $x\geq x_{0}$.
    \item[(iv)] $x_{0}\in(0,\infty)$, $g(x)\geq0$ for all $x\leq x_{0}$, $g(x)\leq0$ for all $x\geq x_{0}$, and the inequality (\ref{condition2}) holds true for $x=0$.
\end{itemize} Let $f(x)$ be defined by (\ref{represntation of the optimal function2}), then $f(x)$ is indeed a once continuously differentiable solution to
(\ref{HJB equation}). In addition, the maximizing function is
$\gamma(x)\equiv\gamma_{1},\,\forall\,\,x\geq0$.
\end{proposition}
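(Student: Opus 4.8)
The plan is to run the argument in exact parallel with the derivation of \eqref{represntation of the optimal function1}--\eqref{condition1} and Proposition \ref{verify solution1}, but with all inequalities reversed and $\gamma_{2}$ replaced by $\gamma_{1}$; in fact most of the work is already done in the paragraphs preceding \eqref{condition2}. Recall that the candidate $f$ in \eqref{represntation of the optimal function2} was manufactured by solving the linear first-order ODE \eqref{4.8}, which is precisely the form \eqref{HJB equation} takes once one knows that $\frac{W^{(q)'}(x-\xi(x))}{W^{(q)}(x-\xi(x))}f(x)\geq1$: under that sign condition the bracket in \eqref{HJB equation}, viewed for fixed $x$ as a function of $\gamma$, has $\gamma$-derivative $\frac{1-\frac{W^{(q)'}(x-\xi(x))}{W^{(q)}(x-\xi(x))}f(x)}{(1-\gamma)^{2}}\leq0$, hence is nonincreasing in $\gamma$, so its supremum over $[\gamma_{1},\gamma_{2}]$ is attained at $\gamma=\gamma_{1}$; substituting $\gamma=\gamma_{1}$ turns \eqref{HJB equation} into \eqref{4.8}, which $f$ satisfies by construction, so the supremum equals $0$ and the maximizing function is $\gamma(x)\equiv\gamma_{1}$. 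Once continuous differentiability of $f$ is read off from the representation $f(x)=\frac{W^{(q)}(x-\xi(x))}{W^{(q)'}(x-\xi(x))}-G_{1}(x)$ in \eqref{represntation of the optimal function2}: the first term is $C^{1}$ by the standing $C^{2}$-regularity of the scale functions and the $C^{1}$-regularity of $\xi$, while $G_{1}$ is differentiable with continuous derivative. Consequently the whole proposition reduces to verifying the sign condition $\frac{W^{(q)'}(x-\xi(x))}{W^{(q)}(x-\xi(x))}f(x)\geq1$ on $(0,\infty)$, which is exactly $G_{1}(x)\leq0$ for all $x>0$, i.e. condition \eqref{condition2}.

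So the genuine task is to deduce \eqref{condition2} from Assumption \ref{assumption1} together with the sign information packaged in cases (iii) and (iv). In case (iii), both listed alternatives force $g\leq0$ on the whole of $[0,\infty)$, and since the exponential kernel in \eqref{condition2bia} is strictly positive this gives $G_{1}(x)\leq0$ for every $x\geq0$ at once. In case (iv) the same pointwise remark settles the range $x\geq x_{0}$, because there the integration in \eqref{condition2bia} only sees the region $\{g\leq0\}$. The one genuinely delicate range --- and what I expect to be the main obstacle --- is $x\in(0,x_{0})$: there $g$ carries the ``wrong'' sign ($g\geq0$), so a pointwise estimate is useless, and one must instead cash in the global datum $G_{1}(0)\leq0$ supplied by the hypothesis of case (iv).

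For that range the plan is the usual integrating-factor manoeuvre. Differentiating \eqref{condition2bia} gives $G_{1}'(x)=\frac{1}{1-\gamma_{1}}\frac{W^{(q)'}(x-\xi(x))}{W^{(q)}(x-\xi(x))}G_{1}(x)-g(x)$, so that with $\mu(x):=\exp\bigl\{-\frac{1}{1-\gamma_{1}}\int_{0}^{x}\frac{W^{(q)'}(u-\xi(u))}{W^{(q)}(u-\xi(u))}\mathrm{d}u\bigr\}$ one obtains $(\mu G_{1})'(x)=-\mu(x)g(x)$, hence $\mu(x)G_{1}(x)=G_{1}(0)-\int_{0}^{x}\mu(u)g(u)\,\mathrm{d}u$ for $x\in(0,x_{0})$. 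On $(0,x_{0})$ the integrand $\mu g$ is nonnegative, so $\mu(x)G_{1}(x)\leq G_{1}(0)\leq0$, and since $\mu(x)>0$ we conclude $G_{1}(x)\leq0$ there as well. This establishes \eqref{condition2} and finishes the proof; the asserted ``quadrivalence'' between ``Assumption \ref{assumption1} and \eqref{condition2}'' and ``(iii) or (iv)'' is then merely the bookkeeping of the two possible sign transitions of $g$ at $x_{0}$, combined with the implications just proved, exactly as for Proposition \ref{verify solution1}.
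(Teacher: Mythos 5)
Your proposal is correct, and its core is the same argument the paper relies on: the paper's ``proof'' of Proposition \ref{verify solution2} simply points back to the derivation between Propositions \ref{verify solution1} and \ref{verify solution2}, namely that under the guess $\frac{W^{(q)'}(x-\xi(x))}{W^{(q)}(x-\xi(x))}f(x)\geq1$ the supremum in (\ref{HJB equation}) is attained at $\gamma_{1}$ (your computation of the $\gamma$-derivative $\frac{1-\frac{W^{(q)'}}{W^{(q)}}f}{(1-\gamma)^{2}}$ makes this explicit), that (\ref{represntation of the optimal function2}) solves (\ref{4.8}) by construction and is $C^{1}$, and that the sign guess is equivalent to (\ref{condition2}), i.e.\ $G_{1}\leq0$. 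Where you go beyond the paper is in actually proving the implication ``(iii) or (iv) $\Rightarrow$ (\ref{condition2})'': the pointwise estimate for $x\geq x_{0}$ and, for the delicate range $x\in(0,x_{0})$ in case (iv), the integrating-factor identity $(\mu G_{1})'(x)=-\mu(x)g(x)$ obtained from $G_{1}'(x)=\frac{1}{1-\gamma_{1}}\frac{W^{(q)'}(x-\xi(x))}{W^{(q)}(x-\xi(x))}G_{1}(x)-g(x)$, which converts the hypothesis $G_{1}(0)\leq0$ into $G_{1}\leq0$ on $(0,x_{0})$. The paper merely asserts this ``quadrivalence'' and takes (\ref{condition2}) itself as the operative hypothesis, so your argument fills a step the authors leave unwritten; this is a genuine (and welcome) addition rather than a different route. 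The only loose end is the converse direction of the claimed equivalence (that Assumption \ref{assumption1} plus (\ref{condition2}) forces the sign pattern of (iii)/(iv)), which you dismiss as bookkeeping; it is not needed for the conclusion, but as stated it has degenerate cases (e.g.\ $g\geq0$ near infinity together with $G_{1}\leq0$ forces $g\equiv0$ there), so if you want to claim the full equivalence you should either note these or restrict to the implication you actually use.
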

\begin{proof}
It is proven in the arguments between  Propositions \ref{verify solution1} and \ref{verify solution2}.
\end{proof}

\vspace{0.3cm}
Aside from cases (i)-(iv), we also consider the following two cases.
\begin{itemize}
  \item[$(v)$]  $x_{0}\in(0,\infty)$, $g(x)\leq0$ for all $x\leq x_{0}$, $g(x)\geq0$ for all $x\geq x_{0}$,
  and there exists $\bar{x}>0$ such that (\ref{condition1}) does not hold true anymore, i.e., $G_{2}(\bar{x})<0$.
 \item[$(vi)$] $x_{0}\in(0,\infty)$, $g(x)\geq0$ for all $x\leq x_{0}$, $g(x)\leq0$ for all $x\geq x_{0}$,
and there exists $\bar{\bar{x}}>0$ such that (\ref{condition2}) does not hold true anymore, i.e., $G_{1}(\bar{\bar{x}})>0$.
\end{itemize}

In case $(v)$, we must have $\bar{x}<x_{0}$. By the definition of $x_{0}$ we know that $G_{2}(x_{0})\geq0$.
Then from the intermediate value theorem for continuous function $G_{2}(x)$ we claim that, there must exist some $x\in(\bar{x},x_{0}]$ such that $G_{2}(x)=0$. Let,
\begin{eqnarray}\label{x1}
x_{1}=\inf\{x\in(\bar{x},x_{0}]\mid G_{2}(x)=0\}.
\end{eqnarray}
Then, $G_{2}(x_{1})=0$, $G_{2}(x)<0$ for all $0<x<x_{1}$, and $G_{2}(x)\geq0$ for all $x\in[x_{1},\infty)$.

Hence, this time we guess that the solution $f(x)$ of equation (\ref{HJB equation}) satisfies $\frac{W^{(q)'}(x-\xi(x))}{W^{(q)}(x-\xi(x))}f(x)\geq1$ for all $x\in(0,x_{1})$, and satisfies $\frac{W^{(q)'}(x-\xi(x))}{W^{(q)}(x-\xi(x))}f(x)\leq1$ for all $x\in[x_{1},\infty)$. That is to say, $f(x)$ satisfies the following system of differential equations,
\begin{eqnarray}\label{4.12}
0\hspace{-0.3cm}&=&\hspace{-0.3cm}\gamma_{1}-\frac{W^{(q)'}(x-\xi(x))}{W^{(q)}(x-\xi(x))}f(x)+(1-\gamma_{1})f'(x),\qquad \forall\,\,x\in(0,x_{1}),
\nonumber\\
0\hspace{-0.3cm}&=&\hspace{-0.3cm}\gamma_{2}-\frac{W^{(q)'}(x-\xi(x))}{W^{(q)}(x-\xi(x))}f(x)+(1-\gamma_{2})f'(x),\qquad \forall\,\,x\in[x_{1},\infty).
\end{eqnarray}
Solving the above system of equations yields,
\begin{eqnarray}
f(x)\hspace{-0.3cm}&=&\hspace{-0.3cm}
\left(C-\frac{\gamma_{1}}{1-\gamma_{1}}\int_{0}^{x}\exp\{-\frac{1}{1-\gamma_{1}}\int_{0}^{y}\frac{W^{(q)'}(u-\xi(u))}{W^{(q)}(u-\xi(u))}\mathrm{d}u\}\mathrm{d}y\right)
\nonumber\\
\hspace{-0.3cm}&&\hspace{-0.3cm}\times\exp\{\frac{1}{1-\gamma_{1}}\int_{0}^{x}\frac{W^{(q)'}(y-\xi(y))}{W^{(q)}(y-\xi(y))}\mathrm{d}y\},\,\,\forall\,\,x\in(0,x_{1}),
\nonumber\\
f(x)\hspace{-0.3cm}&=&\hspace{-0.3cm}\frac{\gamma_{2}}{1-\gamma_{2}}\int_{x}^{\infty}\exp\{-\frac{1}{1-\gamma_{2}}\int_{x}^{y}\frac{W^{(q)'}(u-\xi(u))}{W^{(q)}(u-\xi(u))}\mathrm{d}u\}\mathrm{d}y,
\qquad \forall\,\,x\in[x_{1},\infty).\nonumber
\end{eqnarray}
Using the continuity condition $f(x_{1}-)=f(x_{1}+)=f(x_{1})$ we can determine the constant $C$ as
\begin{eqnarray}
C\hspace{-0.3cm}&=&\hspace{-0.3cm}\frac{\gamma_{1}}{1-\gamma_{1}}\int_{0}^{x_{1}}\exp\{-\frac{1}{1-\gamma_{1}}\int_{0}^{y}\frac{W^{(q)'}(u-\xi(u))}{W^{(q)}(u-\xi(u))}\mathrm{d}u\}\mathrm{d}y
+\exp\{\frac{-1}{1-\gamma_{1}}\int_{0}^{x_{1}}\frac{W^{(q)'}(y-\xi(y))}{W^{(q)}(y-\xi(y))}\mathrm{d}y\}\nonumber\\
\hspace{-0.3cm}&&\hspace{-0.3cm}
\times\frac{\gamma_{2}}{1-\gamma_{2}}\int_{x_{1}}^{\infty}\exp\{\frac{-1}{1-\gamma_{2}}\int_{x_{1}}^{y}\frac{W^{(q)'}(u-\xi(u))}{W^{(q)}(u-\xi(u))}\mathrm{d}u\}\mathrm{d}y.\nonumber
\end{eqnarray}
Therefore we can write down the solution to the set of equation (\ref{4.12}) as follows,
\begin{eqnarray}\label{4.13}
f(x)\hspace{-0.3cm}&=&\hspace{-0.3cm}
\frac{\gamma_{1}}{1-\gamma_{1}}\int_{x}^{x_{1}}\exp\{\frac{-1}{1-\gamma_{1}}\int_{x}^{y}\frac{W^{(q)'}(u-\xi(u))}{W^{(q)}(u-\xi(u))}\mathrm{d}u\}\mathrm{d}y
+\exp\{\frac{-1}{1-\gamma_{1}}\int_{x}^{x_{1}}\frac{W^{(q)'}(y-\xi(y))}{W^{(q)}(y-\xi(y))}\mathrm{d}y\}\nonumber\\
\hspace{-0.3cm}&&\hspace{-0.3cm}
\times\frac{\gamma_{2}}{1-\gamma_{2}}\int_{x_{1}}^{\infty}\exp\{\frac{-1}{1-\gamma_{2}}\int_{x_{1}}^{y}\frac{W^{(q)'}(u-\xi(u))}{W^{(q)}(u-\xi(u))}\mathrm{d}u\}\mathrm{d}y,\qquad \forall\,\,x\in(0,x_{1}),
\nonumber\\
f(x)\hspace{-0.3cm}&=&\hspace{-0.3cm}\frac{\gamma_{2}}{1-\gamma_{2}}\int_{x}^{\infty}\exp\{-\frac{1}{1-\gamma_{2}}\int_{x}^{y}\frac{W^{(q)'}(u-\xi(u))}{W^{(q)}(u-\xi(u))}\mathrm{d}u\}\mathrm{d}y,
\qquad \forall\,\,x\in[x_{1},\infty).
\end{eqnarray}
We need only to further prove that the function given by (\ref{4.13}) satisfies $\frac{W^{(q)'}(x-\xi(x))}{W^{(q)}(x-\xi(x))}f(x)\geq1$ for all $x\in(0,x_{1})$ to guarantee itself a solution to the HJB equation (\ref{HJB equation}). By some algebraic manipulations we get, for $x\in(0,x_{1})$,
\begin{eqnarray}\label{4.14}
\hspace{-0.3cm}
f(x)\hspace{-0.3cm}&=&\hspace{-0.3cm}\int_{x}^{x_{1}}\mathrm{d}\left(-\frac{\exp\{-\frac{\gamma_{1}}{1-\gamma_{1}}\int_{x}^{y}\frac{W^{(q)'}(u-\xi(u))}{W^{(q)}(u-\xi(u))}\mathrm{d}u\}}
{\exp\{\int_{x}^{y}\frac{W^{(q)'}(u-\xi(u))}{W^{(q)}(u-\xi(u))}\mathrm{d}u\}\frac{W^{(q)'}(y-\xi(y))}{W^{(q)}(y-\xi(y))}}\right)\nonumber\\
\hspace{-0.3cm}&&\hspace{-0.3cm}
-\frac{\left[\exp\{\int_{x}^{y}\frac{W^{(q)'}(u-\xi(u))}{W^{(q)}(u-\xi(u))}\mathrm{d}u\}\frac{W^{(q)'}(y-\xi(y))}{W^{(q)}(y-\xi(y))}\right]'\exp\{-\frac{\gamma_{1}}{1-\gamma_{1}}\int_{x}^{y}\frac{W^{(q)'}(u-\xi(u))}{W^{(q)}(u-\xi(u))}\mathrm{d}u\}}
{\left[\exp\{\int_{x}^{y}\frac{W^{(q)'}(u-\xi(u))}{W^{(q)}(u-\xi(u))}\mathrm{d}u\}\frac{W^{(q)'}(y-\xi(y))}{W^{(q)}(y-\xi(y))}\right]^{2}}\mathrm{d}y
\nonumber\\
\hspace{-0.3cm}&&\hspace{-0.3cm}
+\exp\{\frac{-1}{1-\gamma_{1}}\int_{x}^{x_{1}}\frac{W^{(q)'}(y-\xi(y))}{W^{(q)}(y-\xi(y))}\mathrm{d}y\}
\nonumber\\
\hspace{-0.3cm}&&\hspace{-0.3cm}\times
\int_{x_{1}}^{\infty}\mathrm{d}\left(-\frac{\exp\{-\frac{\gamma_{2}}{1-\gamma_{2}}\int_{x}^{y}\frac{W^{(q)'}(u-\xi(u))}{W^{(q)}(u-\xi(u))}\mathrm{d}u\}}
{\exp\{\int_{x}^{y}\frac{W^{(q)'}(u-\xi(u))}{W^{(q)}(u-\xi(u))}\mathrm{d}u\}\frac{W^{(q)'}(y-\xi(y))}{W^{(q)}(y-\xi(y))}}\right)\nonumber\\
\hspace{-0.3cm}&&\hspace{-0.3cm}
-\frac{\left[\exp\{\int_{x}^{y}\frac{W^{(q)'}(u-\xi(u))}{W^{(q)}(u-\xi(u))}\mathrm{d}u\}\frac{W^{(q)'}(y-\xi(y))}{W^{(q)}(y-\xi(y))}\right]'\exp\{-\frac{\gamma_{2}}{1-\gamma_{2}}\int_{x}^{y}\frac{W^{(q)'}(u-\xi(u))}{W^{(q)}(u-\xi(u))}\mathrm{d}u\}}
{\left[\exp\{\int_{x}^{y}\frac{W^{(q)'}(u-\xi(u))}{W^{(q)}(u-\xi(u))}\mathrm{d}u\}\frac{W^{(q)'}(y-\xi(y))}{W^{(q)}(y-\xi(y))}\right]^{2}}\mathrm{d}y
\nonumber
\\
\hspace{-0.3cm}&=&\hspace{-0.3cm}
\frac{W^{(q)}(x-\xi(x))}{W^{(q)'}(x-\xi(x))}
-\frac{\exp\{-\frac{1}{1-\gamma_{1}}\int_{x}^{x_{1}}\frac{W^{(q)'}(u-\xi(u))}{W^{(q)}(u-\xi(u))}\mathrm{d}u\}}
{\frac{W^{(q)'}(x_{1}-\xi(x_{1}))}{W^{(q)}(x_{1}-\xi(x_{1}))}}
\nonumber\\
\hspace{-0.3cm}&&\hspace{-0.3cm}
-\int_{x}^{x_{1}}
\exp\{-\frac{1}{1-\gamma_{1}}\int_{x}^{y}\frac{W^{(q)'}(u-\xi(u))}{W^{(q)}(u-\xi(u))}\mathrm{d}u\}
\nonumber\\
\hspace{-0.3cm}&&\hspace{-0.3cm}\times
\left(\xi'(y)+\frac{W^{(q)''}(y-\xi(y))W^{(q)}(y-\xi(y))}
{\left[W^{(q)'}(y-\xi(y))\right]^{2}}\left(1-\xi'(y)\right)\right)\mathrm{d}y
\nonumber\\
\hspace{-0.3cm}&&\hspace{-0.3cm}
+\exp\{\frac{-1}{1-\gamma_{1}}\int_{x}^{x_{1}}\frac{W^{(q)'}(y-\xi(y))}{W^{(q)}(y-\xi(y))}\mathrm{d}y\}
\nonumber\\
\hspace{-0.3cm}&&\hspace{-0.3cm}\times
\left(\frac{W^{(q)}(x_{1}-\xi(x_{1}))}{W^{(q)'}(x_{1}-\xi(x_{1}))}
-\int_{x_{1}}^{\infty}
\exp\{-\frac{1}{1-\gamma_{2}}\int_{x}^{y}\frac{W^{(q)'}(u-\xi(u))}{W^{(q)}(u-\xi(u))}\mathrm{d}u\}\right.
\nonumber\\
\hspace{-0.3cm}&&\hspace{-0.3cm}\times
\left.\left(\xi'(y)+\frac{W^{(q)''}(y-\xi(y))W^{(q)}(y-\xi(y))}
{\left[W^{(q)'}(y-\xi(y))\right]^{2}}\left(1-\xi'(y)\right)\right)\mathrm{d}y\right)
\nonumber
\\
\hspace{-0.3cm}&=&\hspace{-0.3cm}
\frac{W^{(q)}(x-\xi(x))}{W^{(q)'}(x-\xi(x))}
-\int_{x}^{x_{1}}
\exp\{-\frac{1}{1-\gamma_{1}}\int_{x}^{y}\frac{W^{(q)'}(u-\xi(u))}{W^{(q)}(u-\xi(u))}\mathrm{d}u\}
\nonumber\\
\hspace{-0.3cm}&&\hspace{-0.3cm}\times
\left(\xi'(y)+\frac{W^{(q)''}(y-\xi(y))W^{(q)}(y-\xi(y))}
{\left[W^{(q)'}(y-\xi(y))\right]^{2}}\left(1-\xi'(y)\right)\right)\mathrm{d}y
\nonumber\\
\hspace{-0.3cm}&&\hspace{-0.3cm}
-\exp\{\frac{-1}{1-\gamma_{1}}\int_{x}^{x_{1}}\frac{W^{(q)'}(y-\xi(y))}{W^{(q)}(y-\xi(y))}\mathrm{d}y\}
\int_{x_{1}}^{\infty}
\exp\{-\frac{1}{1-\gamma_{2}}\int_{x}^{y}\frac{W^{(q)'}(u-\xi(u))}{W^{(q)}(u-\xi(u))}\mathrm{d}u\}
\nonumber\\
\hspace{-0.3cm}&&\hspace{-0.3cm}\times
\left(\xi'(y)+\frac{W^{(q)''}(y-\xi(y))W^{(q)}(y-\xi(y))}
{\left[W^{(q)'}(y-\xi(y))\right]^{2}}\left(1-\xi'(y)\right)\right)\mathrm{d}y
\nonumber
\\
\hspace{-0.3cm}&=&\hspace{-0.3cm}
\frac{W^{(q)}(x-\xi(x))}{W^{(q)'}(x-\xi(x))}
-\int_{x}^{x_{1}}
\exp\{-\frac{1}{1-\gamma_{1}}\int_{x}^{y}\frac{W^{(q)'}(u-\xi(u))}{W^{(q)}(u-\xi(u))}\mathrm{d}u\}g(y)\mathrm{d}y
\nonumber\\
\hspace{-0.3cm}&&\hspace{-0.3cm}
-\exp\{\frac{-1}{1-\gamma_{1}}\int_{x}^{x_{1}}\frac{W^{(q)'}(y-\xi(y))}{W^{(q)}(y-\xi(y))}\mathrm{d}y\}
G_{2}(x_{1})\nonumber
\nonumber
\\
\hspace{-0.3cm}&\geq&\hspace{-0.3cm}\frac{W^{(q)}(x-\xi(x))}{W^{(q)'}(x-\xi(x))},
\end{eqnarray}
since $g(y)\leq0$ for $y\in[x,x_{1}]\subseteq[0,x_{0}]$ and $G_{2}(x_{1})=0$. Inequality (\ref{4.14}) reveals that $\frac{W^{(q)'}(x-\xi(x))}{W^{(q)}(x-\xi(x))}f(x)\geq1$ for all $x\in(0,x_{1})$.

\begin{proposition}\label{verify solution3}
In case (v), the once continuously differentiable solution to Equation (\ref{HJB equation}) (optimal tax function) $f(x)$ is defined by (\ref{4.13}) with $x_{1}$ determined by (\ref{x1}) and $x_{0}$ given by Assumption \ref{assumption1}. In addition, the maximizing function is
$\gamma(x)=\gamma_{1},\,\forall\,\,x<x_{1}$, and $\gamma(x)=\gamma_{2},\,\forall\,\,x\geq x_{1}$.

In case (vi), the once continuously differentiable solution to Equation (\ref{HJB equation}) (optimal tax function) $f(x)$ is defined by,
\begin{eqnarray}\label{4.15}
f(x)\hspace{-0.3cm}&=&\hspace{-0.3cm}
\frac{\gamma_{2}}{1-\gamma_{2}}\int_{x}^{x_{2}}\exp\{\frac{-1}{1-\gamma_{2}}\int_{x}^{y}\frac{W^{(q)'}(u-\xi(u))}{W^{(q)}(u-\xi(u))}\mathrm{d}u\}\mathrm{d}y
+\exp\{\frac{-1}{1-\gamma_{2}}\int_{x}^{x_{2}}\frac{W^{(q)'}(y-\xi(y))}{W^{(q)}(y-\xi(y))}\mathrm{d}y\}\nonumber\\
\hspace{-0.3cm}&&\hspace{-0.3cm}
\times\frac{\gamma_{1}}{1-\gamma_{1}}\int_{x_{2}}^{\infty}\exp\{\frac{-1}{1-\gamma_{2}}\int_{x_{2}}^{y}\frac{W^{(q)'}(u-\xi(u))}{W^{(q)}(u-\xi(u))}\mathrm{d}u\}\mathrm{d}y,\qquad \forall\,\,x\in(0,x_{2}),
\nonumber\\
f(x)\hspace{-0.3cm}&=&\hspace{-0.3cm}\frac{\gamma_{1}}{1-\gamma_{1}}\int_{x}^{\infty}\exp\{-\frac{1}{1-\gamma_{1}}\int_{x}^{y}\frac{W^{(q)'}(u-\xi(u))}{W^{(q)}(u-\xi(u))}\mathrm{d}u\}\mathrm{d}y,
\qquad \forall\,\,x\in[x_{2},\infty).
\end{eqnarray}
with $x_{2}$ determined by,
\begin{eqnarray}\label{x2}
x_{2}=\inf\{x\in(\bar{\bar{x}},x_{0}]\mid G_{1}(x)=0\},
\end{eqnarray}
and $x_{0}$ given by Assumption \ref{assumption1}. In addition, the maximizing function is
$\gamma(x)=\gamma_{2},\,\forall\,\,x<x_{2}$, and $\gamma(x)=\gamma_{1},\,\forall\,\,x\geq x_{2}$.
\end{proposition}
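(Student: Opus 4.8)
The plan is to prove both cases by the three–step scheme already used in Section 4: guess the sign of $\frac{W^{(q)'}(x-\xi(x))}{W^{(q)}(x-\xi(x))}f(x)-1$ on the two sides of a free boundary, solve the resulting two–piece first–order linear ODE obtained by dropping the $\sup$ from (\ref{HJB equation}), and then verify a posteriori, by integration by parts, that the constructed $f$ does have the guessed sign. For case (v) most of the work is already in place: the display (\ref{4.14}) gives $\frac{W^{(q)'}(x-\xi(x))}{W^{(q)}(x-\xi(x))}f(x)\geq1$ on $(0,x_{1})$, while $G_{2}(x_{1})=0$, $G_{2}<0$ on $(0,x_{1})$ and $G_{2}\geq0$ on $[x_{1},\infty)$ have already been recorded. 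It then remains to check three things. (a) On $[x_{1},\infty)$ the second line of (\ref{4.13}) is exactly the function of (\ref{represntation of the optimal function1}), so by the computation (\ref{fgamma2}) it equals $\frac{W^{(q)}(x-\xi(x))}{W^{(q)'}(x-\xi(x))}-G_{2}(x)$, hence $\frac{W^{(q)'}(x-\xi(x))}{W^{(q)}(x-\xi(x))}f(x)\leq1$ there because $G_{2}\geq0$. (b) $f\in C^{1}$: the constant $C$ was chosen so that $f(x_{1}-)=f(x_{1}+)$, and $G_{2}(x_{1})=0$ forces $\frac{W^{(q)'}(x_{1}-\xi(x_{1}))}{W^{(q)}(x_{1}-\xi(x_{1}))}f(x_{1})=1$, which on substitution into each line of (\ref{4.12}) yields $f'(x_{1}-)=f'(x_{1}+)=1$. (c) Wherever $\frac{W^{(q)'}(x-\xi(x))}{W^{(q)}(x-\xi(x))}f(x)\geq1$ the bracket in (\ref{HJB equation}) is nonincreasing in $\gamma$, so its supremum is attained at $\gamma_{1}$ and (\ref{HJB equation}) reduces to the first line of (\ref{4.12}), which $f$ solves by construction; wherever $\frac{W^{(q)'}(x-\xi(x))}{W^{(q)}(x-\xi(x))}f(x)\leq1$ the supremum is attained at $\gamma_{2}$, giving the second line of (\ref{4.12}). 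Thus $f$ solves (\ref{HJB equation}), and Proposition \ref{verification pro} identifies it as the optimal tax return function, with $\gamma^{*}$ the stated maximizer ($\gamma_{1}$ on $(0,x_{1})$, $\gamma_{2}$ on $[x_{1},\infty)$).

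For case (vi) I would first show $x_{2}$ in (\ref{x2}) is well defined and determine the sign of $G_{1}$. Setting $H_{1}(x)=\exp\{-\frac{1}{1-\gamma_{1}}\int_{0}^{x}\frac{W^{(q)'}(u-\xi(u))}{W^{(q)}(u-\xi(u))}\mathrm{d}u\}$, one checks $H_{1}(x)G_{1}(x)=\int_{x}^{\infty}H_{1}(y)g(y)\mathrm{d}y$, hence $\frac{\mathrm{d}}{\mathrm{d}x}\bigl(H_{1}G_{1}\bigr)=-H_{1}g$. In case (vi), $g\geq0$ on $[0,x_{0}]$ and $g\leq0$ on $[x_{0},\infty)$, so $H_{1}G_{1}$ is nonincreasing on $[0,x_{0}]$, nondecreasing on $[x_{0},\infty)$, and tends to $0$ at infinity (since $H_{1}$ decays exponentially by (\ref{W'/W}) and $G_{1}$ is bounded); therefore $G_{1}\leq0$ on $[x_{0},\infty)$, which with $G_{1}(\bar{\bar{x}})>0$ forces $\bar{\bar{x}}<x_{0}$ and, by the intermediate value theorem applied to the continuous $G_{1}$ on $[\bar{\bar{x}},x_{0}]$, produces a zero of $G_{1}$ in $(\bar{\bar{x}},x_{0}]$. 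The monotonicity of $H_{1}G_{1}$ on $[0,x_{0}]$ then gives $G_{1}>0$ on $(0,x_{2})$ and, combined with the case $[x_{0},\infty)$, $G_{1}\leq0$ on $[x_{2},\infty)$.

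Finally I would repeat the Section 4 construction with the two regimes interchanged: guessing $\frac{W^{(q)'}(x-\xi(x))}{W^{(q)}(x-\xi(x))}f(x)\leq1$ on $(0,x_{2})$ and $\geq1$ on $[x_{2},\infty)$ turns (\ref{HJB equation}) into the $\gamma_{2}$–equation (\ref{4.1}) on $(0,x_{2})$ and the $\gamma_{1}$–equation (\ref{4.8}) on $[x_{2},\infty)$; solving these as before, using $f(\infty)<\infty$ and $f(x_{2}-)=f(x_{2}+)$ to fix the free constant, yields (\ref{4.15}), and $\frac{W^{(q)'}(x_{2}-\xi(x_{2}))}{W^{(q)}(x_{2}-\xi(x_{2}))}f(x_{2})=1$ (from $G_{1}(x_{2})=0$) gives $f'(x_{2}\pm)=1$, so $f\in C^{1}$. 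The decisive step is the analogue of (\ref{4.14}): an integration by parts on each piece of (\ref{4.15}) should rewrite, for $x\in(0,x_{2})$,
\begin{eqnarray*}
f(x)&=&\frac{W^{(q)}(x-\xi(x))}{W^{(q)'}(x-\xi(x))}-\int_{x}^{x_{2}}\exp\Big\{-\tfrac{1}{1-\gamma_{2}}\int_{x}^{y}\tfrac{W^{(q)'}(u-\xi(u))}{W^{(q)}(u-\xi(u))}\mathrm{d}u\Big\}g(y)\,\mathrm{d}y\\
&&-\,\exp\Big\{-\tfrac{1}{1-\gamma_{2}}\int_{x}^{x_{2}}\tfrac{W^{(q)'}(y-\xi(y))}{W^{(q)}(y-\xi(y))}\mathrm{d}y\Big\}G_{1}(x_{2}),
\end{eqnarray*}
and since $g\geq0$ on $[x,x_{2}]\subseteq[0,x_{0}]$ and $G_{1}(x_{2})=0$, this gives $\frac{W^{(q)'}(x-\xi(x))}{W^{(q)}(x-\xi(x))}f(x)\leq1$ on $(0,x_{2})$; on $[x_{2},\infty)$, the computation (\ref{fgamma2}) (with $\gamma_{1}$ in place of $\gamma_{2}$) gives $f(x)=\frac{W^{(q)}(x-\xi(x))}{W^{(q)'}(x-\xi(x))}-G_{1}(x)$ with $G_{1}\leq0$, so $\frac{W^{(q)'}(x-\xi(x))}{W^{(q)}(x-\xi(x))}f(x)\geq1$ there. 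With both signs confirmed, the $\sup$ in (\ref{HJB equation}) is attained at $\gamma_{2}$ on $(0,x_{2})$ and at $\gamma_{1}$ on $[x_{2},\infty)$, exactly matching the ODEs $f$ solves; hence $f$ solves (\ref{HJB equation}), and Proposition \ref{verification pro} yields its optimality and the claimed maximizing function. The main obstacle is precisely this last integration by parts: as in (\ref{4.14}), the boundary terms at $x$, at the break point $x_{2}$, and at $+\infty$ are delicate because the multiplying constant switches from $\gamma_{2}$ to $\gamma_{1}$ across $x_{2}$, and one must check that the endpoint contributions telescope exactly into the clean expression above; once that identity is secured, the remaining sign bookkeeping via the monotonicity of the $H_{i}G_{i}$ is routine.
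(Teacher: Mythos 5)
Your proposal is correct and follows essentially the same guess--solve--verify route as the paper: for case (v) it reproduces the arguments the paper itself cites just above the proposition (the sign structure of $G_{2}$, the piecewise ODE (\ref{4.12})--(\ref{4.13}), and the integration-by-parts bound (\ref{4.14})), and for case (vi) it carries out the mirrored argument that the paper dismisses as ``much similar,'' supplying the sign analysis of $G_{1}$ via the monotonicity of $H_{1}G_{1}$, the $C^{1}$ matching $f'(x_{2}\pm)=1$ from $G_{1}(x_{2})=0$, and the mirrored integration by parts. These added details are consistent with the paper's template, so no genuinely different method is involved and no gap remains.
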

\begin{proof}
 The proving arguments for Case $(v)$ are given right above this proposition. The proof of Case $(vi)$ is much similar.
 \end{proof}

\begin{remark}\label{2} Let $\xi(x)\equiv0$, then the Hamilton-Jacobi-Bellman equation (\ref{HJB equation}) coincides well with Equation (2.6) in Wang and Hu (2012).
Further, if it is assumed that each scale function is three times differentiable and its first derivative
is a strictly convex function (as was assumed in Wang and Hu (2012) and Albrecher et al. (2008b)), then
$$g(x)=\frac{W^{(q)''}(x)W^{(q)}(x)}
{\left(W^{(q)'}(x)\right)^{2}},\qquad x\geq 0,$$
will change its sign at most once. In addition, if
$g(x)$ changes its sign once at $x_{0}\in[0,\infty)$,
we must have $g(x)\leq0$ for $x\in[0,x_{0}]$, and $g(x)\geq0$ for $x\geq x_{0}$.

That is to say, only cases $(i)$, $(ii)$ and $(v)$ are possible scenarios. Combined with the last equality in (\ref{4.14}), it is obvious that (\ref{4.13}) coincides with (5.7) of Wang and Hu (2012), in case $(v)$.
Meanwhile, (\ref{represntation of the optimal function1}) coincides with (4.2) of Wang and Hu (2012), in cases $(i)$ and $(ii)$.
Indeed, when $\xi\equiv0$ we have
\begin{eqnarray}\label{}
G_{2}(x)\hspace{-0.3cm}&=&\hspace{-0.3cm}\int_{x}^{\infty}
\exp\{-\frac{1}{1-\gamma_{2}}\int_{x}^{y}\frac{W^{(q)'}(u)}{W^{(q)}(u)}\mathrm{d}u\}g(y)\mathrm{d}y
\nonumber\\
\hspace{-0.3cm}&=&\hspace{-0.3cm}
\int_{x}^{\infty}\left(\frac{W^{(q)}(x)}{W^{(q)}(y)}\right)^{\frac{1}{1-\gamma_{2}}}
\frac{W^{(q)''}(y)W^{(q)}(y)}
{\left(W^{(q)'}(y)\right)^{2}}\mathrm{d}y
\nonumber\\
\hspace{-0.3cm}&=&\hspace{-0.3cm}
\left(W^{(q)}(x)\right)^{\frac{1}{1-\gamma_{2}}}
\int_{x}^{\infty}\frac{W^{(q)''}(y)\left(W^{(q)}(y)\right)^{1-\frac{1}{1-\gamma_{2}}}}
{\left(W^{(q)'}(y)\right)^{2}}\mathrm{d}y
,\qquad x\geq 0.\nonumber
\end{eqnarray}
Hence, $x_{1}$ as defined by \eqref{x1} in case (v) coincides with $u_{0}$ defined in (5.15) of Wang and Hu (2012), which implies that (5.22) of Wang and Hu (2012) holds true with $\beta=\gamma_{2}$ and $c=q$
\begin{eqnarray}\label{v1.4.21}
\frac{W^{(q)}(x_{1})}{W^{(q)'}(x_{1})}=\frac{\gamma_{2}}{1-\gamma_{2}}
\int_{x_{1}}^{\infty}\left(\frac{W^{(q)}(x_{1})}{W^{(q)}(y)}\right)^{\frac{1}{1-\gamma_{2}}}\mathrm{d}y.
\end{eqnarray}
 Then for $x\in(0,x_{1})$, by \eqref{v1.4.21}, (\ref{4.13}) can be rewritten as
\begin{eqnarray}\label{}
f(x)\hspace{-0.3cm}&=&\hspace{-0.3cm}
\frac{\gamma_{1}}{1-\gamma_{1}}\int_{x}^{x_{1}}\exp\{\frac{-1}{1-\gamma_{1}}\int_{x}^{y}\frac{W^{(q)'}(u)}{W^{(q)}(u)}\mathrm{d}u\}\mathrm{d}y
+\exp\{\frac{-1}{1-\gamma_{1}}\int_{x}^{x_{1}}\frac{W^{(q)'}(y)}{W^{(q)}(y)}\mathrm{d}y\}\nonumber\\
\hspace{-0.3cm}&&\hspace{-0.3cm}
\times\frac{\gamma_{2}}{1-\gamma_{2}}\int_{x_{1}}^{\infty}\exp\{\frac{-1}{1-\gamma_{2}}\int_{x_{1}}^{y}\frac{W^{(q)'}(u)}{W^{(q)}(u)}\mathrm{d}u\}\mathrm{d}y,\qquad \nonumber\\
\hspace{-0.3cm}&=&\hspace{-0.3cm}
\frac{\gamma_{1}}{1-\gamma_{1}}\int_{x}^{x_{1}}
\left(\frac{W^{(q)}(x)}{W^{(q)}(y)}\right)^{\frac{1}{1-\gamma_{1}}}\mathrm{d}y
+\frac{\gamma_{2}}{1-\gamma_{2}}\left(\frac{W^{(q)}(x)}{W^{(q)}(x_{1})}\right)^{\frac{1}{1-\gamma_{1}}}
\int_{x_{1}}^{\infty}\left(\frac{W^{(q)}(x_{1})}{W^{(q)}(y)}\right)^{\frac{1}{1-\gamma_{2}}}\mathrm{d}y
\nonumber\\
\hspace{-0.3cm}&=&\hspace{-0.3cm}
\frac{\gamma_{1}}{1-\gamma_{1}}\int_{x}^{x_{1}}
\left(\frac{W^{(q)}(x)}{W^{(q)}(y)}\right)^{\frac{1}{1-\gamma_{1}}}\mathrm{d}y
+\left(\frac{W^{(q)}(x)}{W^{(q)}(x_{1})}\right)^{\frac{1}{1-\gamma_{1}}}\frac{W^{(q)}(x_{1})}{W^{(q)'}(x_{1})}
\nonumber\\
\hspace{-0.3cm}&=&\hspace{-0.3cm}
\left(W^{(q)}(x)\right)^{\frac{1}{1-\gamma_{1}}}\left(\frac{\gamma_{1}}{1-\gamma_{1}}\int_{x}^{x_{1}}
\left(W^{(q)}(y)\right)^{-\frac{1}{1-\gamma_{1}}}\mathrm{d}y
+\frac{\left(W^{(q)}(x_{1})\right)^{1-\frac{1}{1-\gamma_{1}}}}{W^{(q)'}(x_{1})}\right)
,
\end{eqnarray}
which coincides well with (5.7) in Wang and Hu (2012).
\end{remark}

\section{Examples}
We should not take it for granted that the optimal loss-carry-forward tax strategy can be found, for any general spectrally negative L\'evy process $X$ and any general draw-down function $\xi$. Actually, it is under Assumption \ref{assumption1} that the optimal tax strategy is found in Section 5.
In this section, we consider several examples of spectrally negative L\'{e}vy processes $X$ and show that Assumption \ref{assumption1} holds for  certain choices of draw-down function $\xi$.
For simplicity of notation, write $\bar{\xi}(x):=x-\xi(x)$, $W_\xi(z):=\frac{W^{(q)}(\bar{\xi}(z))} {W^{(q)'}(\bar{\xi}(z))}$ and hence $W_0(z):=\frac{W^{(q)}(z)} {W^{(q)'}(z)}$.

\begin{exa}\label{exa1}
In this example, let $X_{t}=\mu t+B_{t}$ for constant $\mu$ and Brownian motion $B$.
The $q$-scale function of $X$ is given by
\begin{eqnarray}\label{Ex222}
W^{(q)}(x)=\frac{1}{\sqrt{\mu^2+2q}}\left(\mathrm{e}^{\theta_1x}-\mathrm{e}^{\theta_2x}\right),\quad x\geq0,
\end{eqnarray}
with $\theta_1=-\mu+\sqrt{\mu^2+2q}$, $\theta_2=-(\mu+\sqrt{\mu^2+2q})$.
It can be checked that
\begin{eqnarray}\label{Wqrelation}
&&W^{(q)'}(x)=\frac{1}{\sqrt{\mu^2+2q}}\left(\theta_1\mathrm{e}^{\theta_1x}-\theta_2\mathrm{e}^{\theta_2x}\right)>0,\quad x\geq0,
\nonumber\\
&
&W^{(q)''}(x)=2\left(qW^{(q)}(x)
-\mu W^{(q)'}(x)\right),\quad x\geq0.\nonumber
\end{eqnarray}
The function $g$ can be rewritten as
\begin{align*}
g(x)&=\xi'(x)+(1-\xi'(x))\frac{W^{(q)''}(\bar{\xi}(x))W^{(q)}(\bar{\xi}(x))}{(W^{(q)'}(\bar{\xi}(x)))^2}\\
&=\xi'(x)+(1-\xi'(x))(2qW_{\xi}(x)^2-2\mu W_{\xi}(x)),
\end{align*}
from which one can deduce that
\begin{align*}
g'(x)=\xi''(x)f_1(W_{\xi}(x))+(\xi'(x)-1)W_{\xi}'(x)f_2(W_{\xi}(x)),
\end{align*}
with $f_1(x)=1-2qx^2+2\mu x$ and $f_2(x)=-4qx+2\mu$. One can find that $f_1(x)>(<)\,0$ on $[0,\frac{\mu+\sqrt{\mu^2+2q}}{2q})$ $((\frac{\mu+\sqrt{\mu^2+2q}}{2q},\infty))$, and $f_1(x)=0$ for $x=\frac{\mu+\sqrt{\mu^2+2q}}{2q}$; while $f_2(x)>(<)\,0$ on $[0,\frac{\mu}{2q})$ $((\frac{\mu}{2q},\infty))$, and $f_2(x)=0$ for $x=\frac{\mu}{2q}$.
There are two cases to be considered.
\begin{itemize}
\item[(1)]
$\mu>0$, and hence $0<\frac{\mu}{2q}<\frac{\mu+\sqrt{\mu^2+2q}}{2q}$.

First, choose $\xi$ in such a way that
\begin{itemize}
\item[$\circ$]
$\xi(0)=0, \lim\limits_{x\rightarrow\infty}\bar{\xi}(x)=\infty, \xi'(0)\leq0$;
\item[$\circ$]
$\xi'(x)<1,\,\,\forall\,\, x\in[0,\infty)$;
\item[$\circ$]
$\xi''(x)\leq0,\,\,\forall\,\, x\in[0,W_{\xi}^{-1}(\frac{\mu}{2q})]$ and \,$
\xi''(x)\geq0,\,\,\forall\,\, x\in(W_{\xi}^{-1}(\frac{\mu}{2q}),\infty)$.
\end{itemize}
One can observe the following facts
\begin{itemize}
\item[$\bullet$]
$\bar{\xi}(x)$ is strictly increasing over $[0,\infty)$ (because $\bar{\xi}'(x)=1-\xi'(x)>0,\,\,\forall\,\, x\in[0,\infty)$);
\item[$\bullet$]
$W_\xi(z)$ is strictly increasing (since $\frac{W^{(q)}(z)} {W^{(q)'}(z)}$ and $\bar{\xi}(z)$ are both strictly increasing) with supremum $W_\xi(\infty)=\frac{\mu+\sqrt{\mu^2+2q}}{2q}$;
\item[$\bullet$]
$\lim\limits_{x\rightarrow\infty}g(x)=\lim\limits_{x\rightarrow\infty}(\xi'(x)+1-\xi'(x))=1>0$,\,\,$
\lim\limits_{x\rightarrow0}g(x)=\xi'(0)\leq0$;
\item[$\bullet$]
$g'(x)=\xi''(x)f_1(W_{\xi}(x))+(\xi'(x)-1)W_{\xi}'(x)f_2(W_{\xi}(x))<0,\,\,\forall\,\, x\in[0,W_{\xi}^{-1}(\frac{\mu}{2q}))$;
\item[$\bullet$]
$\xi'(W_{\xi}^{-1}(\frac{\mu}{2q}))<0$, \,$g(W_{\xi}^{-1}(\frac{\mu}{2q}))
=\xi'(W_{\xi}^{-1}(\frac{\mu}{2q}))+(1-\xi'(W_{\xi}^{-1}(\frac{\mu}{2q})))(-\frac{\mu^2}{2q})<0$;
\item[$\bullet$]
$g'(x)=\xi''(x)f_1(W_{\xi}(x))+(\xi'(x)-1)W_{\xi}'(x)f_2(W_{\xi}(x))>0,\,\,\forall\,\, x\in(W_{\xi}^{-1}(\frac{\mu}{2q}),\infty)$.
\end{itemize}
Based on these observations, one can deduce that there should exist an $x_0\in(W_{\xi}^{-1}(\frac{\mu}{2q}),\infty)$ such that $g(x)\leq0$ for $x\leq x_0$, and $g(x)>0$ for $x>x_0$. Thus Assumption \ref{assumption1} holds true.

However, the above class of general draw-down function $\xi$ may seem to be restrictive because conditions are imposed on the second derivative of $\xi$, in the following we are devoted to construct a more general class of general draw-down functions. It is seen that
\begin{eqnarray}\label{equuivalence}
g(x)\geq0&\Leftrightarrow&\frac{1}{1-\xi'(x)}-1\geq-\frac{W^{(q)''}(\bar{\xi}(x))W^{(q)}(\bar{\xi}(x))}{\left(W^{(q)'}(\bar{\xi}(x))\right)^{2}}
\nonumber\\
&\Leftrightarrow&
\frac{1}{1-\xi'(\bar{\xi}^{-1}(z))}-1\geq-\frac{W^{(q)''}(z)W^{(q)}(z)}{\left(W^{(q)'}(z)\right)^{2}},\,\,z=\bar{\xi}(x)
\nonumber\\
&\Leftrightarrow&
\left(\bar{\xi}^{-1}\right)'(z)-1\geq-\frac{W^{(q)''}(z)W^{(q)}(z)}{\left(W^{(q)'}(z)\right)^{2}},\,\,z=\bar{\xi}(x)
\nonumber\\
&\Leftrightarrow&
\left(\bar{\xi}^{-1}\right)'(z)-1\geq
2\mu W_{0}(z)-2qW_{0}(z)^2,\,\,z=\bar{\xi}(x),
\end{eqnarray}
provided that $\xi'(x)<1$ for all $x\geq0$ \emph{(}hence, $\bar{\xi}^{-1}$ is well defined\emph{)}.
Recalling that $W^{(q)''}(z)$ takes positive values for large $z$ \emph{(}In fact, $W_{\xi}^{-1}(\frac{\mu}{q})$ is the unique zero of the second derivative $W^{(q)''}(z)$ with $W^{(q)''}(z)< (>)\,0$ over $[0, W_{\xi}^{-1}(\frac{\mu}{q})) \,((W_{\xi}^{-1}(\frac{\mu}{q}),\infty))$\emph{)}, we can
also choose $\xi$ satisfying
\begin{align*}
&\circ~0\leq\xi'(x)<1, \,\,\forall\,\,x\in[0,\infty);\\
&\circ~\bar{\xi}(0)=\underline{d}\in(0,\infty),\,\,\bar{\xi}(\infty)=\overline{d}\in(0,\infty], \,\,a\in\left[\underline{d},\overline{d}\right];\\
&\circ~\left(\bar{\xi}^{-1}\right)'(z)-1<2\mu W_{0}(z)-2qW_{0}(z)^2,\,\,\,\forall\,\,\,z\in\left[\underline{d},a\right);\\
&\circ~\left(\bar{\xi}^{-1}\right)'(z)-1\geq2\mu W_{0}(z)-2qW_{0}(z)^2,\,\,\,\forall\,\,\,z\in\left[a,\overline{d}\right),
\end{align*}
such that Assumption \ref{assumption1} holds true with $x_{0}=\bar{\xi}^{-1}(a)$. In particular, if $a=\overline{d}$ then $g(x)<0$ for all $x\in[0,\infty)$;
if $a=\underline{d}$ then $g(x)\geq0$ for all $x\in[0,\infty)$.
To see that such a construction is feasible, it would be quite useful to note that $\xi'(x)\in[0,1)$ \emph{(}or, equivalently, $\bar{\xi}'(x)\in(0,1]$\emph{)} over $x\in[0,\infty)$ can result in $\left(\bar{\xi}^{-1}\right)'(z)\in[1,\infty)$ over $z\in[0,\infty)$. By the way, the function in the right hand side of the final inequality of \eqref{equuivalence}, does not depend on $\xi$.

In particular, we can fix $d_{1}\in[0,\infty)$, $a\in\left[d_{1},\infty\right)$ and $M_{1}\in[1,\infty)$, and then choose $\xi$ satisfying
\begin{align*}
&\circ~\,\bar{\xi}^{-1}(d_{1})=0;\\
&\circ~\left(\bar{\xi}^{-1}\right)'(z)\in[1,M_{1}), \,\,\,\,z\in[d_{1},\infty);\\
&\circ~\left(\bar{\xi}^{-1}\right)'(z)-1<2\mu W_{0}(z)-2qW_{0}(z)^2,\,\,\,\forall\,\,\,z\in\left[d_{1},a\right);\\
&\circ~\left(\bar{\xi}^{-1}\right)'(z)-1\geq2\mu W_{0}(z)-2qW_{0}(z)^2,\,\,\,\forall\,\,\,z\in\left[a,\infty\right),
\end{align*}
to fulfill Assumption \ref{assumption1} with $x_{0}=\bar{\xi}^{-1}(a)$. Here, we should note that the function in the right hand side of the final inequality of \eqref{equuivalence} is bounded, and $\bar{\xi}(\infty)=\infty$ by construction.

\item[(2)] \,$\mu\leq0$.
First, consider choosing $\xi$ satisfying
\begin{align*}
&\circ~\xi'(x)<1, \,\,\forall\,\,x\in[0,\infty);\\
&\circ~\xi(0)=0,\lim\limits_{x\rightarrow\infty}\bar{\xi}(x)=\infty,\xi'(0)=0;\\
&\circ~\xi''(x)\geq0, \,\,\forall\,\,x\in[0,\infty),
\end{align*}
such that
$\bar\xi(x)$ is strictly increasing on $[0,\infty)$; $g(0)=0$;
and $g'(x)=\xi''(x)f_1(W_{\xi}(x))+(\xi'(x)-1)W_{\xi}'(x)f_2(W_{\xi}(x))>0$ \emph{(}and, hence, $g(x)>0$\emph{)} for all $x\in[0,\infty)$.
Hence Assumption \ref{assumption1} should hold true with $x_0=0$.

Due to the fact that
\begin{align*}
g(x)\geq0&\Leftrightarrow\xi'(x)+(1-\xi'(x))(2qW_{\xi}(x)^2-2\mu W_{\xi}(x))\geq0,
\end{align*}
one can also just choose $\xi$ such that $\xi'(x)\in[0,1)$ for $x\in[0,\infty)$, so that Assumption \ref{assumption1} holds true with $x_0=0$.
\end{itemize}

\end{exa}

\begin{exa}
Let $X_{t}=x+pt-\sum\limits_{i=1}^{N_{t}}e_{i}$ where $p>0$, $\{e_{i};i\geq1\}$ are i.i.d. exponential  random variables  with
mean ${1}/{\mu}$ and $\{N_{t}, t\geq0\}$ is an independent Poisson process with  intensity $\lambda$.
The $q$-scale function of $X$ is
\begin{eqnarray}\label{Exacra}
W^{(q)}(x)=p^{-1}\left(A_{+}e^{\theta_{+}x}
-A_{-}e^{\theta_{-}x}\right),\,x\geq0,\nonumber
\end{eqnarray}
where $A_{\pm}=\frac{\mu+\theta_{\pm}}{\theta_{+}-\theta_{-}}>0, \,\,\,\,\,\,
\theta_{\pm}=\frac{q+\lambda-\mu p\pm\sqrt{(q+\lambda-\mu p)^{2}+4pq\mu}}{2p}$.
Some algebraic manipulations yield
$$W^{(q)''}(x)=p^{-1}(A_{+}(\theta_{+})^{2}e^{\theta_{+}x}-A_{-}(\theta_{-})^{2}e^{\theta_{-}x})=-\theta_{-}\theta_{+}W^{(q)}(x)+(\theta_{+}+\theta_{-})W^{(q)'}(x).$$
Hence, $g(x)$ can be rewritten as
\begin{align*}
g(x)&=\xi'(x)+(1-\xi'(x))\frac{W^{(q)''}(\bar{\xi}(x))W^{(q)}(\bar{\xi}(x))}{(W^{(q)'}(\bar{\xi}(x)))^2}\\
&=\xi'(x)+(1-\xi'(x))(-\theta_{+}\theta_{-}W_{\xi}(x)^2+(\theta_{+}+\theta_{-}) W_{\xi}(x)).
\end{align*}
One can verify that
\begin{eqnarray}\label{ggeq0.1}
g(x)\geq0\hspace{-0.3cm}&\Leftrightarrow&
\hspace{-0.3cm}\xi'(x)+(1-\xi'(x))(-\theta_{+}\theta_{-}W_{\xi}(x)^2+(\theta_{+}+\theta_{-}) W_{\xi}(x))\geq0
\nonumber\\
\hspace{-0.3cm}&\Leftrightarrow&\hspace{-0.3cm} \xi'(x)\geq\frac{\theta_{+}\theta_{-}W_{\xi}(x)^2-(\theta_{+}+\theta_{-}) W_{\xi}(x)}{1-\big(-\theta_{+}\theta_{-}W_{\xi}(x)^2+(\theta_{+}+\theta_{-}) W_{\xi}(x)\big)}
\nonumber\\
\hspace{-0.3cm}&&\hspace{0.3cm}\,\,\,:=\frac{-f_{3}(W_{\xi}(x))}{1-f_{3}(W_{\xi}(x))}=1-\frac{1}{1-f_{3}(W_{\xi}(x))},
\end{eqnarray}
with $f_{3}(x)=-\theta_{+}\theta_{-}x^2+(\theta_{+}+\theta_{-})x$ which is non-positive and decreasing over $[0,0\vee \frac{\theta_{+}+\theta_{-}}{2\theta_{+}\theta_{-}})$, and increasing over $[0\vee \frac{\theta_{+}+\theta_{-}}{2\theta_{+}\theta_{-}},\infty)$. Because $W_\xi(z)$ is strictly increasing with upper bound $\frac{1}{\theta_{+}}$ \emph{(}can not be attained\emph{)} and lower bound $0$, hence $1-f_{3}(W_{\xi}(x))>0$ and hence $1-\frac{1}{1-f_{3}(W_{\xi}(x))}<1$ for all $x\in[0,\infty)$.
Hence, we can choose $\xi$ satisfying
$$\circ~\xi'(x)\geq1-\frac{1}{1-f_{3}\left(0\vee\frac{\theta_{+}+\theta_{-}}{2\theta_{+}\theta_{-}}\right)},\,\,\forall\,\,\,x\in[0,\infty),$$
in order to guarantee that $g(x)\geq0$ for all $x\in[0,\infty)$, in which case Assumption \ref{assumption1} holds true with $x_0=0$.

One can also follow a very similar argument as the first constructing method for case $\mu>0$ in Example \ref{exa1}, to construct the appropriate class of general draw-down functions fulfilling Assumption \ref{assumption1}, by imposing conditions on the first and second derivatives of $\xi$.

Because \eqref{ggeq0.1} can be equivalent to
\begin{eqnarray}\label{ggeq00.1}
g(x)\geq0
\hspace{-0.3cm}&\Leftrightarrow&\hspace{-0.3cm} \xi'\left(\bar{\xi}^{-1}(z)\right)\geq\frac{-f_{3}(W_{0}(z))}{1-f_{3}(W_{0}(z))},\,z=\bar{\xi}(x)
\nonumber\\
\hspace{-0.3cm}&\Leftrightarrow&\hspace{-0.3cm} 1-\left(\frac{1}{1-\xi'\left(\bar{\xi}^{-1}(z)\right)}\right)^{-1}\geq\frac{-f_{3}(W_{0}(z))}{1-f_{3}(W_{0}(z))},\,z=\bar{\xi}(x)
\nonumber\\
\hspace{-0.3cm}&\Leftrightarrow&\hspace{-0.3cm} 1-\left(\left(\bar{\xi}^{-1}\right)'(z)\right)^{-1}\geq\frac{-f_{3}(W_{0}(z))}{1-f_{3}(W_{0}(z))},\,z=\bar{\xi}(x),
\end{eqnarray}
provided that $\bar{\xi}^{-1}$ is well defined.
Thus, we can
also choose $\xi$ satisfying
\begin{align*}
&\circ~\xi'(x)<1, \,\,\forall\,\,x\in[0,\infty);\\
&\circ~\bar{\xi}(0)=\underline{d}\in(0,\infty),\,\,\bar{\xi}(\infty)=\overline{d}\in(0,\infty], \,\,a\in\left[\underline{d},\overline{d}\right];\\
&\circ~1-\left(\left(\bar{\xi}^{-1}\right)'(z)\right)^{-1}<\frac{-f_{3}(W_{0}(z))}{1-f_{3}(W_{0}(z))},\,\,\,\forall\,\,\,z\in\left[\underline{d},a\right);\\
&\circ~1-\left(\left(\bar{\xi}^{-1}\right)'(z)\right)^{-1}\geq\frac{-f_{3}(W_{0}(z))}{1-f_{3}(W_{0}(z))},\,\,\,\forall\,\,\,z\in\left[a,\overline{d}\right),
\end{align*}
such that Assumption \ref{assumption1} holds true with $x_{0}=\bar{\xi}^{-1}(a)$. To understand the feasibility of such a construction, it would be quite useful to note that $\xi'(x)\in(-\infty,1)$ \emph{(}or, equivalently, $\bar{\xi}'(x)\in(0,\infty)$\emph{)} over $x\in[0,\infty)$ can lead to $\left(\bar{\xi}^{-1}\right)'(z)\in(0,\infty)$ over $z\in[0,\infty)$. By the way, the function in the right hand side of the final inequality of \eqref{ggeq00.1}, does not depend on $\xi$.

In particular, we can fix $d_{2}\in[0,\infty)$, $a\in\left[d_{2},\infty\right)$ and $M_{2}\in(0,\infty)$, and then choose $\xi$ satisfying
\begin{align*}
&\circ~\,\bar{\xi}^{-1}(d_{2})=0;\\
&\circ~\left(\bar{\xi}^{-1}\right)'(z)\in(0,M_{2}), \,\,\,\,z\in[d_{2},\infty);\\
&\circ~1-\left(\left(\bar{\xi}^{-1}\right)'(z)\right)^{-1}<\frac{-f_{3}(W_{0}(z))}{1-f_{3}(W_{0}(z))},\,\,\,\forall\,\,\,z\in\left[d_{2},a\right);\\
&\circ~1-\left(\left(\bar{\xi}^{-1}\right)'(z)\right)^{-1}\geq\frac{-f_{3}(W_{0}(z))}{1-f_{3}(W_{0}(z))},\,\,\,\forall\,\,\,z\in\left[a,\infty\right),
\end{align*}
to fulfill Assumption \ref{assumption1} with $x_{0}=\bar{\xi}^{-1}(a)$. Here, we should note that the function in the right hand side of the final inequality of \eqref{ggeq00.1} is  bounded, and $\bar{\xi}(\infty)=\infty$ by construction.

\end{exa}

\begin{exa}
In this example we consider a general spectrally negative L\'evy process $X$. It still holds that
\begin{eqnarray}\label{equuivalence01}
g(x)\geq0&\Leftrightarrow&
\left(\bar{\xi}^{-1}\right)'(z)-1\geq-\frac{W^{(q)''}(z)W^{(q)}(z)}{\left(W^{(q)'}(z)\right)^{2}},\,\,z=\bar{\xi}(x),\nonumber
\end{eqnarray}
provided that $\xi'(x)<1$ for all $x\geq0$.
We can choose $\xi$ satisfying
\begin{align*}
&\circ~\xi'(x)\in(-\infty,1), \,\,\forall\,\,x\in[0,\infty);\\
&\circ~\bar{\xi}(0)=\underline{d}\in(0,\infty),\,\,\bar{\xi}(\infty)=\overline{d}\in(0,\infty], \,\,a\in\left[\underline{d},\overline{d}\right];\\
&\circ~\left(\bar{\xi}^{-1}\right)'(z)-1<(>)-\frac{W^{(q)''}(z)W^{(q)}(z)}{\left(W^{(q)'}(z)\right)^{2}},\,\,\,\forall\,\,\,z\in\left[\underline{d},a\right);\\
&\circ~\left(\bar{\xi}^{-1}\right)'(z)-1\geq(\leq)-\frac{W^{(q)''}(z)W^{(q)}(z)}{\left(W^{(q)'}(z)\right)^{2}},\,\,\,\forall\,\,\,z\in\left[a,\overline{d}\right),
\end{align*}
to guarantee that Assumption \ref{assumption1} holds true with $x_{0}=\bar{\xi}^{-1}(a)$.
It would also be useful to note that $\xi'(x)\in(-\infty,1)$ \emph{(}or, equivalently, $\bar{\xi}'(x)\in(0,\infty)$\emph{)} over $x\in[0,\infty)$ can result in $\left(\bar{\xi}^{-1}\right)'(z)\in(0,\infty)$ over $z\in[0,\infty)$. In essence, appropriate class of general draw-down functions satisfying Assumption \ref{assumption1} are constructed via imposing conditions of the inverse function $\bar{\xi}^{-1}$ instead of on $\xi$ or $\bar{\xi}$.

One can also follow the method in Wang and Zhou (2018) to construct the class of general draw-down functions fulfilling Assumption \ref{assumption1} with $x_{0}=\bar{\xi}^{-1}(a)$ and $a:=\inf\{z\geq 0: W{(q)''}(z)> 0 \}$, for  the spectrally negative L\'evy process whose L\'evy measure has a completely monotone density.
\end{exa}

  \section{Numerical analysis}

In this section we provide some numerical examples to illustrate the theoretical results obtained in the previous sections.
We consider a linear draw-down function
$$
\xi(x)=kx-d,\qquad k<1,\ d\geq 0.
$$
Under this assumption, we have, for $x<y$,
$$
\int_{x}^{y}\frac{W^{(q)'}(u-\xi(u))}{W^{(q)}(u-\xi(u))}\mathrm{d}u
=\frac{1}{1-k}\ln\left(
\frac{W^{(q)}((1-k)y+d) }{W^{(q)}((1-k)x+d)}
\right),
$$
which has frequently appeared in formulae in Section 4.

For the risk process, it is assumed to be a linear Brownian motion:
\begin{equation}\label{BM}
X(t)=\mu t+\sigma B(t),\qquad t\geq 0,
\end{equation}
where $\mu\in \mathbb{R}$, $\sigma>0$, and $\{B(t)\}$ is a standard Brownian motion.
Note that our main results are all expressed in terms of the $q$-scale function $W^{(q)}$. By Kyprianou (2006) we know that that the $q$-scale function for the linear Brownian motion \eqref{BM} is given by
\begin{equation}\label{W-BM}
W^{(q)}(x)=\frac{2}{\sigma^2\Xi} \mathrm{e}^{-\frac{\mu}{\sigma^2} x}\sinh(\Xi x),\qquad x\geq 0,
\end{equation}
where
$
\Xi=\frac{\sqrt{ \mu^2+2q\sigma^2 } }{ \sigma^2}.
$
Throughout this paper, we set  $\mu=0.03, \sigma=0.4, q=0.01$.

It follows from Proposition 5 and 6 that  Assumption 1 plays an important role in characterizing the optimal return function. It is seen that
$$
\lim_{x\rightarrow \infty} \frac{W^{(q)''}(x-\xi(x))W^{(q)}(x-\xi(x))}
{\left(W^{(q)'}(x-\xi(x))\right)^{2}}=1,
$$
which yields $\lim_{x\rightarrow \infty}g(x)=1$.  In Figure 1, we plot the behavior of $g(x)$ for $d=1$ and $k=-10, -5, -1, 0, 0.1$. It follows that the function $g$ indeed converges to $1$ as $x\rightarrow \infty$, and we also find that the function $g$ indeed changes its sign at some point $x_0$.

\begin{figure}[!htbp]\label{fig-7}
\centering
{\scalebox{0.8}[0.85]{\includegraphics*[60,339][434,625]{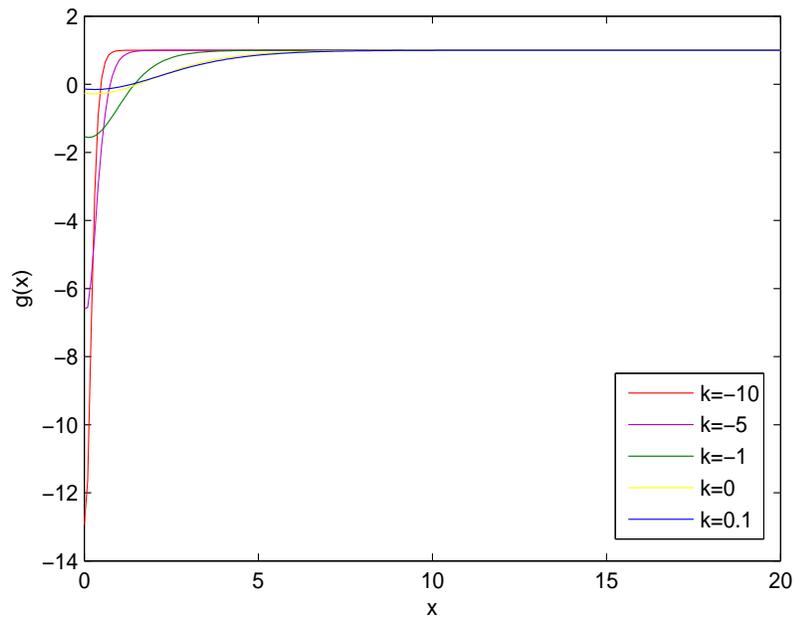}}}\hspace{.3in}
\caption{Behavior of the function $g$ for $d=1$ and $k=-10, -5, -1, 0, 0.1$.}
\end{figure}

Now for the draw-down parameters $(k, d)$, we shall consider two cases: $(0.1, 1)$ and $(-1, 1)$.  Furthermore, we set $(\gamma_1, \gamma_2)=(0.2, 0.6)$.
First, we consider the conditions in Proposition 5.  Besides Assumption 1, we also need to check inequality \eqref{condition1}.
It follows from Figure 1 that $g(y)>0$ for $y>x_0$.  Hence, to check inequality \eqref{condition1}, we only need to check the following condition
\begin{equation}\label{G2-condition}
G_2(x)\geq 0,\qquad \forall\ 0<x\leq x_0.
\end{equation}

\begin{figure}[!htbp]\label{fig-7}
\centering
{\scalebox{0.8}[0.85]{\includegraphics*[90,314][451,597]{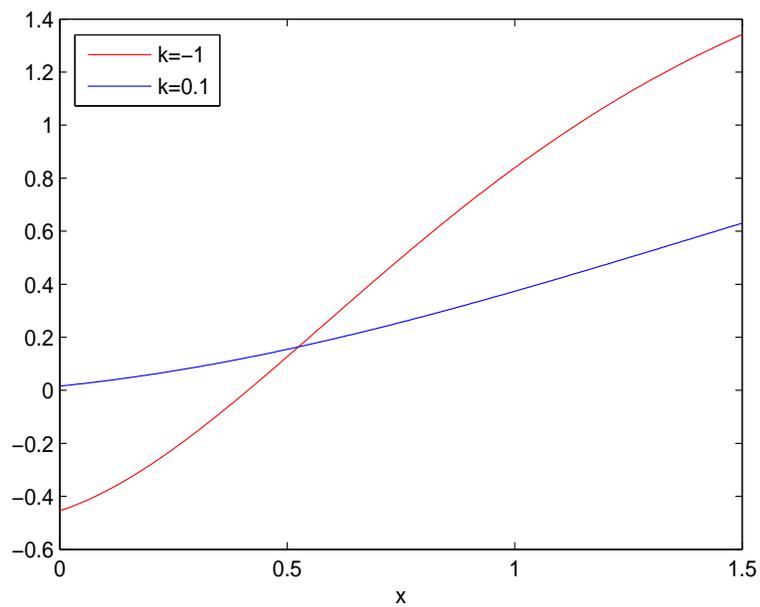}}}\hspace{.3in}
\caption{The function $G_2$ for $d=1$ and $k=-10, -5, -1, 0, 0.1$.}
\end{figure}

For $(k, d)=(0.1, 1)$, we can obtain $x_0=1.360$; for $(k,d)=(-1,1)$, we can obtain $x_0=1.443$.
 In Figure 2, we plot $G_2$ as a function of $x$ for $0\leq x \leq 1.5$. When $k=0.1$, we find that $G_2(x)$ is strictly positive for $0\leq x\leq x_0$. It follows from Proposition 5 that the maximizing strategy function is
$\gamma(x)\equiv\gamma_{2},\,\forall\,\,x\geq0$, and the optimal return function is given by
 formula \eqref{represntation of the optimal function1}. In Figure 3, we plot the optimal return value as a function of the initial surplus level $x$. We observe that $f(x)$ is an increasing function and converges to a constant value $2.8209$. The limit can be checked mathematically as follows. Using formula \eqref{fgamma2} we have
$$
 f(x)=\frac{W^{(q)}(x-\xi(x))}{W^{(q)'}(x-\xi(x))}-\int_{x}^{\infty}\exp\{-\frac{1}{1-\gamma_{2}}\int_{x}^{y}\frac{W^{(q)'}(u-\xi(u))}{W^{(q)}(u-\xi(u))}\mathrm{d}u\}
g(y)\mathrm{d}y,
$$
which, together with $g(y)\rightarrow 1$ as $y\rightarrow \infty$, yields for large $x$,
\begin{eqnarray}
f(x)
&\approx & \frac{W^{(q)}(x-\xi(x))}{W^{(q)'}(x-\xi(x))}-\frac{1-\gamma_2}{\gamma_2}f(x),\nonumber
\end{eqnarray}
where we have used \eqref{represntation of the optimal function1},  \eqref{lim-W}, \eqref{lim.for.det.C01} and the L'H\^{o}pital's rule to deduce
\begin{eqnarray}
&&
\lim\limits_{x\rightarrow\infty}\frac{\int_{x}^{\infty}\exp\{-\frac{1}{1-\gamma_{2}}\int_{x}^{y}\frac{W^{(q)'}(u-\xi(u))}{W^{(q)}(u-\xi(u))}\mathrm{d}u\}
g(y)\mathrm{d}y}{f(x)}
\nonumber\\
&
=&
\lim\limits_{x\rightarrow\infty}
\frac{-g(x)+\frac{1}{1-\gamma_{2}}\frac{W^{(q)'}(x-\xi(x))}{W^{(q)}(x-\xi(x))}
\int_{x}^{\infty}\exp\{-\frac{1}{1-\gamma_{2}}\int_{x}^{y}\frac{W^{(q)'}(u-\xi(u))}{W^{(q)}(u-\xi(u))}\mathrm{d}u\}
g(y)\mathrm{d}y}
{-\frac{\gamma_{2}}{1-\gamma_{2}}+\frac{\gamma_{2}}{1-\gamma_{2}}\frac{1}{1-\gamma_{2}}\frac{W^{(q)'}(x-\xi(x))}{W^{(q)}(x-\xi(x))}
\int_{x}^{\infty}\exp\{-\frac{1}{1-\gamma_{2}}\int_{x}^{y}\frac{W^{(q)'}(u-\xi(u))}{W^{(q)}(u-\xi(u))}\mathrm{d}u\}
\mathrm{d}y}
\nonumber\\
&
=&\frac{1-\gamma_{2}}{\gamma_{2}}.
\nonumber
\end{eqnarray}
Hence, for large $x$ we have
$$
f(x)\approx \gamma_2\frac{W^{(q)}(x-\xi(x))}{W^{(q)'}(x-\xi(x))}=
0.6\frac{W^{(q)}(0.9x+1)}{W^{(q)'}(0.9x+1)}\rightarrow \frac{0.6}{\Phi(0.01)}=2.8209,\ \ \text{as}\ x\rightarrow \infty,
$$
where the last step follows from formula \eqref{lim-W}.

As for $k=-1$, after observing Figure 2 we find that $G_2(x)$ is not always positive for $0\leq x\leq x_0$. Hence, the optimal return function is not characterized by Proposition 5. However, it is seen that the conditions in case (\emph{v}) are satisfied, then the optimal return function is given by formulae in \eqref{4.13}. In Figure 4 we plot the optimal return function for $0\leq x\leq 20$. In this case, we find that $f(x)$ converges to the same limit as the case when $k=0.1$, which is due to that the optimal return function has the same form when $x$ is large, and the corresponding limit $2.8209$ is independent of the parameter $k$.

\begin{figure}[!htbp]\label{fig-7}
\centering
{\scalebox{0.9}[0.85]{\includegraphics*[93,332][421,606]{f(gamma2=0.6,k=0.1).eps}}}\hspace{.3in}
\caption{The optimal return function when $d=1$, $k=0.1$, $\gamma_1=0.2$ and $\gamma_2=0.6$.}
\end{figure}

\begin{figure}[!htbp]\label{fig-7}
\centering
{\scalebox{0.8}[0.8]{\includegraphics*[47,352][415,628]{f(k=-1).eps}}}\hspace{.3in}
\caption{The optimal return function when $d=1$, $k=-1$, $\gamma_1=0.2$ and $\gamma_2=0.6$.}
\end{figure}

The patterns of the optimal return functions shown in Figure 3 and Figure 4 reveal that, the smaller the parameter $k$, the larger the optimal return function. This is because, for each admissible tax strategy $\gamma\in\Gamma$, the draw-down time $\tau^{\gamma}_{\xi}$ decreases almost surely as $k$ increases, bearing in mind Equation (\ref{definition of draw-down times}) and $\xi(x)=kx-d$ with $d=1$ set in Figure 3 and Figure 4.
These reasonings together with the definition of the optimal return function given by (\ref{optimal function}) explain why smaller $k$ gives larger optimal return function value, i.e., $f(x)|_{k=k_{1}}\leq f(x)|_{k=k_{2}}$ for $x\in[0,\infty)$ and $k_{2}\leq k_{1}<1$.

  \section*{Appendix}

\subsection{Proof of Proposition \ref{exit problem}}

\begin{proof}
On the set $\{\tau_{a}^{+}<\tau^{\gamma}_{\xi}\}$, the taxed process $\{U^{\gamma}(t);t\geq0\}$ up-crosses level $a$ before the general draw-down time $\tau^{\gamma}_{\xi}$, i.e.,
\begin{eqnarray}
U^{\gamma}(t)\geq \xi\left(\bar{U}^{\gamma}(t)\right)=\xi\left(\sup\limits_{0\leq s\leq t}U^{\gamma}(s)\right) \mbox{ for all } t\in[0,\tau_{a}^{+}].\nonumber
\end{eqnarray}
Hence, we can rewrite the set $\{\tau_{a}^{+}<\tau^{\gamma}_{\xi}\}$ as follows,
\begin{eqnarray}\label{equivalent set relation 1}
\{\tau_{a}^{+}<\tau^{\gamma}_{\xi}\}=\{\bar{\varepsilon}_{t}\leq \overline{\gamma}_{x}(x+t)-\xi\left(\overline{\gamma}_{x}(x+t)\right),\qquad \forall t\in[0,\overline{\gamma}_{x}^{-1}(a)-x]\}.
\end{eqnarray}
Here we have used the fact that, for any $ s\in[0,L^{-1}(t)-L^{-1}(t-)]$,
\begin{eqnarray}
U^{\gamma}(L^{-1}(t-)+s)\hspace{-0.3cm}&=&\hspace{-0.3cm}
X(L^{-1}(t-)+s)-\int_{0}^{L^{-1}(t-)+s}\gamma\left(\bar{X}(u)\right)\mathrm{d}\bar{X}(u)
\nonumber\\
\hspace{-0.3cm}&
=&\hspace{-0.3cm}
X(L^{-1}(t-)+s)-\int_{0}^{L^{-1}(t-)}\gamma\left(\bar{X}(u)\right)\mathrm{d}\bar{X}(u)
,\nonumber
\end{eqnarray}
which implies that $U^{\gamma}\left(L^{-1}(t-)+s\right)\geq \xi\left(\sup\limits_{0\leq u\leq L^{-1}(t-)+s}U^{\gamma}(u)\right)=\xi\left(U^{\gamma}\left(L^{-1}(t)\right)\right)$ is equivalent to
\begin{eqnarray}
\varepsilon_{t}(s)=X(L^{-1}(t))-X(L^{-1}(t-)+s)\hspace{-0.3cm}&=&\hspace{-0.3cm}U^{\gamma}(L^{-1}(t))-U^{\gamma}(L^{-1}(t-)+s)\nonumber\\
\hspace{-0.3cm}&\leq&\hspace{-0.3cm} U^{\gamma}(L^{-1}(t))-\xi\left(U^{\gamma}\left(L^{-1}(t)\right)\right)=\overline{\gamma}_{x}(x+t)-\xi\left(\overline{\gamma}_{x}(x+t)\right)\nonumber
\end{eqnarray}
holds true for all $ s\in[0,L^{-1}(t)-L^{-1}(t-)]$.
By (\ref{equivalent set relation 1}) we have,
\begin{eqnarray} \label{vff2.8}
\mathbb{P}_{x}\left(\{\tau_{a}^{+}<\tau^{\gamma}_{\xi}\}\right)\hspace{-0.3cm}&=&\hspace{-0.3cm}\mathbb{P}_{x}\left(\{\bar{\varepsilon}_{t}\leq \overline{\gamma}_{x}(x+t)-\xi\left(\overline{\gamma}_{x}(x+t)\right), \forall t\in[0,\overline{\gamma}_{x}^{-1}(a)-x]\}\right)\nonumber\\
\hspace{-0.3cm}&=&\hspace{-0.3cm}\exp\{-\int_{0}^{\overline{\gamma}_{x}^{-1}(a)-x}n(\bar{\varepsilon}>\overline{\gamma}_{x}(x+t)-\xi\left(\overline{\gamma}_{x}(x+t)\right))\mathrm{d}t\}
\nonumber\\
\hspace{-0.3cm}&=&\hspace{-0.3cm}\exp\{-\int_{0}^{\overline{\gamma}_{x}^{-1}(a)-x}\frac{W'(\overline{\gamma}_{x}(x+t)-\xi\left(\overline{\gamma}_{x}(x+t)\right))}
{W(\overline{\gamma}_{x}(x+t)-\xi\left(\overline{\gamma}_{x}(x+t)\right))}\mathrm{d}t\}
\nonumber\\
\hspace{-0.3cm}&=&\hspace{-0.3cm}\exp\{-\int_{x}^{a}\frac{W'(y-\xi\left(y\right))}
{W(y-\xi\left(y\right))}\frac{1}{1-\gamma\left(\overline{\gamma}_{x}^{-1}(y)\right)}\mathrm{d}y\},
\end{eqnarray}
where in the last but one equality we used $n(\bar{\varepsilon}>x)=\frac{W'(x)}{W(x)}$ providing that $x$ is not a point of discontinuity in the derivative of $W$ (see, Kyprianou (2006)), and in the last step we have changed variables using $y=\overline{\gamma}_{x}(x+t)$, which yields $\mathrm{d}t=\frac{1}{1-\gamma(x+t)}\mathrm{d}y=\frac{1}{1-\gamma\left(\overline{\gamma}_{x}^{-1}(y)\right)}\mathrm{d}y$.

Note that, on the one hand we have
\begin{eqnarray}\label{probability1 change of measure}
\mathbb{P}_{x}^{\Phi(q)}\left(\tau_{a}^{+}<\tau^{\gamma}_{\xi}\right)
\hspace{-0.3cm}&=&\hspace{-0.3cm}\mathbb{E}_{x}\left(\mathrm{e}^{\Phi(q)(X(\tau_{a}^{+})-x)-q\tau_{a}^{+}}\mathbf{1}_{\{\tau_{a}^{+}<\tau^{\gamma}_{\xi}\}}\right)\nonumber\\
\hspace{-0.3cm}&=&\hspace{-0.3cm}\mathrm{e}^{\Phi(q)(\overline{\gamma}_{x}^{-1}(a)-x)}\mathbb{E}_{x}\left(\mathrm{e}^{-q\tau_{a}^{+}}\mathbf{1}_{\{\tau_{a}^{+}<\tau^{\gamma}_{\xi}\}}\right);
\end{eqnarray}
on the other hand we have
\begin{eqnarray}\label{probability2 change of measure}
\hspace{-0.3cm}&&\hspace{-0.3cm}\mathbb{P}_{x}^{\Phi(q)}\left(\{\tau_{a}^{+}<\tau^{\gamma}_{\xi}\}\right)\nonumber\\
\hspace{-0.3cm}&=&\hspace{-0.3cm}\exp\{-\int_{x}^{a}\frac{W_{\Phi(q)}'(y-\xi\left(y\right))}
{W_{\Phi(q)}(y-\xi\left(y\right))}\frac{1}{1-\gamma\left(\overline{\gamma}_{x}^{-1}(y)\right)}\mathrm{d}y\}\
\nonumber\\
\hspace{-0.3cm}&=&\hspace{-0.3cm}\exp\{-\int_{x}^{a}\left(\frac{W^{(q)'}(y-\xi\left(y\right))}
{W^{(q)}(y-\xi\left(y\right))}-\Phi(q)\right)\frac{1}{1-\gamma\left(\overline{\gamma}_{x}^{-1}(y)\right)}\mathrm{d}y\}
\nonumber\\
\hspace{-0.3cm}&=&\hspace{-0.3cm}\exp\{-\int_{x}^{a}\frac{W^{(q)'}(y-\xi\left(y\right))}
{W^{(q)}(y-\xi\left(y\right))}\frac{1}{1-\gamma\left(\overline{\gamma}_{x}^{-1}(y)\right)}\mathrm{d}y+\Phi(q)\int_{x}^{a} \frac{1}{1-\gamma\left(\overline{\gamma}_{x}^{-1}(y)\right)}\mathrm{d}y\}
\nonumber\\
\hspace{-0.3cm}&=&\hspace{-0.3cm}\exp\{-\int_{x}^{a}\frac{W^{(q)'}(y-\xi\left(y\right))}
{W^{(q)}(y-\xi\left(y\right))}\frac{1}{1-\gamma\left(\overline{\gamma}_{x}^{-1}(y)\right)}\mathrm{d}y+\Phi(q)\left(\overline{\gamma}_{x}^{-1}(a)-x\right)\},
\end{eqnarray}
where a similar argument as used in proving (\ref{vff2.8}) together the identity $W_{\Phi(q)}(x)=\mathrm{e}^{-\Phi(q)x}W^{(q)}(x)$ is used in the second equality, and the change of variable $z=\overline{\gamma}_{x}^{-1}(y)$ is used in the last step of the above display.

Combining (\ref{probability1 change of measure}) and (\ref{probability2 change of measure}), we arrive at
\begin{eqnarray}\label{}
\mathbb{E}_{x}\left(\mathrm{e}^{-q\tau_{a}^{+}}\mathbf{1}_{\{\tau_{a}^{+}<\tau^{\gamma}_{\xi}\}}\right)
\hspace{-0.3cm}&=&\hspace{-0.3cm}\mathrm{e}^{-\Phi(q)(\overline{\gamma}_{x}^{-1}(a)-x)}\mathbb{P}_{x}^{\Phi(q)}\left(\tau_{a}^{+}<\tau^{\gamma}_{\xi}\right)
\nonumber\\
\hspace{-0.3cm}&=&\hspace{-0.3cm}\exp\{-\int_{x}^{a}\frac{W^{(q)'}(y-\xi\left(y\right))}
{W^{(q)}(y-\xi\left(y\right))}\frac{1}{1-\gamma\left(\overline{\gamma}_{x}^{-1}(y)\right)}\mathrm{d}y\},\nonumber
\end{eqnarray}
which is our desired result \eqref{exit problem eqaution}.
\end{proof}

\subsection{Proof of Proposition \ref{expected discounted tax}}

\begin{proof}
Using (\ref{equivalent set}) it can be verified that,
\begin{eqnarray}\label{expected discounted tax identity1}
\hspace{-0.3cm}&&\hspace{-0.3cm}\mathbb{E}_{x}\left[\int_{0}^{\tau_{a}^{+}\wedge \tau^{\gamma}_{\xi}}\mathrm{e}^{-q t}\gamma\left(\bar{X}(t)\right)\mathrm{d}\left(\bar{X}(t)-x\right)\right]\nonumber\\
\hspace{-0.3cm}&=&\hspace{-0.3cm}\mathbb{E}_{x}\left[\int_{0}^{\infty}\mathbf{1}_{\{t<\tau_{a}^{+}\wedge \tau^{\gamma}_{\xi}\}}\mathrm{e}^{-q t}\gamma\left(\bar{X}(t)\right)\mathrm{d}\left(\bar{X}(t)-x\right)\right]
\nonumber\\
\hspace{-0.3cm}&=&\hspace{-0.3cm}\mathbb{E}_{x}\left[\int_{0}^{\infty}\mathbf{1}_{\left\{t<L^{-1}\left(L\left(\tau_{a}^{+}\wedge \tau^{\gamma}_{\xi}\right)-\right)\right\}}\mathrm{e}^{-q t}\gamma\left(\bar{X}(t)\right)\mathrm{d}\left(\bar{X}(t)-x\right)\right]
\nonumber\\
\hspace{-0.3cm}&=&\hspace{-0.3cm}\mathbb{E}_{x}\left[\int_{0}^{\infty}\mathbf{1}_{\left\{L(t)<L\left(\tau_{a}^{+}\wedge \tau^{\gamma}_{\xi}\right)\right\}}\mathrm{e}^{-q t}\gamma\left(L(t)+x\right)\mathrm{d}L(t)
\right]
\nonumber\\
\hspace{-0.3cm}&=&\hspace{-0.3cm}\mathbb{E}_{x}\left[\int_{0}^{\infty}\mathbf{1}_{\left\{y<L\left(\tau_{a}^{+}\wedge \tau^{\gamma}_{\xi}\right)\right\}}\mathrm{e}^{-q L^{-1}(y)}\gamma\left(y+x\right)\mathrm{d}y
\right]
\nonumber\\
\hspace{-0.3cm}&=&\hspace{-0.3cm}\int_{0}^{\infty}\mathbb{E}_{x}\left(\mathrm{e}^{-q L^{-1}(y)}\mathbf{1}_{\left\{y<L\left(\tau_{a}^{+}\wedge \tau^{\gamma}_{\xi}\right)\right\}}\right)\gamma\left(y+x\right)\mathrm{d}y
\nonumber\\
\hspace{-0.3cm}&=&\hspace{-0.3cm}\int_{0}^{\infty}\mathbb{E}_{x}\left(\mathrm{e}^{-q L^{-1}(y)}\mathbf{1}_{\left\{y<L\left(\tau_{a}^{+}\right)\wedge L\left(\tau^{\gamma}_{\xi}\right)\right\}}\right)\gamma\left(y+x\right)\mathrm{d}y
\nonumber\\
\hspace{-0.3cm}&=&\hspace{-0.3cm}\int_{0}^{\infty}\mathbb{E}_{x}\left(\mathrm{e}^{-q L^{-1}(y)}\mathbf{1}_{\left\{y<L\left(\tau_{a}^{+}\right)=\overline{\gamma}_{x}^{-1}(a)-x, y<L\left(\tau^{\gamma}_{\xi}\right)\right\}}\right)\gamma\left(y+x\right)\mathrm{d}y
\nonumber\\
\hspace{-0.3cm}&=&\hspace{-0.3cm}\int_{0}^{\overline{\gamma}_{x}^{-1}(a)-x}\mathbb{E}_{x}\left(\mathrm{e}^{-q L^{-1}(y)}\mathbf{1}_{\left\{y<L\left(\tau^{\gamma}_{\xi}\right)\right\}}\right)\gamma\left(y+x\right)\mathrm{d}y.
\end{eqnarray}
In addition, the set $\left\{y<L\left(\tau^{\gamma}_{\xi}\right)\right\}$ can be re-expressed as follows,
$$\left\{y<L\left(\tau^{\gamma}_{\xi}\right)\right\}=\left\{\bar{\varepsilon}_{t}\leq \overline{\gamma}_{x}(x+t)-\xi\left(\overline{\gamma}_{x}(x+t)\right),\forall \, t\in[0,y]\right\},$$
which together with (\ref{expected discounted tax identity1}) implies,
\begin{eqnarray}
\hspace{-0.3cm}&&\hspace{-0.3cm}\mathbb{E}_{x}\left[\int_{0}^{\tau_{a}^{+}\wedge \tau^{\gamma}_{\xi}}\mathrm{e}^{-q t}\gamma\left(\bar{X}(t)\right)\mathrm{d}\left(\bar{X}(t)-x\right)\right]\nonumber\\
\hspace{-0.3cm}&=&\hspace{-0.3cm}\int_{0}^{\overline{\gamma}_{x}^{-1}(a)-x}\mathbb{E}_{x}\left(\mathrm{e}^{-q L^{-1}(y)}\mathbf{1}_{\left\{\bar{\varepsilon}_{t}\leq \overline{\gamma}_{x}(x+t)-\xi\left(\overline{\gamma}_{x}(x+t)\right),\forall \, t\in[0,y]\right\}}\right)\gamma\left(y+x\right)\mathrm{d}y
\nonumber\\
\hspace{-0.3cm}&=&\hspace{-0.3cm}\int_{0}^{\overline{\gamma}_{x}^{-1}(a)-x}\mathrm{e}^{-\Phi(q)\left(X\left(L^{-1}(y)\right)-x\right)}\mathbb{E}_{x}\left(\mathrm{e}^{\Phi(q)\left(X\left(L^{-1}(y)\right)-x\right)-q L^{-1}(y)}\mathbf{1}_{\left\{\bar{\varepsilon}_{t}\leq \overline{\gamma}_{x}(x+t)-\xi\left(\overline{\gamma}_{x}(x+t)\right),\forall \, t\in[0,y]\right\}}\right)\gamma\left(y+x\right)\mathrm{d}y
\nonumber\\
\hspace{-0.3cm}&=&\hspace{-0.3cm}\int_{0}^{\overline{\gamma}_{x}^{-1}(a)-x}\mathrm{e}^{-\Phi(q)y}\,\mathbb{E}_{x}^{\Phi(q)}\left(\mathbf{1}_{\left\{\bar{\varepsilon}_{t}\leq \overline{\gamma}_{x}(x+t)-\xi\left(\overline{\gamma}_{x}(x+t)\right),\forall \, t\in[0,y]\right\}}\right)\gamma\left(y+x\right)\mathrm{d}y
\nonumber\\
\hspace{-0.3cm}&=&\hspace{-0.3cm}\int_{0}^{\overline{\gamma}_{x}^{-1}(a)-x}\mathrm{e}^{-\Phi(q)y}\,
\exp\{-\int_{0}^{y}n_{\Phi(q)}(\bar{\varepsilon}>\overline{\gamma}_{x}(x+t)-\xi\left(\overline{\gamma}_{x}(x+t)\right))\mathrm{d}t\}
\gamma\left(y+x\right)\mathrm{d}y
\nonumber\\
\hspace{-0.3cm}&=&\hspace{-0.3cm}\int_{0}^{\overline{\gamma}_{x}^{-1}(a)-x}\mathrm{e}^{-\Phi(q)y}\,
\exp\{-\int_{0}^{y}\frac{W_{\Phi(q)}'\left(\overline{\gamma}_{x}(x+t)-\xi\left(\overline{\gamma}_{x}(x+t)\right)\right)}{W_{\Phi(q)}\left(\overline{\gamma}_{x}(x+t)-\xi\left(\overline{\gamma}_{x}(x+t)\right)\right)}\mathrm{d}t\}
\gamma\left(y+x\right)\mathrm{d}y
\nonumber\\
\hspace{-0.3cm}&=&\hspace{-0.3cm}\int_{0}^{\overline{\gamma}_{x}^{-1}(a)-x}\mathrm{e}^{-\Phi(q)y}\,
\exp\{-\int_{0}^{y}\left(\frac{W^{(q)'}\left(\overline{\gamma}_{x}(x+t)-\xi\left(\overline{\gamma}_{x}(x+t)\right)\right)}{W^{(q)}\left(\overline{\gamma}_{x}(x+t)-\xi\left(\overline{\gamma}_{x}(x+t)\right)\right)}
-\Phi(q)\right)\mathrm{d}t\}
\gamma\left(y+x\right)\mathrm{d}y
\nonumber\\
\hspace{-0.3cm}&=&\hspace{-0.3cm}\int_{0}^{\overline{\gamma}_{x}^{-1}(a)-x}
\exp\{-\int_{0}^{y}\frac{W^{(q)'}\left(\overline{\gamma}_{x}(x+t)-\xi\left(\overline{\gamma}_{x}(x+t)\right)\right)}{W^{(q)}\left(\overline{\gamma}_{x}(x+t)-\xi\left(\overline{\gamma}_{x}(x+t)\right)\right)}\mathrm{d}t\}
\gamma\left(y+x\right)\mathrm{d}y,\nonumber
\end{eqnarray}
where $n_{\Phi(q)}$ is the excursion measure under the probability measure $\mathbb{P}_{x}^{\Phi(q)}$.
\end{proof}

\subsection{Proof of Proposition \ref{HJB}}

\begin{proof}
For any $\epsilon>0$ and $h>0$, by the definition of supremum we can always find a strategy $\gamma_{_{\epsilon}}\in \Gamma$ yielding the first inequality in (\ref{2.16}), where the stopping time $\hat{\tau}_{x+h}^{+}$ is defined via (2.1) with $\gamma$ replaced by $\gamma_{_{\epsilon}}$.
\begin{eqnarray}\label{2.16}
\hspace{-0.3cm}&&\hspace{-0.3cm}f(x)-\epsilon
=\sup\limits_{\gamma\in\Gamma}\mathbb{E}_{x}\int_{0}^{\tau^{\gamma}_{\xi}}\mathrm{e}^{-qt}\gamma\left(\bar{X}(t)\right)\mathrm{d}\bar{X}(t)-\epsilon
\nonumber\\
\hspace{-0.3cm}&\leq&\hspace{-0.3cm}
\mathbb{E}_{x}\int_{0}^{\tau^{\gamma_{_{\epsilon}}}_{\xi}}\mathrm{e}^{-qt}\gamma_{_{\epsilon}}\left(\bar{X}(t)\right)\mathrm{d}\bar{X}(t)\nonumber\\
\hspace{-0.3cm}&=&\hspace{-0.3cm}\mathbb{E}_{x}\left(\int_{0}^{\tau^{\gamma_{_{\epsilon}}}_{\xi}\wedge\hat{\tau}_{x+h}^{+}}\mathrm{e}^{-qt}\gamma_{_{\epsilon}}\left(\bar{X}(t)\right)\mathrm{d}\bar{X}(t)
+\int_{\tau^{\gamma_{_{\epsilon}}}_{\xi}\wedge\hat{\tau}_{x+h}^{+}}^{\tau^{\gamma_{_{\epsilon}}}_{\xi}}\mathrm{e}^{-qt}\gamma_{_{\epsilon}}\left(\bar{X}(t)\right)\mathrm{d}\bar{X}(t)\right)
\nonumber\\
\hspace{-0.3cm}&=&\hspace{-0.3cm}\mathbb{E}_{x}\left(\int_{0}^{\tau^{\gamma_{_{\epsilon}}}_{\xi}\wedge\hat{\tau}_{x+h}^{+}}\mathrm{e}^{-qt}\gamma_{_{\epsilon}}\left(\bar{X}(t)\right)\mathrm{d}\bar{X}(t)\right)
+\mathbb{E}_{x}\left(\int_{\hat{\tau}_{x+h}^{+}}^{\tau^{\gamma_{_{\epsilon}}}_{\xi}}\mathrm{e}^{-qt}\gamma_{_{\epsilon}}\left(\bar{X}(t)\right)\mathrm{d}\bar{X}(t)
\mathbf{1}_{\{\hat{\tau}_{x+h}^{+}
<\tau^{\gamma_{_{\epsilon}}}_{\xi}\}}\right)
\nonumber\\
\hspace{-0.3cm}&=&\hspace{-0.3cm}\mathbb{E}_{x}\left(\int_{0}^{\tau^{\gamma_{_{\epsilon}}}_{\xi}\wedge\hat{\tau}_{x+h}^{+}}\mathrm{e}^{-qt}\gamma_{_{\epsilon}}\left(\bar{X}(t)\right)\mathrm{d}\bar{X}(t)\right)
+\mathbb{E}_{x}\left(\mathbb{E}_{x}\left(\left.\int_{\hat{\tau}_{x+h}^{+}}^{\tau^{\gamma_{_{\epsilon}}}_{\xi}}\mathrm{e}^{-qt}\gamma_{_{\epsilon}}\left(\bar{X}(t)\right)\mathrm{d}\bar{X}(t)
\right|\mathcal{F}_{\hat{\tau}_{x+h}^{+}}\right)\mathbf{1}_{\{\hat{\tau}_{x+h}^{+}<\tau^{\gamma_{_{\epsilon}}}_{\xi}\}}\right)
\nonumber\\
\hspace{-0.3cm}&\leq&\hspace{-0.3cm}\mathbb{E}_{x}\left(\int_{0}^{\tau^{\gamma_{_{\epsilon}}}_{\xi}\wedge\hat{\tau}_{x+h}^{+}}\mathrm{e}^{-qt}\gamma_{_{\epsilon}}\left(\bar{X}(t)\right)\mathrm{d}\bar{X}(t)\right)
+\mathbb{E}_{x}\left(\mathrm{e}^{-q\hat{\tau}_{x+h}^{+}}\mathbf{1}_{\{\hat{\tau}_{x+h}^{+}<\tau^{\gamma_{_{\epsilon}}}_{\xi}\}}\right)f(x+h)
\nonumber\\
\hspace{-0.3cm}&\leq&\hspace{-0.3cm}\sup\limits_{\gamma\in\Gamma}\mathbb{E}_{x}\left(\int_{0}^{\tau^{\gamma}_{\xi}\wedge\tau_{x+h}^{+}}\mathrm{e}^{-qt}\gamma\left(\bar{X}(t)\right)\mathrm{d}\bar{X}(t)
+\mathrm{e}^{-q\tau_{x+h}^{+}}\mathbf{1}_{\{\tau_{x+h}^{+}<\tau^{\gamma}_{\xi}\}}f(x+h)\right).
\end{eqnarray}

On the other hand, given any $\gamma\in \Gamma$, define a new strategy $\tilde{\gamma}\in \Gamma$ as follows: during the time interval $[0,\tau_{x+\epsilon}^{+}]$ the strategy $\gamma$ is adopted, which is then switched to an $\epsilon$-optimal strategy associated with initial reserve $x+h$. Because any given strategy must be suboptimal, we get
\begin{eqnarray}\label{2.17}
f(x)\hspace{-0.3cm}&\geq&\hspace{-0.3cm}\mathbb{E}_{x}\int_{0}^{\tau^{\tilde{\gamma}}_{\xi}}\mathrm{e}^{-qt}\tilde{\gamma}\left(\bar{X}(t)\right)\mathrm{d}\bar{X}(t)\nonumber\\
\hspace{-0.3cm}&=&\hspace{-0.3cm}\mathbb{E}_{x}\left(\int_{0}^{\tau^{\tilde{\gamma}}_{\xi}\wedge\tau_{x+h}^{+}}\mathrm{e}^{-qt}\tilde{\gamma}\left(\bar{X}(t)\right)\mathrm{d}\bar{X}(t)
+\int_{\tau^{\tilde{\gamma}}_{\xi}\wedge\tau_{x+h}^{+}}^{\tau^{\tilde{\gamma}}_{\xi}}\mathrm{e}^{-qt}\tilde{\gamma}\left(\bar{X}(t)\right)\mathrm{d}\bar{X}(t)\right)
\nonumber\\
\hspace{-0.3cm}&=&\hspace{-0.3cm}\mathbb{E}_{x}\left(\int_{0}^{\tau^{\gamma}_{\xi}\wedge\tau_{x+h}^{+}}\mathrm{e}^{-qt}\gamma\left(\bar{X}(t)\right)\mathrm{d}\bar{X}(t)\right)
+\mathbb{E}_{x}\left(\int_{\tau_{x+h}^{+}}^{\tau^{\tilde{\gamma}}_{\xi}}\mathrm{e}^{-qt}\tilde{\gamma}\left(\bar{X}(t)\right)\mathrm{d}\bar{X}(t)\mathbf{1}_{
\{\tau_{x+h}^{+}
<\tau^{\tilde{\gamma}}_{\xi}\}}\right)
\nonumber\\
\hspace{-0.3cm}&=&\hspace{-0.3cm}\mathbb{E}_{x}\left(\int_{0}^{\tau^{\gamma}_{\xi}\wedge\tau_{x+h}^{+}}\mathrm{e}^{-qt}\gamma\left(\bar{X}(t)\right)\mathrm{d}\bar{X}(t)\right)
+\mathbb{E}_{x}\left(\mathbb{E}_{x}\left(\int_{\tau_{x+h}^{+}}^{\tau^{\tilde{\gamma}}_{\xi}}\mathrm{e}^{-qt}\tilde{\gamma}\left(\bar{X}(t)\right)\mathrm{d}\bar{X}(t)
\big|\mathcal{F}_{\tau_{x+h}^{+}}\right)\mathbf{1}_{\{\tau_{x+h}^{+}<\tau^{\gamma}_{\xi}\}}\right)
\nonumber\\
\hspace{-0.3cm}&\geq&\hspace{-0.3cm}\mathbb{E}_{x}\left(\int_{0}^{\tau^{\gamma}_{\xi}\wedge\tau_{x+h}^{+}}\mathrm{e}^{-qt}\gamma\left(\bar{X}(t)\right)\mathrm{d}\bar{X}(t)\right)
+\mathbb{E}_{x}\left(\mathrm{e}^{-q\tau_{x+h}^{+}}\mathbf{1}_{\{\tau_{x+h}^{+}<\tau^{\gamma}_{\xi}\}}\right)(f(x+h)-\epsilon)
\nonumber\\
\hspace{-0.3cm}&\geq&\hspace{-0.3cm}\mathbb{E}_{x}\left(\int_{0}^{\tau^{\gamma}_{\xi}\wedge\tau_{x+h}^{+}}\mathrm{e}^{-qt}\gamma\left(\bar{X}(t)\right)\mathrm{d}\bar{X}(t)
+\mathrm{e}^{-q\tau_{x+h}^{+}}\mathbf{1}_{\{\tau_{x+h}^{+}<\tau^{\gamma}_{\xi}\}}f(x+h)\right)-\epsilon,
\end{eqnarray}
where we have used the fact that $\tau^{\tilde{\gamma}}_{\xi}\wedge\tau_{x+h}^{+}=\tau^{\gamma}_{\xi}\wedge\tau_{x+h}^{+}$ in the second step of the above display due to the definition of the strategy  $\tilde{\gamma}$.
Taking supremum over $\gamma\in\Gamma$ in (\ref{2.17}), we should have,
\begin{eqnarray}\label{2.18}
f(x)\geq\sup\limits_{\gamma\in\Gamma}\mathbb{E}_{x}\left(\int_{0}^{\tau^{\gamma}_{\xi}\wedge\tau_{x+h}^{+}}\mathrm{e}^{-qt}\gamma\left(\bar{X}(t)\right)\mathrm{d}\bar{X}(t)
+\mathrm{e}^{-q\tau_{x+h}^{+}}\mathbf{1}_{\{\tau_{x+h}^{+}<\tau^{\gamma}_{\xi}\}}f(x+h)\right)-\epsilon.
\end{eqnarray}
Putting together (\ref{2.16}) and (\ref{2.18}), by the arbitrariness of $\epsilon>0$ we obtain the following dynamic programming principle,
\begin{eqnarray}
f(x)=\sup\limits_{\gamma\in\Gamma}\mathbb{E}_{x}\left(\int_{0}^{\tau^{\gamma}_{\xi}\wedge\tau_{x+h}^{+}}\mathrm{e}^{-qt}\gamma\left(\bar{X}(t)\right)\mathrm{d}\bar{X}(t)
+\mathrm{e}^{-q\tau_{x+h}^{+}}\mathbf{1}_{\{\tau_{x+h}^{+}<\tau^{\gamma}_{\xi}\}}f(x+h)\right).
\end{eqnarray}
Fix an arbitrary constant $\gamma_{0}\in[\gamma_{1},\gamma_{2}]$, and choose $\gamma\in\Gamma$ such that tax is paid at the fixed rate $\gamma_{0}$ during the time interval $[0,\tau_{x+h}^{+}]$.
In this case,
\begin{eqnarray}
f(x)\hspace{-0.3cm}&\geq&\hspace{-0.3cm}\mathbb{E}_{x}\left(\int_{0}^{\tau^{\gamma}_{\xi}\wedge\tau_{x+h}^{+}}\mathrm{e}^{-qt}\gamma_{0} \mathrm{d}\bar{X}(t)
+\mathrm{e}^{-q\tau_{x+h}^{+}}\mathbf{1}_{\{\tau_{x+h}^{+}<\tau^{\gamma}_{\xi}\}}f(x+h)\right)
\nonumber\\
\hspace{-0.3cm}&\geq&\hspace{-0.3cm}\mathbb{E}_{x}\left(\int_{0}^{\tau_{x+h}^{+}}\mathrm{e}^{-qt}\gamma_{0} \mathrm{d}\bar{X}(t)\mathbf{1}_{\{\tau_{x+h}^{+}<\tau^{\gamma}_{\xi}\}}
+\mathrm{e}^{-q\tau_{x+h}^{+}}\mathbf{1}_{\{\tau_{x+h}^{+}<\tau^{\gamma}_{\xi}\}}f(x+h)\right)
\nonumber\\\hspace{-0.3cm}&=&\hspace{-0.3cm}\mathbb{E}_{x}\left(\left(\gamma_{0} \mathrm{e}^{-q\tau_{x+h}^{+}}\left(\bar{X}\left(\tau_{x+h}^{+}\right)-x\right)+q\gamma_{0} \int_{0}^{\tau_{x+h}^{+}}\mathrm{e}^{-qt}\left(\bar{X}(t)-x\right)\mathrm{d}t\right)
\mathbf{1}_{\{\tau_{x+h}^{+}<\tau^{\gamma}_{\xi}\}}\right)
\nonumber\\\hspace{-0.3cm}&&\hspace{-0.3cm}+\mathbb{E}_{x}\left(\mathrm{e}^{-q\tau_{x+h}^{+}}\mathbf{1}_{\{\tau_{x+h}^{+}<\tau^{\gamma}_{\xi}\}}\right)f(x+h)
\nonumber\\\hspace{-0.3cm}&\geq&\hspace{-0.3cm}\mathbb{E}_{x}\left(\gamma_{0} \mathrm{e}^{-q\tau_{x+h}^{+}}\left(\bar{X}\left(\tau_{x+h}^{+}\right)-x\right)
\mathbf{1}_{\{\tau_{x+h}^{+}<\tau^{\gamma}_{\xi}\}}\right)
+\mathbb{E}_{x}\left(\mathrm{e}^{-q\tau_{x+h}^{+}}\mathbf{1}_{\{\tau_{x+h}^{+}<\tau^{\gamma}_{\xi}\}}\right)f(x+h).\nonumber
\end{eqnarray}
It can be verified from (\ref{v12.3}) that
\begin{eqnarray}
x+h=U^{\gamma}\left(\tau_{x+h}^{+}\right)\hspace{-0.3cm}&=&\hspace{-0.3cm}x+\int_{0}^{\tau_{x+h}^{+}}\left(1-\gamma\left(\bar{X}(t)\right)\right)\mathrm{d}\bar{X}(t)
\nonumber\\
\hspace{-0.3cm}&=&\hspace{-0.3cm}
x+\int_{0}^{\tau_{x+h}^{+}}\left(1-\gamma_{0}\right)\mathrm{d}\bar{X}(t)
=x+(1-\gamma_{0})\left(\bar{X}(\tau_{x+h}^{+})-x\right),
\end{eqnarray}
which implies that $\bar{X}\left(\tau_{x+h}^{+}\right)=x+\frac{h}{1-\gamma_{0}}$. Using this result we can further push forward our inequality as follows,
\begin{eqnarray}
f(x)\hspace{-0.3cm}&\geq&\hspace{-0.3cm}\frac{\gamma_{0}h}{1-\gamma_{0}}\mathbb{E}_{x}\left(\mathrm{e}^{-q\tau_{x+h}^{+}}
\mathbf{1}_{\{\tau_{x+h}^{+}<\tau^{\gamma}_{\xi}\}}\right)
+\mathbb{E}_{x}\left(\mathrm{e}^{-q\tau_{x+h}^{+}}\mathbf{1}_{\{\tau_{x+h}^{+}<\tau^{\gamma}_{\xi}\}}\right)f(x+h)
\nonumber\\
\hspace{-0.3cm}&=&\hspace{-0.3cm}
\exp\{-\int_{x}^{x+h}\frac{W^{(q)'}(y-\xi\left(y\right))}
{W^{(q)}(y-\xi\left(y\right))}\frac{1}{1-\gamma_{0}}\mathrm{d}y\}\left(f(x+h)+\frac{\gamma_{0}h}{1-\gamma_{0}}\right)
\nonumber\\
\hspace{-0.3cm}&=&\hspace{-0.3cm}\left(1-\frac{1}{1-\gamma_{0}}\frac{W^{(q)'}(x-\xi(x))}{W^{(q)}(x-\xi(x))}h+o(h)\right)\left(\frac{\gamma_{0}h}{1-\gamma_{0}}
+f(x)+f'(x)h+o(h)\right),
\end{eqnarray}
where we used (\ref{exit problem eqaution}) in Proposition \ref{exit problem} with $\gamma\equiv \gamma_{0}$.

Subtracting both sides of (3.7) with $f(x)$ and collecting the terms of order $h$ yield
\begin{eqnarray}
0\hspace{-0.3cm}&\geq&\hspace{-0.3cm}\frac{\gamma_{0}}{1-\gamma_{0}}-\frac{1}{1-\gamma_{0}} \frac{W^{(q)'}(x-\xi(x))}{W^{(q)}(x-\xi(x))}f(x)+f'(x).
\end{eqnarray}
The arbitrariness of $\gamma_{0}$ leads to
\begin{eqnarray}\label{2.23}
0\hspace{-0.3cm}&\geq&\hspace{-0.3cm}\sup\limits_{\gamma\in[\gamma_{1},\gamma_{2}]}[\frac{\gamma}{1-\gamma}-\frac{1}{1-\gamma} \frac{W^{(q)'}(x-\xi(x))}{W^{(q)}(x-\xi(x))}f(x)+f'(x)].
\end{eqnarray}
For $h^{2}>0$, by the definition of supremum there should exist a strategy $\tilde{\gamma}$ such that
\begin{eqnarray}\label{2.24}
f(x)
\hspace{-0.3cm}&\leq&\hspace{-0.3cm}\mathbb{E}_{x}\left(\int_{0}^{\tau^{\tilde{\gamma}}_{\xi}\wedge\tilde{\tau}_{x+h}^{+}}\mathrm{e}^{-qt}\tilde{\gamma}\left(\bar{X}(t)\right)\mathrm{d}\bar{X}(t)
+\mathrm{e}^{-q\tilde{\tau}_{x+h}^{+}}\mathbf{1}_{\{\tilde{\tau}_{x+h}^{+}<\tau^{\tilde{\gamma}}_{\xi}\}}f(x+h)\right)+h^{2}
\nonumber\\
\hspace{-0.3cm}&\leq&\hspace{-0.3cm}\mathbb{E}_{x}\left(\int_{0}^{\tilde{\tau}_{x+h}^{+}}\mathrm{e}^{-qt}\tilde{\gamma}\left(\bar{X}(t)\right)\mathrm{d}\bar{X}(t)
\mathbf{1}_{\{\tilde{\tau}_{x+h}^{+}<\tau^{\tilde{\gamma}}_{\xi}\}}+
\int_{0}^{\tau^{\tilde{\gamma}}_{\xi}}\mathrm{e}^{-qt}\tilde{\gamma}\left(\bar{X}(t)\right)\mathrm{d}\bar{X}(t)\mathbf{1}_{\{\tilde{\tau}_{x+h}^{+}>\tau^{\tilde{\gamma}}_{\xi}\}}\right)\nonumber\\
\hspace{-0.3cm}&&\hspace{-0.3cm}
+\mathbb{E}_{x}\left(\mathrm{e}^{-q\tilde{\tau}_{x+h}^{+}}\mathbf{1}_{\{\tilde{\tau}_{x+h}^{+}<\tau^{\tilde{\gamma}}_{\xi}\}}\right)f(x+h)
+h^{2}.
\end{eqnarray}
Here, $\tilde{\tau}_{x+h}^{+}$ is defined by (2.1) with $\gamma$ replaced by $\tilde{\gamma}$.

The first term on the right hand side of (\ref{2.24}) can be rewritten as
\begin{eqnarray}\label{2.25}
\hspace{-1.3cm}\mathbb{E}_{x}\left(\int_{0}^{\tilde{\tau}_{x+h}^{+}}\mathrm{e}^{-qt}\tilde{\gamma}\left(\bar{X}(t)\right)\mathrm{d}\bar{X}(t)
\mathbf{1}_{\{\tilde{\tau}_{x+h}^{+}<\tau^{\tilde{\gamma}}_{\xi}\}}\right)
\hspace{-0.3cm}&=&\hspace{-0.3cm}\mathbb{E}_{x}\left(\mathrm{e}^{-q\tilde{\tau}_{x+h}^{+}}\int_{0}^{\tilde{\tau}_{x+h}^{+}}\tilde{\gamma}\left(\bar{X}(t)\right)\mathrm{d}\bar{X}(t)
\mathbf{1}_{\{\tilde{\tau}_{x+h}^{+}<\tau^{\tilde{\gamma}}_{\xi}\}}\right)\nonumber\\\hspace{-1.3cm}&&\hspace{-1.3cm}
+\mathbb{E}_{x}\left(q\int_{0}^{\tilde{\tau}_{x+h}^{+}}\mathrm{e}^{-qt}\left(\int_{0}^{t}\tilde{\gamma}\left(\bar{X}(r)\right)\mathrm{d}\bar{X}(r)\right)\mathrm{d}t
\mathbf{1}_{\{\tilde{\tau}_{x+h}^{+}<\tau^{\tilde{\gamma}}_{\xi}\}}\right).
\end{eqnarray}
The cumulative (non-discounted) tax until the
stoping time $\tilde{\tau}_{x+h}^{+}\,(h>0)$ can be re-expressed as
\begin{eqnarray}\label{2.26}
\int_{0}^{\tilde{\tau}_{x+h}^{+}}\tilde{\gamma}\left(\bar{X}(s)\right)\mathrm{d}\bar{X}(s)
\hspace{-0.3cm}&=&\hspace{-0.3cm}\bar{X}\left(\tilde{\tau}_{x+h}^{+}\right)-\left(x+\int_{0}^{\tilde{\tau}_{x+h}^{+}}\left(1-\tilde{\gamma}\left(\bar{X}(s)\right)\right)\mathrm{d}\bar{X}(s)\right)\nonumber\\
\hspace{-0.3cm}
&=&\hspace{-0.3cm}\left(\bar{\tilde{\gamma}}\right)^{-1}(x+h)-U^{\tilde{\gamma}}\left(\tilde{\tau}_{x+h}^{+}\right)=\left(\bar{\tilde{\gamma}}\right)^{-1}(x+h)-x-h
\nonumber\\
\hspace{-0.3cm}
&=&\hspace{-0.3cm}\frac{\tilde{\gamma}(x)}{1-\tilde{\gamma}(x)}h+o(h),
\end{eqnarray}
where we have used the fact that,
\begin{eqnarray}
\lim\limits_{h\downarrow0}\frac{\left(\bar{\tilde{\gamma}}\right)^{-1}(x+h)-x-h}{h}=\left(\left(\bar{\tilde{\gamma}}\right)^{-1}(x)\right)'-1
=\frac{\tilde{\gamma}\left(\left(\bar{\tilde{\gamma}}\right)^{-1}(x)\right)}{1-\tilde{\gamma}\left(\left(\bar{\tilde{\gamma}}\right)^{-1}(x)\right)}
=\frac{\tilde{\gamma}(x)}{1-\tilde{\gamma}(x)},\nonumber
\end{eqnarray}
since $\bar{\tilde{\gamma}}(z)$ is a strictly increasing and continuous function of $z$. By (\ref{2.26}), the first quantity on the right hand side of (\ref{2.25}) can be re-written as,
\begin{eqnarray}\label{2.27}
&&\mathbb{E}_{x}\left(\mathrm{e}^{-q\tilde{\tau}_{x+h}^{+}}\int_{0}^{\tilde{\tau}_{x+h}^{+}}\tilde{\gamma}\left(\bar{X}(t)\right)\mathrm{d}\bar{X}(t)
\mathbf{1}_{\{\tilde{\tau}_{x+h}^{+}<\tau^{\tilde{\gamma}}_{\xi}\}}\right)\nonumber\\
\hspace{-0.3cm}&=&\hspace{-0.3cm}\left(\frac{\tilde{\gamma}(x)}{1-\tilde{\gamma}(x)}h
+o(h)\right)
\mbox{exp}\{-\int_{x}^{x+h}\frac{W^{(q)'}(y-\xi(y))}{W^{(q)}(y-\xi(y))(1-\tilde{\gamma}((\bar{\tilde{\gamma}})^{-1}(y)))}\mathrm{d}y\}
\nonumber\\
\hspace{-0.3cm}&=&\hspace{-0.3cm}\left(\frac{\tilde{\gamma}(x)}{1-\tilde{\gamma}(x)}h
+o(h)\right)\left(1-\frac{W^{(q)'}(x-\xi(x))}{W^{(q)}(x-\xi(x))(1-\tilde{\gamma}(x))}h
+o(h)\right)\nonumber\\
\hspace{-0.3cm}&=&\hspace{-0.3cm}\frac{\tilde{\gamma}(x)}{1-\tilde{\gamma}(x)}h
+o(h)
\end{eqnarray}
and
\begin{eqnarray}\label{2.28}
&&\mathbb{E}_{x}\left(q\int_{0}^{\tilde{\tau}_{x+h}^{+}}\mathrm{e}^{-qt}\left(\int_{0}^{t}\tilde{\gamma}\left(\bar{X}(r)\right)\mathrm{d}\bar{X}(r)\right)\mathrm{d}t
\mathbf{1}_{\{\tilde{\tau}_{x+h}^{+}<\tau^{\tilde{\gamma}}_{\xi}\}}\right)\nonumber\\
\hspace{-0.3cm}&\leq&\hspace{-0.3cm}\left(\left(\bar{\tilde{\gamma}}\right)^{-1}(x+h)-x-h\right)\hspace{0.05cm}
\mathbb{E}_{x}\left((1-\mathrm{e}^{-q\tilde{\tau}_{x+h}^{+}})
\mathbf{1}_{\{\tilde{\tau}_{x+h}^{+}<\tau^{\tilde{\gamma}}_{\xi}\}}\right)
\nonumber\\
\hspace{-0.3cm}&=&\hspace{-0.3cm}\left(\frac{\tilde{\gamma}(x)}{1-\tilde{\gamma}(x)}h
+o(h)\right)\hspace{0.05cm}
\left(\mbox{exp}\{-\int_{x}^{x+h}\frac{W^{(0)'}(y-\xi(y))}{W^{(0)}(y-\xi(y))(1-\tilde{\gamma}((\bar{\tilde{\gamma}})^{-1}(y)))}\mathrm{d}y\}\right.
\nonumber\\
\hspace{-0.3cm}&&\hspace{-0.3cm}-\left.\mbox{exp}\{-\int_{x}^{x+h}\frac{W^{(q)'}(y-\xi(y))}{W^{(q)}(y-\xi(y))(1-\tilde{\gamma}((\bar{\tilde{\gamma}})^{-1}(y)))}\mathrm{d}y\}\right)
\nonumber\\
\hspace{-0.3cm}&=&\hspace{-0.3cm}o(h).
\end{eqnarray}
The second term on the right hand side of (\ref{2.24}) can be calculated as follows,
\begin{eqnarray}\label{2.29}
&&\mathbb{E}_{x}\left(\int_{0}^{\tau^{\tilde{\gamma}}_{\xi}}\mathrm{e}^{-qt}\tilde{\gamma}\left(\bar{X}(t)\right)\mathrm{d}\bar{X}(t)\mathbf{1}_{\{\tilde{\tau}_{x+h}^{+}>\tau^{\tilde{\gamma}}_{\xi}\}}\right)
\nonumber\\
\hspace{-0.3cm}&\leq&\hspace{-0.3cm}
\mathbb{E}_{x}\left(\int_{0}^{\tilde{\tau}_{x+h}^{+}}\mathrm{e}^{-qt}\tilde{\gamma}\left(\bar{X}(t)\right)\mathrm{d}\bar{X}(t)\mathbf{1}_{\{\tilde{\tau}_{x+h}^{+}
>\tau^{\tilde{\gamma}}_{\xi}\}}\right)
\nonumber\\
\hspace{-0.3cm}&=&\hspace{-0.3cm}
\mathbb{E}_{x}\left(\mathrm{e}^{-q\tilde{\tau}_{x+h}^{+}}\int_{0}^{\tilde{\tau}_{x+h}^{+}}\tilde{\gamma}\left(\bar{X}(t)\right)\mathrm{d}\bar{X}(t)
\mathbf{1}_{\{\tilde{\tau}_{x+h}^{+}>\tau^{\tilde{\gamma}}_{\xi}\}}+q\int_{0}^{\tilde{\tau}_{x+h}^{+}}
\mathrm{e}^{-qt}\left(\int_{0}^{t}\tilde{\gamma}\left(\bar{X}(r)\right)\mathrm{d}\bar{X}(t)\right)\mathrm{d}t
\mathbf{1}_{\{\tilde{\tau}_{x+h}^{+}>\tau^{\tilde{\gamma}}_{\xi}\}}\right)
\nonumber\\
\hspace{-0.3cm}&\leq&\hspace{-0.3cm}
\left(\left(\bar{\tilde{\gamma}}\right)^{-1}(x+h)-x-h\right)
\mathbb{E}_{x}\left(\mathrm{e}^{-q\tilde{\tau}_{x+h}^{+}}
\mathbf{1}_{\{\tilde{\tau}_{x+h}^{+}>\tau^{\tilde{\gamma}}_{\xi}\}}+q\int_{0}^{\tilde{\tau}_{x+h}^{+}}\mathrm{e}^{-qt}\mathrm{d}t
\mathbf{1}_{\{\tilde{\tau}_{x+h}^{+}>\tau^{\tilde{\gamma}}_{\xi}\}}\right)
\nonumber\\
\hspace{-0.3cm}&=&\hspace{-0.3cm}
\left(\left(\bar{\tilde{\gamma}}\right)^{-1}(x+h)-x-h\right)
\left(1-\mbox{exp}\{-\int_{x}^{x+h}\frac{W^{(0)'}(y-\xi(y))}{W^{(0)}(y-\xi(y))(1-\tilde{\gamma}((\bar{\tilde{\gamma}})^{-1}(y)))}\mathrm{d}y\}\right)
\nonumber\\\hspace{-0.3cm}&=&\hspace{-0.3cm}o(h).
\end{eqnarray}
Hence, (\ref{2.24}), (\ref{2.25}), (\ref{2.27}), (\ref{2.28}) together with (\ref{2.29}) give rise to
\begin{eqnarray}\label{2.30}
f(x)
\hspace{-0.3cm}&\leq&\hspace{-0.3cm}\frac{\tilde{\gamma}(x)}{1-\tilde{\gamma}(x)}h
+f(x)+f'(x)h-\frac{W^{(q)'}(x-\xi(x))}{W^{(q)}(x-\xi(x))(1-\tilde{\gamma}(x))}f(x)h
+o(h).
\end{eqnarray}
Subtracting both sides of (\ref{2.30}) by $f(x)$ and then collecting the terms of order $h$ we have
\begin{eqnarray}\label{2.31}
0
\hspace{-0.3cm}&\leq&\hspace{-0.3cm}\frac{\tilde{\gamma}(x)}{1-\tilde{\gamma}(x)}
+f'(x)-\frac{W^{(q)'}(x-\xi(x))}{W^{(q)}(x-\xi(x))(1-\tilde{\gamma}(x))}f(x)
\nonumber\\
\hspace{-0.3cm}&\leq&\hspace{-0.3cm}\sup\limits_{\gamma\in\left[\gamma_{1},\gamma_{2}\right]}\left(\frac{\gamma}{1-\gamma}-\frac{1}{1-\gamma} \frac{W^{(q)'}(x-\xi(x))}{W^{(q)}(x-\xi(x))}f(x)+f'(x)\right).
\end{eqnarray}
Finally, combing (\ref{2.23}) and (\ref{2.31}) leads to (\ref{HJB equation}).
\end{proof}

\subsection{Proof of a technical lemma}

\begin{lemma}\label{lem.1}
For $z\in[x,\infty)$, put
$$\xi_{1}(y):=
\int_{x}^{z}\gamma(w)\mathrm{d}w
+\xi(y-\int_{x}^{z}\gamma(w)\mathrm{d}w),\quad y\in[z,\infty),$$
then we have
\begin{eqnarray}\label{r11.1}
\hspace{-0.3cm}&&\hspace{-0.3cm}
\mathbb{E}_{x}\left(\left.\mathrm{e}^{-q \left(\tau_{\overline{\gamma}_{x}(z)+a}^{+}
-\tau_{\overline{\gamma}_{x}(z)}^{+}\right)}
\mathbf{1}_{\{\tau_{\overline{\gamma}_{x}(z)+a}^{+}<\tau^{\gamma}_{\xi}\}}\right|\mathcal{F}_{\tau_{\overline{\gamma}_{x}(z)}^{+}}\right)
\nonumber\\
\hspace{-0.3cm}&=&\hspace{-0.3cm}
\mathbf{1}_{\{\tau_{\overline{\gamma}_{x}(z)}^{+}<\tau^{\gamma}_{\xi}\}}
\exp\left(-\int_{z}^{z+a}\frac{W^{(q)'}(y-\xi_{1}\left(y\right))}
{W^{(q)}(y-\xi_{1}\left(y\right))}\frac{1}{1-\gamma\left(\overline{\gamma}_{z}^{-1}(y)\right)}
\mathrm{d}y\right),\quad z\in[x,\infty),\,a\in[0,\infty).
\end{eqnarray}
\end{lemma}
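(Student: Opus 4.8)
The plan is to deduce \eqref{r11.1} from Proposition \ref{exit problem} by applying the strong Markov property at the first passage time $T:=\tau^{+}_{\overline{\gamma}_{x}(z)}$, after identifying what the controlled system looks like just after $T$. The first step is to pin down the state at time $T$ on the event $\{T<\tau^{\gamma}_{\xi}\}$ (which includes $T<\infty$). Since the running supremum $\bar X$ of a spectrally negative process is continuous, the tax integral $\int_{0}^{\cdot}\gamma(\bar X(s))\,\mathrm{d}\bar X(s)$ is continuous, so $U^{\gamma}=X-(\text{continuous})$ inherits only the downward jumps of $X$ and therefore creeps upward; hence $U^{\gamma}(T)=\overline{\gamma}_{x}(z)$ exactly, and this is a new running maximum of $U^{\gamma}$, i.e. $\bar U^{\gamma}(T)=\overline{\gamma}_{x}(z)$. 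By the correspondence $\{t:U^{\gamma}(t)=\bar U^{\gamma}(t)\}=\{t:X(t)=\bar X(t)\}$ recalled before Proposition \ref{exit problem}, $T$ is also a record time of $X$, and combined with the identity $\bar X(\tau^{+}_{x+h})=\overline{\gamma}_{x}^{-1}(x+h)$ (taken at $x+h=\overline{\gamma}_{x}(z)$) this gives $X(T)=\bar X(T)=z$. Consequently the tax accumulated up to $T$ is $\int_{x}^{z}\gamma(w)\,\mathrm{d}w=z-\overline{\gamma}_{x}(z)=:c$, which is precisely the constant in the definition of $\xi_{1}$, so that $\xi_{1}(y)=c+\xi(y-c)$; note $\xi_{1}$ is continuous with $\xi_{1}(y)<c+(y-c)=y$.

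Next I would shift. On $\{T<\infty\}$ the process $\{X(T+t);t\geq0\}$ is, conditionally on $\mathcal{F}_{T}$, an independent copy of $X$ started from $z$; call it $X^{(z)}$. From the previous paragraph one checks directly that the relocated controlled process $\hat U(t):=U^{\gamma}(T+t)+c$ satisfies $\hat U(t)=X^{(z)}(t)-\int_{0}^{t}\gamma\big(\overline{X^{(z)}}(u)\big)\,\mathrm{d}\overline{X^{(z)}}(u)$, i.e. $\hat U$ is exactly the controlled surplus for $X$ started from $z$ under strategy $\gamma$. I then translate the two relevant events on $\{T<\tau^{\gamma}_{\xi}\}$. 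Because $\overline{\hat U}(t)\geq\hat U(0)=z$ one has $\bar U^{\gamma}(T+t)=\overline{\hat U}(t)-c$, so the draw-down condition $U^{\gamma}(T+t)<\xi(\bar U^{\gamma}(T+t))$ becomes $\hat U(t)<c+\xi(\overline{\hat U}(t)-c)=\xi_{1}(\overline{\hat U}(t))$, and hence $\tau^{\gamma}_{\xi}-T$ is the $\xi_{1}$-draw-down time of $\hat U$; similarly $U^{\gamma}(T+t)>\overline{\gamma}_{x}(z)+a\iff\hat U(t)>z+a$, and since $U^{\gamma}$ has not exceeded $\overline{\gamma}_{x}(z)$ before $T$ we get $\tau^{+}_{\overline{\gamma}_{x}(z)+a}-T=\inf\{t:\hat U(t)>z+a\}$. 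On the complementary event $\{T\geq\tau^{\gamma}_{\xi}\}$ (including $\{T=\infty\}$) the indicator $\mathbf{1}_{\{\tau^{+}_{\overline{\gamma}_{x}(z)+a}<\tau^{\gamma}_{\xi}\}}$ vanishes because $\tau^{+}_{\overline{\gamma}_{x}(z)+a}\geq\tau^{+}_{\overline{\gamma}_{x}(z)}=T\geq\tau^{\gamma}_{\xi}$, which produces the factor $\mathbf{1}_{\{T<\tau^{\gamma}_{\xi}\}}$ on the right-hand side of \eqref{r11.1}.

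Putting this together, the strong Markov property gives
\[
\mathbb{E}_{x}\!\left(\left.\mathrm{e}^{-q(\tau^{+}_{\overline{\gamma}_{x}(z)+a}-T)}\mathbf{1}_{\{\tau^{+}_{\overline{\gamma}_{x}(z)+a}<\tau^{\gamma}_{\xi}\}}\right|\mathcal{F}_{T}\right)=\mathbf{1}_{\{T<\tau^{\gamma}_{\xi}\}}\,\mathbb{E}_{z}\!\left(\mathrm{e}^{-q\,\tau^{+}_{z+a}}\mathbf{1}_{\{\tau^{+}_{z+a}<\tau^{\gamma}_{\xi_{1}}\}}\right),
\]
where on the right-hand side $\tau^{+}_{z+a}$ and $\tau^{\gamma}_{\xi_{1}}$ refer to the controlled process for $X$ started from $z$ with strategy $\gamma$ and draw-down function $\xi_{1}$. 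Finally I would invoke Proposition \ref{exit problem} with initial value $z$, upper level $z+a$ and draw-down function $\xi_{1}$ — its hypotheses hold since $z\in(0,z+a]$, $\gamma$ maps into $[\gamma_{1},\gamma_{2}]$ with $\gamma_{2}<1$, and $\xi_{1}$ is continuous with $\xi_{1}(y)<y$ — which evaluates the last expectation as $\exp\{-\int_{z}^{z+a}\frac{W^{(q)'}(y-\xi_{1}(y))}{W^{(q)}(y-\xi_{1}(y))}\frac{1}{1-\gamma(\overline{\gamma}_{z}^{-1}(y))}\,\mathrm{d}y\}$; this is exactly the right-hand side of \eqref{r11.1}.

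I expect the main obstacle to be the bookkeeping between the three reference levels — the $X$-running-maximum coordinate $z$, the $U^{\gamma}$-value $\overline{\gamma}_{x}(z)=z-c$, and the accumulated tax $c$ — and, in particular, verifying that the time shift at $T$ turns the $\xi$-draw-down of $U^{\gamma}$ into the $\xi_{1}$-draw-down of $\hat U$; once the identifications $X(T)=\bar X(T)=z$ and $\bar U^{\gamma}(T+t)=\overline{\hat U}(t)-c$ are established, the rest is routine. A minor point to mention is that, if one insists on the strict hypothesis $z>0$ in Proposition \ref{exit problem}, the excursion argument proving that proposition is insensitive to starting exactly at $0$, so the boundary case causes no trouble.
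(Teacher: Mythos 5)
Your proposal is correct and follows essentially the same route as the paper: both apply the strong Markov property at $\tau^{+}_{\overline{\gamma}_{x}(z)}$, identify the post-passage up-crossing and draw-down times with $\tau^{+}_{z+a}$ and $\tau^{\gamma}_{\xi_{1}}$ for the controlled process restarted at $z$, and then invoke Proposition \ref{exit problem} with draw-down function $\xi_{1}$. Your relocation $\hat U(t)=U^{\gamma}(T+t)+c$ is just a repackaging of the paper's identity $U_{\gamma}(t)=X(t)-\bar X(t)+\overline{\gamma}_{z}(\bar X(t))-\int_{x}^{z}\gamma(w)\,\mathrm{d}w$, so no substantive difference.
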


\begin{proof}
It is found that
\begin{eqnarray}
U_{\gamma}(t)
\hspace{-0.3cm}&=&\hspace{-0.3cm}
X(t)-\bar{X}(t)
+z+\int_{z}^{\bar{X}(t)}\left(1-\gamma(w)\right)\mathrm{d}w
-\int_{x}^{z}\gamma(w)\mathrm{d}w
\nonumber\\
\hspace{-0.3cm}&=&\hspace{-0.3cm}
X(t)-\bar{X}(t)
+\overline{\gamma}_{z}\left(\bar{X}(t)\right)
-\int_{x}^{z}\gamma(w)\mathrm{d}w,\quad t\in[\tau_{\overline{\gamma}_{x}(z)}^{+},\infty),\,z\in[x,\infty),\nonumber
\end{eqnarray}
with
$\overline{\gamma}_{x}(z)=z
-\int_{x}^{z}\gamma(w)\mathrm{d}w,\,\,z\in[x,\infty)$.
Hence, we have
\begin{eqnarray}
\label{.1}
\tau_{\overline{\gamma}_{x}(z)+a}^{+}
\hspace{-0.3cm}&=&\hspace{-0.3cm}
\inf\{t\geq\tau_{\overline{\gamma}_{x}(z)}^{+}; U_{\gamma}(t)
>
\overline{\gamma}_{x}(z)+a\}
\nonumber\\
\hspace{-0.3cm}&=&\hspace{-0.3cm}
\inf\{t\geq\tau_{\overline{\gamma}_{x}(z)}^{+}; X(t)-\bar{X}(t)
+\overline{\gamma}_{z}\left(\bar{X}(t)\right)
>z+a\}
,
\end{eqnarray}
and, on $\{\tau_{\overline{\gamma}_{x}(z)}^{+}<\tau^{\gamma}_{\xi}\}$
\begin{eqnarray}
\label{.2}
\tau^{\gamma}_{\xi}
\hspace{-0.3cm}&=&\hspace{-0.3cm}
\inf\{t\geq\tau_{\overline{\gamma}_{x}(z)}^{+}; X(t)-\bar{X}(t)
+\overline{\gamma}_{z}\left(\bar{X}(t)\right)
<
\int_{x}^{z}\gamma(w)\mathrm{d}w
+\xi(\overline{\gamma}_{z}(\bar{X}(t))-\int_{x}^{z}\gamma(w)\mathrm{d}w)\}.
\end{eqnarray}
Recalling that
$$\tau_{\overline{\gamma}_{x}(z)}^{+}=\inf\{t\geq0; X(t)>z\}$$ and
$$U_{\gamma}(t)=X(t)-\bar{X}(t)
+\overline{\gamma}_{x}\left(\bar{X}(t)\right),$$
one can find from \eqref{.1} and \eqref{.2}
\begin{eqnarray}
\tau_{\overline{\gamma}_{x}(z)+a}^{+}
-\tau_{\overline{\gamma}_{x}(z)}^{+}=\tau_{z+a}^{+}\small{\circ}\theta_{\tau_{\overline{\gamma}_{x}(z)}^{+}},
\quad \mathbb{P}_{x}-a.s.,\nonumber
\end{eqnarray}
and
\begin{eqnarray}
\tau^{\gamma}_{\xi}-\tau_{\overline{\gamma}_{x}(z)}^{+}
=\tau_{\xi_{1}}^{\gamma}\small{\circ}\theta_{\tau_{\overline{\gamma}_{x}(z)}^{+}},\quad \mathbb{P}_{x}-a.s. \mbox{ on }\{\tau_{\overline{\gamma}_{x}(z)}^{+}<\tau^{\gamma}_{\xi}\},\nonumber
\end{eqnarray}
which combined with the strong Markov property yield
\begin{eqnarray}
\hspace{-0.3cm}&&\hspace{-0.3cm}
\mathbb{E}_{x}\left(\left.\mathrm{e}^{-q \left(\tau_{\overline{\gamma}_{x}(z)+a}^{+}
-\tau_{\overline{\gamma}_{x}(z)}^{+}\right)}
\mathbf{1}_{\{\tau_{\overline{\gamma}_{x}(z)+a}^{+}<\tau^{\gamma}_{\xi}\}}\right|\mathcal{F}_{\tau_{\overline{\gamma}_{x}(z)}^{+}}\right)
\nonumber\\
\hspace{-0.3cm}&=&\hspace{-0.3cm}
\mathbf{1}_{\{\tau_{\overline{\gamma}_{x}(z)}^{+}<\tau^{\gamma}_{\xi}\}}
\mathbb{E}_{z}\left(\mathrm{e}^{-q \tau_{z+a}^{+}}
\mathbf{1}_{\{\tau_{z+a}^{+}<\tau_{\xi_{1}}^{\gamma}\}}\right),
\nonumber
\end{eqnarray}
which together with \eqref{exit problem eqaution} gives \eqref{r11.1}.
\end{proof}

\subsection{Proof of Proposition \ref{verification pro}}

\begin{proof}
Since $\gamma(z)\in[\gamma_{1},\gamma_{2}]$ for all $z\in[x,\infty)$ and $\gamma\in\Gamma$, it can be checked from (\ref{HJB equation}) that for any $\gamma\in\Gamma$ and $z\geq x$,
\begin{eqnarray}
\gamma(z)\leq -\left(\left(1-\gamma(z)\right)f'\left(\overline{\gamma}_{x}(z)\right)-\frac{W^{(q)'}\left(\overline{\gamma}_{x}(z)-\xi\left(\overline{\gamma}_{x}(z)\right)\right)}
{W^{(q)}\left(\overline{\gamma}_{x}(z)-\xi\left(\overline{\gamma}_{x}(z)\right)\right)}f\left(\overline{\gamma}_{x}(z)\right)\right),\nonumber
\end{eqnarray}
with the equality holds true when $\gamma\equiv\gamma^{*}$. Replacing $z$ with $\bar{X}(t)$ in the above inequality we should have, for all $s\geq0$,
\begin{eqnarray}
\label{3.18}
\hspace{-0.2cm}\gamma\hspace{-0.05cm}\left(\bar{X}(s)\right)\hspace{-0.05cm}\leq \hspace{-0.05cm} -\hspace{-0.05cm}\left(\hspace{-0.05cm}\left(1-\gamma\hspace{-0.05cm}\left(\bar{X}(s)\right)\right)f'\hspace{-0.1cm}\left(\overline{\gamma}_{x}\hspace{-0.05cm}\left(\bar{X}(s)\right)\right)
\hspace{-0.05cm}-\hspace{-0.05cm}\frac{W^{(q)'}\left(\overline{\gamma}_{x}\hspace{-0.05cm}\left(\bar{X}(s)\right)-\xi\left(\overline{\gamma}_{x}\hspace{-0.05cm}\left(\bar{X}(s)\right)\right)\right)}
{W^{(q)}\left(\overline{\gamma}_{x}\hspace{-0.05cm}\left(\bar{X}(s)\right)-\xi\left(\overline{\gamma}_{x}\hspace{-0.05cm}\left(\bar{X}(s)\right)\right)\right)}f\hspace{-0.1cm}
\left(\overline{\gamma}_{x}\hspace{-0.05cm}\left(\bar{X}(s)\right)\right)\hspace{-0.05cm}\right),
\end{eqnarray}
with the equality holds true when $\gamma\equiv\gamma^{*}$.

Define $\{\eta(s);s\geq0\}$ as follows,
\begin{eqnarray}\label{3.19}
\eta(s)\hspace{-0.3cm}&=&\hspace{-0.3cm}\lim\limits_{h\downarrow0}\frac{1}{h}\,\mathbb{E}_{x}\left(\left(
\mathrm{e}^{-q\tau_{\overline{\gamma}_{x}\left(\bar{X}(s)\right)+\left(1-\gamma\left(\bar{X}(s)\right)\right)h}^{+}}\mathbf{1}_{\{
\tau_{\overline{\gamma}_{x}\left(\bar{X}(s)\right)+\left(1-\gamma\left(\bar{X}(s)\right)\right)h}^{+}<\tau^{\gamma}_{\xi}\}}
f\left(\overline{\gamma}_{x}\left(\bar{X}(s)\right)+\left(1-\gamma\left(\bar{X}(s)\right)\right)h\right)\right.\right.
\nonumber\\
&&\hspace{1.8cm}\left.\left.\left.-\mathrm{e}^{-q\tau_{_{\overline{\gamma}_{x}\left(\bar{X}(s)\right)}}^{+}}\mathbf{1}_{\{\tau_{_{\overline{\gamma}_{x}\left(\bar{X}(s)\right)}}^{+}
<\tau^{\gamma}_{\xi}\}}
f\left(\overline{\gamma}_{x}\left(\bar{X}(s)\right)\right)
\right)\right|\mathcal{F}_{\tau^{+}_{\overline{\gamma}_{x}\left(\bar{X}(s)\right)}}\right).
\end{eqnarray}
Then, by mimicking the arguments in the proof of the martingale property of the process (4.3) in Gerber and Shiu (2006) or the process (3.5) in Wang and Hu (2012), we can verify that the following compensated process,
\begin{eqnarray}
Z(t)=\mathrm{e}^{-q\tau_{_{\overline{\gamma}_{x}\left(\bar{X}(t)\right)}}^{+}}
\mathbf{1}_{\{\tau_{_{\overline{\gamma}_{x}\left(\bar{X}(t)\right)}}^{+}
<\tau^{\gamma}_{\xi}\}}
f\left(\overline{\gamma}_{x}\left(\bar{X}(t)\right)\right)
-\int_{0}^{\tau_{_{\overline{\gamma}_{x}\left(\bar{X}(t)\right)}}^{+}}\eta(s)\mathrm{d}\bar{X}(s),\qquad t\geq0,
\end{eqnarray}
is a martingale with respect to the filtration $\{\mathcal{F}_{\tau_{\overline{\gamma}_{x}\left(\bar{X}(t)\right)}^{+}};t\geq0\}$.
In addition, the right hand side of (\ref{3.19}) can be rewritten as,
\begin{eqnarray}\label{3.25}
\eta(s)
\hspace{-0.3cm}&=&\hspace{-0.3cm}\lim\limits_{h\downarrow0}\frac{1}{h}\left(\mathbb{E}_{x}\left(
\mathrm{e}^{-q\tau_{_{\overline{\gamma}_{x}\left(\bar{X}(s)\right)+\left(1-\gamma\left(\bar{X}(s)\right)\right)h}}^{+}}\mathbf{1}_{\{
\tau_{_{\overline{\gamma}_{x}\left(\bar{X}(s)\right)+\left(1-\gamma\left(\bar{X}(s)\right)\right)h}}^{+}<\tau^{\gamma}_{\xi}\}}\right.\right.
\nonumber\\
&&
\left.\left.\times
f\left(\overline{\gamma}_{x}\left(\bar{X}(s)\right)+\left(1-\gamma\left(\bar{X}(s)\right)\right)h\right)\right|\mathcal{F}_{\tau_{\overline{\gamma}_{x}\left(\bar{X}(s)\right)}^{+}}\right)
\left.-\mathrm{e}^{-q\tau_{\overline{\gamma}_{x}\left(\bar{X}(s)\right)}^{+}}\mathbf{1}_{\{\tau_{\overline{\gamma}_{x}\left(\bar{X}(s)\right)}^{+}
<\tau^{\gamma}_{\xi}\}}
f\left(\overline{\gamma}_{x}\left(\bar{X}(s)\right)\right)\right)
\nonumber\\
\hspace{-0.3cm}&=&\hspace{-0.3cm}\lim\limits_{h\downarrow0}\frac{1}{h}\left(\mathbb{E}_{x}
\left(\left.\mathrm{e}^{-q\tau_{_{\overline{\gamma}_{x}\left(\bar{X}(s)\right)+\left(1-\gamma\left(\bar{X}(s)\right)\right)h}}^{+}}\mathbf{1}_{\{
\tau_{_{\overline{\gamma}_{x}\left(\bar{X}(s)\right)+\left(1-\gamma\left(\bar{X}(s)\right)\right)h}}^{+}<\tau^{\gamma}_{\xi}\}}
\right|\mathcal{F}_{\tau_{\overline{\gamma}_{x}\left(\bar{X}(s)\right)}^{+}}\right)
\right.
\nonumber\\
&&\times
\left.\left(f\left(\overline{\gamma}_{x}\left(\bar{X}(s)\right)\right)+f'\left(\overline{\gamma}_{x}\left(\bar{X}(s)\right)\right)\left(1-\gamma\left(\bar{X}(s)\right)\right)h+o(h)\right)
\right.
\nonumber\\
&&
\left.
-\mathrm{e}^{-q\tau_{\overline{\gamma}_{x}\left(\bar{X}(s)\right)}^{+}}\mathbf{1}_{\{\tau_{\overline{\gamma}_{x}\left(\bar{X}(s)\right)}^{+}
<\tau^{\gamma}_{\xi}\}}
f\left(\overline{\gamma}_{x}\left(\bar{X}(s)\right)\right)
\right)
\nonumber\\
\hspace{-0.3cm}&=&\hspace{-0.3cm}\lim\limits_{h\downarrow0}\frac{1}{h}\left(\left(1-\frac{(W^{(q)})'\left(\overline{\gamma}_{x}\left(\bar{X}(s)\right)-\xi\left(\overline{\gamma}_{x}\left(\bar{X}(s)\right)\right)\right)}
{W^{(q)}\left(\overline{\gamma}_{x}\left(\bar{X}(s)\right)-\xi\left(\overline{\gamma}_{x}\left(\bar{X}(s)\right)\right)\right)}h+o(h)\right)\right.
\nonumber\\
&&
\left.
\left(f\left(\overline{\gamma}_{x}\left(\bar{X}(s)\right)\right)+f'\left(\overline{\gamma}_{x}\left(\bar{X}(s)\right)\right)\left(1-\gamma\left(\bar{X}(s)\right)\right)h+o(h)\right)\right.
\nonumber\\
&&
\left.-f\left(\overline{\gamma}_{x}\left(\bar{X}(s)\right)\right)
\right) \mathrm{e}^{-q\tau_{\overline{\gamma}_{x}\left(\bar{X}(s)\right)}^{+}}\mathbf{1}_{\{\tau_{\overline{\gamma}_{x}\left(\bar{X}(s)\right)}^{+}
<\tau^{\gamma}_{\xi}\}}\nonumber\\
\hspace{-0.3cm}&=&\hspace{-0.3cm}\left(\left(1-\gamma\left(\bar{X}(s)\right)\right)
f'\left(\overline{\gamma}_{x}\left(\bar{X}(s)\right)\right)-\frac{(W^{(q)})'\left(\overline{\gamma}_{x}\left(\bar{X}(s)\right)-\xi\left(\overline{\gamma}_{x}\left(\bar{X}(s)\right)\right)\right)}
{W^{(q)}\left(\overline{\gamma}_{x}\left(\bar{X}(s)\right)-\xi\left(\overline{\gamma}_{x}
\left(\bar{X}(s)\right)\right)\right)}
f\left(\overline{\gamma}_{x}\left(\bar{X}(s)\right)\right)\right)
\nonumber\\
&&
\times \mathrm{e}^{-q\tau_{\overline{\gamma}_{x}\left(\bar{X}(s)\right)}^{+}}\mathbf{1}_{\{\tau_{\overline{\gamma}_{x}\left(\bar{X}(s)\right)}^{+}
<\tau^{\gamma}_{\xi}\}},
\end{eqnarray}
where in the third equality of (\ref{3.25}) we have used Lemma \ref{lem.1} and the fact
$$z-\xi_{1}(z)=\overline{\gamma}_{x}\left(z\right)-\xi\left(\overline{\gamma}_{x}
\left(z\right)\right).$$
Combining (\ref{3.18}) and (\ref{3.25}) yields,
\begin{eqnarray}\label{less}
\gamma\left(\bar{X}(s)\right)\mathrm{e}^{-q\tau_{\overline{\gamma}_{x}\left(\bar{X}(s)\right)}^{+}}\mathbf{1}_{\{\tau_{\overline{\gamma}_{x}\left(\bar{X}(s)\right)}^{+}
<\tau^{\gamma}_{\xi}\}}\leq-\eta(s),
\end{eqnarray}
with the equality holds true when $\gamma=\gamma^{*}$. It is also seen from (\ref{less}) that $\{-\eta(s);s\geq0\}$ is a nonnegative valued process.
Noting that $\{Z(t);t\geq0\}$ is a martingale with respect to $\{\mathcal{F}_{\tau_{\overline{\gamma}_{x}\left(\bar{X}(t)\right)}^{+}};t\geq0\}$, we have
\begin{eqnarray}\label{3.22}
\mathbb{E}_{x}\left[\mathrm{e}^{-q\tau_{_{\overline{\gamma}_{x}\left(\bar{X}(t)\right)}}^{+}}
\mathbf{1}_{\{\tau_{_{\overline{\gamma}_{x}\left(\bar{X}(t)\right)}}^{+}
<\tau^{\gamma}_{\xi}\}}
f\left(\overline{\gamma}_{x}\left(\bar{X}(t)\right)\right)
-\int_{0}^{\tau_{\overline{\gamma}_{x}\left(\bar{X}(t)\right)}^{+}}\eta(s)\mathrm{d}\bar{X}(s)\right]=f(x).
\end{eqnarray}
Now notice that $\limsup\limits_{t\rightarrow\infty}X(t)=\infty$ almost surely (a.s.) implies $\lim\limits_{t\rightarrow\infty}\bar{X}(t)=\infty$ a.s., which implies further $\lim\limits_{t\rightarrow\infty}\tau_{\overline{\gamma}_{x}\left(\bar{X}(t)\right)}^{+}=\infty$ a.s..
Combining the facts that $f(x)$ is bounded over $[0,\infty)$ (see, Remark \ref{rem1}) and $\lim\limits_{t\rightarrow\infty}\tau_{\overline{\gamma}_{x}\left(\bar{X}(t)\right)}^{+}=\infty$ a.s., using the bounded convergence theorem we get
\begin{eqnarray}
\lim\limits_{t\rightarrow\infty}\mathbb{E}_{x}\left(\mathrm{e}^{-q\tau_{_{\overline{\gamma}_{x}\left(\bar{X}(t)\right)}}^{+}}
\mathbf{1}_{\{\tau_{_{\overline{\gamma}_{x}\left(\bar{X}(t)\right)}}^{+}
<\tau^{\gamma}_{\xi}\}}
f\left(\overline{\gamma}_{x}\left(\bar{X}(t)\right)\right)\right)=0.
\end{eqnarray}
Letting $t\rightarrow\infty$ in (\ref{3.22}) and using the monotone convergence theorem we arrive at
\begin{eqnarray}\label{3.23}
\mathbb{E}_{x}\left[
-\int_{0}^{\infty}\eta(s)\mathrm{d}\bar{X}(s)\right]= f(x).
\end{eqnarray}

Since (\ref{less}) becomes equality when $\gamma=\gamma^{*}$, by (\ref{3.23}) we obtain
\begin{eqnarray}\label{less1}
\mathbb{E}_{x}\left[
\int_{0}^{\infty}\mathrm{e}^{-q\tau_{\bar{\gamma^{*}}\left(\bar{X}(s)\right)}^{+}}
\mathbf{1}_{\{\tau_{\bar{\gamma^{*}}\left(\bar{X}(s)\right)}^{+}
<\tau^{\gamma^{*}}_{\xi}\}}
\gamma^{*}
\left(\Bar{X}(s)\right)\mathrm{d}\bar{X}(s)\right]= f(x).
\end{eqnarray}
Because tax would be paid when and only when $\bar{X}(s)$ genuinely increase (intuitively, when $"\mathrm{d}\bar{X}(s)>0"$ or $s\in\{\tau_{\bar{\gamma^{*}}\left(\bar{X}(s)\right)}^{+};s\geq0\}$), we have
\begin{eqnarray}
\mathbb{E}_{x}\left[
\int_{0}^{\infty}\mathrm{e}^{-qs}
\mathbf{1}_{\{s
<\tau^{\gamma^{*}}_{\xi}\}}
\gamma^{*}
\left(\Bar{X}(s)\right)\mathrm{d}\bar{X}(s)\right]\hspace{-0.05cm}=\hspace{-0.05cm}\mathbb{E}_{x}\left[
\int_{0}^{\infty}\mathrm{e}^{-q\tau_{\bar{\gamma^{*}}\left(\bar{X}(s)\right)}^{+}}
\mathbf{1}_{\{\tau_{\bar{\gamma^{*}}\left(\bar{X}(s)\right)}^{+}
<\tau^{\gamma^{*}}_{\xi}\}}
\gamma^{*}
\left(\Bar{X}(s)\right)\mathrm{d}\bar{X}(s)\right]\hspace{-0.05cm}=\hspace{-0.05cm} f(x),\nonumber
\end{eqnarray}
that is $ f_{\gamma^{*}}(x)= f(x)$.

While, for arbitrary $\gamma\in\Gamma$, by (\ref{less}) and similar arguments one can obtain
\begin{eqnarray}
\mathbb{E}_{x}\left[
\int_{0}^{\infty}\mathrm{e}^{-qs}
\mathbf{1}_{\{s
<\tau^{\gamma}_{\xi}\}}
\gamma
\left(\bar{X}(s)\right)\mathrm{d}\bar{X}(s)\right]\leq f(x),\nonumber
\end{eqnarray}
that is $ f_{\gamma}(x)\leq f(x)$ for all $\gamma\in\Gamma$.
\end{proof}

\section*{Acknowledgements}
The authors are grateful to the anonymous referees for their very helpful comments.

\vspace{-.3in}

\end{document}